\g@addto@macro\bfseries{\boldmath}
\newcommand{\Au}{\mathcal{A}}
\newcommand{\B}{\mathcal{B}}
\newcommand{\T}{\mathcal{T}}
\newcommand{\N}{\mathbb{N}}
\newcommand{\Z}{\mathbb{Z}}
\newcommand{\true}{\mathrm{true}}
\newcommand{\false}{\mathrm{false}}
\newcommand{\divs}{\mathrel{|}}
\newcommand{\lcm}{\mathrm{lcm}}
\newcommand{\defeq}{\overset{\text{def}}{=}}
\renewcommand{\vec}[1]{\boldsymbol{#1}}
\newcommand{\apad}{$\forall$PAD\xspace}
\newcommand{\epad}{$\exists$PAD\xspace}
\newclass{\NTWOEXP}{2NEXP}
\newclass{\coNTWOEXP}{co2NEXP}
\newclass{\NTHREEEXP}{3NEXP}
\newclass{\coNTHREEEXP}{co3NEXP}
\newcommand{\aerpad}{$\forall\exists_R$PAD\xspace}
\newcommand{\earpad}{$\exists\forall_R$PAD\xspace}
\newcommand{\aerpadplus}{$\forall\exists_R$PAD$^+$\xspace}
\newcommand{\reach}{\varphi_{\mathrm{reach}}}
\newcommand{\reachnt}{\varphi_{\mathrm{reach-nt}}}
\newcommand{\cob}{\varphi_{\textnormal{B\"uchi}}}
\newcommand{\Op}{\mathit{Op}}
\newcommand{\cu}{\mathit{CU}}
\newcommand{\pu}{\mathit{PU}}
\newcommand{\zt}{\mathit{ZT}}
\newcommand{\pt}{\mathit{PT}}
\newcommand{\op}{\mathit{op}}
\newcommand{\subata}[1]{\T^{\mathrm{#1}}_{\mathrm{sub}}}
\newcommand{\first}{\mathrm{first}}
\newcommand{\weight}{\mathrm{weight}}
\newcommand{\lef}[1]{\mathrm{left}{#1}}
\newcommand{\righ}[1]{\mathrm{right}{#1}}
\newcommand{\final}{\mathrm{final}}
\newcommand{\simu}{\displaystyle{simulation}}
\newcommand{\val}{\displaystyle{validation}}
\newcommand{\vio}{\displaystyle{violation}}
\definecolor{codegreen}{rgb}{0,0.6,0}
\definecolor{codegray}{rgb}{0.5,0.5,0.5}
\definecolor{codepurple}{rgb}{0.58,0,0.82}
\definecolor{backcolour}{rgb}{0.95,0.95,0.92}
\lstdefinestyle{mystyle}{
  commentstyle=\color{codegreen},
  keywordstyle=\color{magenta},
  numberstyle=\tiny\color{codegray},
  stringstyle=\color{codepurple},
  basicstyle=\ttfamily\footnotesize,
  breakatwhitespace=false,
  breaklines=true,
  keepspaces=true,
  numbers=left,
  numbersep=5pt,
  showspaces=false,
  showstringspaces=false,
  showtabs=false,
  tabsize=2
}
\title{Revisiting Parameter Synthesis for One-Counter Automata}
\author{Guillermo A. P\'erez}{University of Antwerp -- Flanders Make, Belgium}{guillermoalberto.perez@uantwerpen.be}{https://orcid.org/0000-0002-1200-4952}{}
\author{Ritam Raha}{University of Antwerp, Belgium\\
 LaBRI, University of Bordeaux, France}{ritam.raha@uantwerpen.be}{https://orcid.org/0000-0003-1467-1182
}{}
\authorrunning{G. A. P\'erez and R. Raha}
\keywords{Parametric one-counter automata, Reachability, Software Verification} 
\begin{document}

\maketitle

\begin{abstract}
  We study the synthesis problem for one-counter automata with
  parameters. One-counter automata are obtained by extending classical
  finite-state automata with a counter whose value can range over
  non-negative integers and be tested for zero. The updates and tests
  applicable to the counter can further be made parametric by introducing a
  set of integer-valued variables called parameters. The synthesis problem
  for such automata asks whether there exists a valuation of the parameters
  such that all infinite runs of the automaton satisfy some $\omega$-regular
  property. Lechner showed that (the complement of) the problem can be encoded
  in a restricted one-alternation fragment of Presburger arithmetic with
  divisibility. In this work (i) we argue that said fragment, called
  \aerpadplus, is unfortunately undecidable. Nevertheless, by a careful
  re-encoding of the problem into a decidable restriction of \aerpadplus, (ii)
  we prove that the synthesis problem is decidable in general and in
  \NTWOEXP{} for several fixed $\omega$-regular properties. Finally, (iii) we
  give polynomial-space algorithms for the special cases of the problem where
  parameters can only be used in counter tests.
\end{abstract}

\section{Introduction}
\label{intro}

Our interest in one-counter automata (OCA)
with parameters stems from their usefulness
as models of the behaviour of programs whose control flow is determined by \emph{counter variables}.

\lstinputlisting[language=Python]{fun.py}

\noindent
Indeed, the executions of such a program can be over-approximated by its
control-flow graph (CFG)~\cite{allen70}. The CFG can be leveraged to get a
\emph{conservative response} to interesting questions about the program, such as: ``is there a value of $x$ such that the false assertion is
avoided?'' The CFG abstracts away all variables and their values (see
Figure~\ref{fig:cfg-oca}) and this introduces non-determinism. Hence, the
question becomes: ``is it the case that all paths from the initial vertex
avoid the one labelled with $10$?'' In this particular example, the abstraction
is too \emph{coarse} and thus we obtain a false negative.  In such cases, the
abstraction of the program should be refined~\cite{cousot10}. A natural
refinement of the CFG in this context is obtained by tracking the value of
$i$ (cf. program graphs in~\cite{bk08}). The result is an OCA with parameters
such that:
For $x \in \{0,1\}$ it
has no run
that reaches the state labelled with $10$. This is an instance of a
\emph{safety (parameter) synthesis problem} for which the answer is positive.

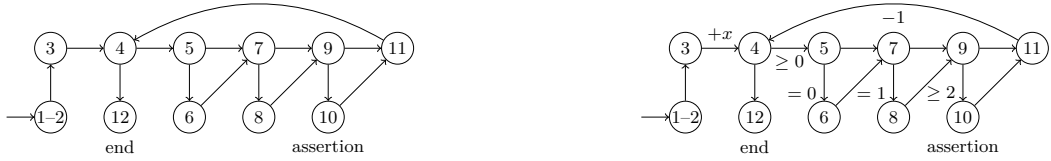
\begin{figure}
  \centering
\scalebox{0.8}{
  \begin{tikzpicture}[node distance=0.6cm,every state/.style={inner
    sep=0,minimum size=15pt},every node/.style={font=\footnotesize}]
    \node[state,initial,initial text={}](q1){1--2};
    \node[state,above= of q1](q3){3};
    \node[state,right= of q3](q4){4};
    \node[state,right= of q4](q5){5};
    \node[state,below= of q5](q6){6};
    \node[state,right= of q5](q7){7};
    \node[state,below= of q7](q8){8};
    \node[state,right= of q7](q9){9};
    \node[state,right= of q9](q11){11};
    \node[state,below= of q9,label=below:{assertion}](q10){10};
    \node[state,below= of q4,label=below:{end}](q12){12};
    \path[->]
      (q1) edge (q3)
      (q3) edge (q4)
      (q4) edge (q5)
      (q5) edge (q6)
      (q5) edge (q7)
      (q6) edge (q7)
      (q7) edge (q8)
      (q7) edge (q9)
      (q8) edge (q9)
      (q9) edge (q10)
      (q9) edge (q11)
      (q4) edge (q12)
      (q10) edge (q11)
      (q11) edge[bend right] (q4)
      ;
  \end{tikzpicture}}
  \hfill
 \scalebox{0.8}{
  \begin{tikzpicture}[node distance=0.6cm,every state/.style={inner
    sep=0,minimum size=15pt},every node/.style={font=\footnotesize}]
    \node[state,initial,initial text={}](q1){1--2};
    \node[state,above= of q1](q3){3};
    \node[state,right= of q3](q4){4};
    \node[state,right= of q4](q5){5};
    \node[state,below= of q5](q6){6};
    \node[state,right= of q5](q7){7};
    \node[state,below= of q7](q8){8};
    \node[state,right= of q7](q9){9};
    \node[state,right= of q9](q11){11};
    \node[state,below= of q9,label=below:{assertion}](q10){10};
    \node[state,below= of q4,label=below:{end}](q12){12};
    \path[->,auto]
      (q1) edge (q3)
      (q3) edge node{$+x$} (q4)
      (q4) edge node[swap]{$\geq 0$} (q5)
      (q5) edge node[swap,pos=0.8]{$=0$} (q6)
      (q5) edge (q7)
      (q6) edge (q7)
      (q7) edge node[swap,pos=0.8]{$=1$}(q8)
      (q7) edge (q9)
      (q8) edge (q9)
      (q9) edge node[swap,pos=0.8]{$\geq 2$}(q10)
      (q9) edge (q11)
      (q4) edge (q12)
      (q10) edge (q11)
      (q11) edge[bend right] node{$-1$} (q4)
      ;
  \end{tikzpicture}}
  \caption{On the left, the CFG with vertex labels corresponding to source
  code line numbers; on the right, the CFG extended by tracking the value of
  $i$}
  \label{fig:cfg-oca}
\end{figure}

In this work, we focus on the parameter synthesis problems for given OCA with
parameters and do not consider the problem of obtaining such an OCA from a
program (cf.~\cite{fkp14}).

Counter automata~\cite{Minsky10} are a classical model that extend
finite-state automata with integer-valued counters. These have been shown to
be useful in modelling complex systems, such as programs with lists and XML
query evaluation algorithms~\cite{BouajjaniBHIMV06,ChiticR04}. Despite their usefulness as a modelling formalism, it is known that two counters suffice for counter automata to become Turing powerful. In particular, this means that most interesting questions about them are undecidable~\cite{Minsky10}. To circumvent this, several restrictions of the model have been studied in the literature, e.g. reversal-bounded counter automata~\cite{IbarraSDBK02}
and automata with a single counter. In this work we focus on an extension of the latter: OCA with parametric updates and parametric tests.

An existential version of the synthesis problems for OCA with parameters was
considered by G\"oller et al.~\cite{ghow10} and Bollig et al.~\cite{bqs19}.
They ask whether there exist a valuation of the parameters and a run of the
automaton which satisfies a given $\omega$-regular property. This is in contrast
to the present problem where we quantify runs universally. (This is
required for the conservative-approximation use case described in the example
above.) We note that, of those two works, only~\cite{ghow10} considers OCA with
parameters allowed in both counter updates and counter tests
while~\cite{bqs19} studies OCA with parametric tests only. In this paper,
unless explicitly stated otherwise, we focus on OCA with parametric tests and
updates like in~\cite{ghow10}.  Further note that the model we study has an
asymmetric set of tests that can be applied to the counter: lower-bound tests,
and equality tests (both parametric and non-parametric). The primary reason for this is
that adding upper-bound tests results in a model for which even the
decidability of the (arguably simpler) existential reachability synthesis
problem is a long-standing open problem~\cite{BundalaO17}. Namely, the
resulting model corresponds to Ibarra's \emph{simple
programs}~\cite{ijtw95}.

In both~\cite{ghow10} and~\cite{bqs19}, the synthesis problems for OCA with
parameters were stated as open. Later, Lechner~\cite{lechner15} gave an
encoding for the complement of the synthesis problems into a one-alternation
fragment of Presburger arithmetic with divisibility (PAD). Her encoding
relies on work by Haase et al.~\cite{hkow09}, which shows how to
compute a linear-arithmetic representation of the reachability relation of OCA
(see~\cite{licwx20} for an implementation). In the same work, Haase et al.
show that the same can be achieved for OCA with parameters using the
divisibility predicate. In~\cite{lechner15}, Lechner goes on to consider the
complexity of (validity of sentences in) the language corresponding to the
one-alternation fragment her encoding targets. An earlier paper~\cite{bi05} by
Bozga and Iosif argues that the fragment is decidable and Lechner carefully
repeats their argument while leveraging bounds on the bitsize of solutions of
existential PAD formulas~\cite{LechnerOW15} to argue the complexity of the
fragment is \coNTWOEXP{}. For $\omega$-regular properties given as a linear
temporal logic (LTL) formula, her encoding is exponential in the formula and
thus it follows that the LTL synthesis problem is decidable and in
\NTHREEEXP{}.

\subparagraph{Problems in the literature.}
\emph{Presburger arithmetic} is the first-order theory of $\langle
\Z,0,1,+,< \rangle$.
\emph{Presburger arithmetic with divisibility (PAD)} is the
extension of PA obtained when we add a binary \emph{divisibility predicate}.
The resulting language 
is undecidable~\cite{robinson49}. In fact, a single quantifier
alternation already allows to encode general multiplication, thus becoming
undecidable~\cite{lipshitz81}. However, the purely existential ($\Sigma_0$)
and purely universal ($\Pi_0$) fragments have been shown to be
decidable~\cite{Beltyukov1980,lipshitz78}.

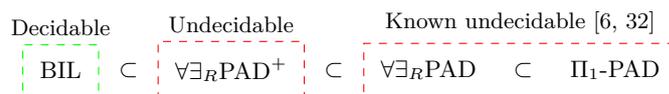
\begin{figure}
  \centering
  \small
  \begin{tikzpicture}
    \node(bil) {BIL};
    \node[right= of bil](aerpadplus) {\aerpadplus};
    \node[right= of aerpadplus](aerpad) {\aerpad};
    \node[right= of aerpad](sigma1) {$\Pi_1$-PAD};
    \node[fit=(aerpad)(sigma1),label=above:{\footnotesize Known
      undecidable~\cite{bi05,robinson49}},dashed,draw=red] (undec) {};
    \node[fit=(aerpadplus),label=above:{\footnotesize Undecidable},dashed,draw=red] (undec2)
      {};
    \node[fit=(bil),label=above:{\footnotesize Decidable},dashed,draw=green] (dec) {};
    \path
      (bil) edge[draw=none] node{$\subset$} (aerpadplus)
      (aerpadplus) edge[draw=none] node{$\subset$} (aerpad)
      (aerpad) edge[draw=none] node{$\subset$} (sigma1)
      ;
  \end{tikzpicture}
  \caption{Syntactical fragments of PAD
  ordered w.r.t. their language (of sentences)}
\end{figure}

The target of Lechner's encoding is \aerpadplus, a subset of all sentences in
the $\Pi_1$-fragment of PAD. Such sentences look as follows:
\(
  \forall \vec{x} \exists \vec{y}
  \bigvee_{i \in I} \bigwedge_{j \in J_i}
  f_j(\vec{x}) \divs g_j(\vec{x},\vec{y}) \land
  \varphi_i(\vec{x},\vec{y})
\)
where $\varphi$ is a quantifier-free PAD formula without divisibility. Note
that all divisibility constraints appear in positive form (hence the $^+$) and
that, within divisibility constraints, the existentially-quantified variables
$y_i$ appear only on the right-hand side (hence the $\exists_\text{R}$).
In~\cite{bi05}, the authors give a quantifier-elimination procedure for
sentences in a further restricted fragment we call the Bozga-Iosif-Lechner
fragment (BIL) that is based on ``symbolically applying'' the generalized
Chinese remainder theorem (CRT)~\cite{sm58}. Their procedure does not 
eliminate all quantifiers but rather yields a sentence in the $\Pi_0$-fragment
of PAD. (Decidability of the BIL language would then follow from the result of
Lipshitz~\cite{lipshitz78}.) Then, they \emph{briefly argue} how the algorithm generalizes to
\aerpadplus. There are two crucial problems in the argument from~\cite{bi05}
that we have summarized here (and which were reproduced in Lechner's work):
First, the quantifier-elimination procedure of Bozga and Iosif does
    not directly work for BIL. Indeed, not all BIL sentences satisfy the
    conditions required for the CRT to be applicable as used in their
    algorithm.
    Second, there is no way to generalize their algorithm to \aerpadplus
    since the language is undecidable. Interestingly, undecidability follows
    directly from other results in~\cite{bi05,lechner15}.   
In Lechner's thesis~\cite{lechner16}, the result from~\cite{bi05}
was stated as being under review. Correspondingly, the decidability of the
synthesis problems for OCA with parameters was only stated conditionally on
\aerpadplus being decidable.

\begin{table}[t]    
\centering
    \small
    \begin{tabular}{|l | c |c|}
    \hline
         & Lower bound & Upper bound \\
         \hline
          LTL& \PSPACE-hard~\cite{sc85} & in \NTHREEEXP{} (Cor.~\ref{cor:ltl-synth})\\
          \hline
         Reachability& \coNP-hard (Prop.~\ref{prop_synthreach_nopara}) &
         \multirow{2}{*}{in \NTWOEXP{} (Thm.~\ref{thm-synthreach})}\\ 
         \cline{1-2}
         Safety, B\"uchi, coB\"uchi & $\NP^\NP$-hard~\cite{lechner15,lechner16} & \\
         \hline
    \end{tabular}
    \caption{Known and new complexity bounds for parameter synthesis problems}

\label{tab:complexity}
\end{table}

\subparagraph*{Our contribution.}
In~\Cref{sec:PADBIL}, using developments from~\cite{bi05,lechner15}, we argue
that \aerpadplus is undecidable (\Cref{thm:undec-aerpadplus}). Then, in the same
section, we ``fix'' the definition of the BIL fragment by adding to it a necessary
constraint so that the quantifier-elimination procedure from~\cite{bi05} works
correctly. For completeness, and to clarify earlier mistakes in the
literature, we recall Lechner's analysis of the algorithm and conclude, just as
she did, that the complexity of BIL is in \coNTWOEXP{}~\cite{lechner15}
(\Cref{thm:bildec}). After some preliminaries regarding OCA with parameters
in~\Cref{sec:prelims}, we re-establish decidability of various synthesis
problems in~\Cref{sec:synthOCA} (\Cref{thm-synthreach} and
\Cref{cor:ltl-synth}, see ~\Cref{tab:complexity} for a summary). To do so, we
follow Lechner's original idea from~\cite{lechner15} to encode them into
\aerpadplus sentences. However, to ensure we obtain a BIL sentence, several parts
of her encoding have to be adapted. Finally, in~\Cref{sec:ocapt} we make small
modifications to the work of Bollig et al.~\cite{bqs19} to give more efficient
algorithms that are applicable when only tests have parameters
(\Cref{thm_synthreach_ocapt} and \Cref{cor:ltl-synth2}).


\section{Presburger Arithmetic with divisibility}
\label{sec:PADBIL}

\emph{Presburger arithmetic (PA)} is the first-order theory over $\langle
\Z,0,1,+,< \rangle$ where $+$ and $<$ are the standard addition and ordering
of integers. \emph{Presburger arithmetic with divisibility (PAD)} is the
extension of PA obtained when we add the binary divisibility predicate
$\divs$, where for all $a,b \in \Z$ we have
\(
    a \divs b \iff \exists c \in \Z: b = ac.
\)
Let $X$ be a finite set of first-order variables. A \emph{linear polynomial}
over $\vec{x} = (x_1, \dots ,x_n) \in X^n$ is given by the syntax rule:
\(
  p(\vec{x}) ::= \sum_{1 \leq i \leq n} a_ix_i+b,
\)
where the $a_i$, $b$ and the first-order variables from
$\vec{x}$ range over $\Z$.  In general, quantifier-free PAD formulas have the
grammar:
\(
    \varphi ::= {} \varphi_1 \land \varphi_2 \:|\: \lnot \varphi \:|\:
       f(\vec{x}) \mathrel{P} g(\vec{x}),
\)
where $P$ can be the order predicate  $<$ or the divisibility predicate
$\divs$, and $f,g$ are linear polynomials. We define the standard Boolean
abbreviation $\varphi_1 \lor \varphi_2 \iff \lnot(\lnot \varphi_1 \land \lnot
\varphi_2)$. Moreover we introduce the abbreviations $f(x) \leq g(x) \iff
f(x) < g(x) + 1$ and $f(x)=g(x) \iff f(x) \leq g(x) \land g(x) \leq f(x)$.

The \emph{size} $|\varphi|$ of a PAD formula
$\varphi$ is defined by structural induction over $|\varphi|$: For a linear
polynomial $p(\vec{x})$ we define $|p(\vec{x})|$ as the number of symbols
required to write it if the coefficients are given in binary. Then, we define
$|\varphi_1 \land \varphi_2| \defeq |\varphi_1| + |\varphi_2| + 1$, $|\lnot
\varphi| \defeq |\exists x. \varphi| \defeq  |\varphi| + 1$, $|f(\vec{x})
\mathrel{P} g(\vec{x})| \defeq |f(\vec{x})| + |g(\vec{x})| + 1$.

\subsection{Allowing one restricted alternation}
We define the language \aerpad of all PAD sentences allowing a universal
quantification over some variables, followed by an existential quantification
over variables that may not appear on the left-hand side of divisibility
constraints. Formally, \aerpad is the set of all PAD sentences of the form:
\(
  \forall x_1 \dots \forall x_n \exists y_1 \dots \exists y_m
  \varphi(\vec{x},\vec{y})
\)
where $\varphi$ is a quantifier-free PAD formula and all its divisibility constraints
are of the form
\(
  f(\vec{x}) \divs g(\vec{x},\vec{y}).
\)

\subparagraph{Positive-divisibility fragment.} We denote by \aerpadplus the
subset of \aerpad sentences $\varphi$ where the negation operator can only be
applied to the order predicate $<$ and the only other Boolean operators
allowed are conjunction and disjunction. In other words, \aerpadplus is a
restricted negation normal form in which divisibility predicates cannot be
negated.
Lechner showed in~\cite{lechner15} that all
\aerpad sentences can be translated into \aerpadplus sentences.
\begin{proposition}[Lechner's trick~\cite{lechner15}]\label{pro:lechner}
    For all $\varphi_1$ in \aerpad one can compute $\varphi_2$ in \aerpadplus such that
    $\varphi_1$ is true if and only if $\varphi_2$ is true.
\end{proposition}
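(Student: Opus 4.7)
The only real obstruction to being in \aerpadplus is the presence of negated divisibility constraints $\neg(f(\vec{x}) \divs g(\vec{x}, \vec{y}))$; every other negation can be eliminated by standard negation normal form manipulations (using $\neg(f < g) \equiv g < f + 1$ to push negations past order atoms, and De Morgan to push them past $\land, \lor$). So my plan is, after first normalising $\varphi_1$ so that all remaining negations sit directly in front of divisibility predicates, to replace each such $\neg(f \divs g)$ by an equivalent positive-divisibility formula at the cost of introducing a fresh existentially quantified ``remainder'' variable, and then to pull these fresh existentials outward to join the original $\exists$-block.

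\textbf{The replacement.} Given a negated divisibility $\neg(f(\vec{x}) \divs g(\vec{x}, \vec{y}))$, I would introduce a fresh variable $r$ and replace the atom by
\[
  \bigl(f = 0 \land (g < 0 \lor 0 < g)\bigr)
  \lor \bigl(0 < f \land 1 \le r \land r < f \land f \divs (g - r)\bigr)
  \lor \bigl(f < 0 \land 1 \le r \land r < -f \land f \divs (g - r)\bigr),
\]
where I have suppressed the argument lists of $f$ and $g$ for readability. The equivalence is the standard observation that for $f \neq 0$, $f$ does not divide $g$ iff $g$ has a nonzero remainder modulo $|f|$, while for $f = 0$ the divisibility holds exactly when $g = 0$. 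Crucially, every new divisibility atom is positive, keeps $f(\vec{x})$ (depending only on the universal variables) on the left, and has the fresh variable $r$ appearing only inside the right-hand polynomial.

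\textbf{Assembling $\varphi_2$.} After doing this for every negated divisibility, I pull all the freshly introduced existentials $r_1, \dots, r_k$ outward and merge them with the original block $\exists y_1 \dots \exists y_m$; this is safe because they are existential and sit under no intervening universal quantifier. The resulting sentence has the form $\forall \vec{x}\, \exists \vec{y}\, \exists \vec{r}\ \psi(\vec{x}, \vec{y}, \vec{r})$, where $\psi$ is built from $<$ (possibly negated), conjunctions, disjunctions, and positive divisibility atoms of the shape $f(\vec{x}) \divs h(\vec{x}, \vec{y}, \vec{r})$. This is precisely the shape of \aerpadplus, and the $\exists_R$ restriction is preserved since no $r_i$ or $y_j$ ever appears on the left of a divisibility.

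\textbf{Where the work lies.} The conceptual core is the remainder trick and is folklore. The only real obstacle is bookkeeping: verifying that every sign case of $f(\vec{x})$ (including $f = 0$) is treated correctly, and that during NNF propagation no negation is silently reintroduced in front of a divisibility predicate. A short structural induction on $\varphi_1$ makes both points explicit and, since each negated divisibility is replaced by a constant-sized formula plus one fresh variable, it simultaneously shows that $|\varphi_2|$ is polynomial in $|\varphi_1|$.
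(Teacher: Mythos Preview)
Your proposal is correct and follows essentially the same approach as the paper: push to negation normal form, then eliminate each negated divisibility via the standard remainder trick, introducing fresh existential variables that only appear on the right of divisibilities. The only cosmetic difference is that the paper uses two fresh variables $x',x''$ with $b = x' + x''$, $a \divs x'$, $0 < x'' < |a|$ (encoding $qa$ and the remainder separately), whereas you use a single remainder variable $r$ with $f \divs (g-r)$, $1 \le r < |f|$; both encodings are equivalent and yield an \aerpadplus sentence.
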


\subsection{Undecidability of both one-alternation fragments}
We will now prove that the language \aerpadplus is undecidable, that is, to determine
whether a given sentence from \aerpadplus is true is an undecidable problem.
\begin{theorem}\label{thm:undec-aerpadplus}
    The language \aerpadplus is undecidable.
\end{theorem}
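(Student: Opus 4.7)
The key observation is that \aerpadplus, despite being syntactically more restricted than \aerpad, is already expressive enough to encode every \aerpad sentence up to truth. Concretely, \Cref{pro:lechner} (Lechner's trick) provides an \emph{effective} translation mapping any \aerpad sentence $\varphi_1$ to an \aerpadplus sentence $\varphi_2$ with $\varphi_1$ true iff $\varphi_2$ true. The plan is to combine this with the known undecidability of \aerpad (credited in the figure to~\cite{bi05,robinson49}) to obtain undecidability of \aerpadplus by a one-line many-one reduction.

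\textbf{Steps.} First, I would recall that \aerpad is undecidable: this is the result attributed to Bozga and Iosif~\cite{bi05} (and ultimately traces back to Robinson's undecidability of the $\Pi_1$-fragment of PAD~\cite{robinson49}). Second, given an instance $\varphi_1$ of the truth problem for \aerpad, I would apply the construction behind \Cref{pro:lechner} to compute, in time polynomial in $|\varphi_1|$, an equivalent \aerpadplus sentence $\varphi_2$. Third, I would observe that this is a computable many-one reduction from the truth problem for \aerpad to the truth problem for \aerpadplus: if the latter were decidable, composing the translation with its decision procedure would decide the former, contradicting undecidability of \aerpad.

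\textbf{What is easy and what is the ``catch''.} Mathematically there is essentially no obstacle: the hard work has been packaged into \Cref{pro:lechner}, and all that is needed is to check that the translation it provides is effective (not merely semantic) and preserves the class of sentences, so that it yields a legitimate reduction. The more delicate conceptual point, which is worth emphasising in the write-up, is that Lechner's trick was originally used to \emph{justify restricting attention to} \aerpadplus on the way to defining BIL; here we reuse the same trick in the opposite direction, to transport undecidability \emph{from} \aerpad \emph{into} \aerpadplus. This is precisely why the decidability results claimed in~\cite{bi05,lechner15} for \aerpadplus cannot be correct, motivating the further syntactic restrictions to BIL that are introduced in the remainder of the section.
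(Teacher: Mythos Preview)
Your proposal is correct and matches the paper's proof essentially line for line: the paper too derives \Cref{thm:undec-aerpadplus} by combining \Cref{pro:lechner} with the undecidability of \aerpad established in~\cite{bi05} (stated as \Cref{pro:undec-earpad}), via exactly the many-one reduction you describe. The only minor deviation is that the paper does not commit to a polynomial-time bound on the translation (and need not, since computability alone suffices for undecidability).
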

From~\Cref{pro:lechner} it follows that arguing \aerpad is undecidable
suffices to prove the theorem. The latter was proven in~\cite{bi05}.
More precisely, they show the complementary language is
undecidable. Their argument consists in defining the least-common-multiple
predicate, the squaring predicate, and subsequently integer multiplication.
Undecidability thus follows from the MRDP theorem~\cite{matijasevic70} which
states that satisfiability for such equations (i.e. Hilbert's 10th problem) is
undecidable. Hence, \Cref{thm:undec-aerpadplus} is a direct consequence of the following result.

\begin{proposition}[From~\cite{bi05}]
    \label{pro:undec-earpad}
    The language \aerpad is undecidable.
\end{proposition}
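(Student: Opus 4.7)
The plan is to reduce from Hilbert's tenth problem. By the MRDP theorem, the solvability of polynomial equations $P(\vec{x}) = 0$ over $\Z$ (equivalently, over $\mathbb{N}$) is undecidable. Negating an \aerpad sentence $\forall \vec{x} \exists \vec{y}\,\varphi$ yields a sentence of the form $\exists \vec{x} \forall \vec{y}\,\neg\varphi$, in which the universally bound $\vec{y}$ still occurs only on the right-hand side of every divisibility atom; since decidability is preserved under complementation, it suffices to exhibit undecidability of this complementary $\exists\forall_R$ language. We do so by defining the graph of integer multiplication within it, and thereby encoding H10.

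We proceed in three stages. First, the graph of $\lcm$ on positive integers is captured by
\[
  z = \lcm(x,y) \iff x \divs z \land y \divs z \land \forall w\,\bigl(x \divs w \land y \divs w \to z \divs w\bigr),
\]
in which the sole universally quantified variable $w$ appears only on the right-hand side of divisibilities. Second, because consecutive positive integers are coprime, $\lcm(n, n+1) = n(n+1)$, and hence $s = n^2$ is expressible as $\exists u\,(u = \lcm(n, n+1) \land s = u - n)$, remaining within the fragment. Third, Robinson's identity $2ab = (a+b)^2 - a^2 - b^2$ expresses multiplication from squaring: by introducing fresh existential variables naming each of $(a+b)^2$, $a^2$, $b^2$ and consolidating the three inner universal $w$-variables into a single outer $\forall$-block, we obtain a definition of $c = a \cdot b$ of shape $\exists^*\forall^*\psi$ compatible with the restriction on divisibility.

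Given any Diophantine polynomial $P(\vec{x})$, we then introduce existential witnesses for each monomial through iterated applications of the multiplication graph, reducing $P(\vec{x}) = 0$ to a linear equation on the witnesses. Flattening all quantifiers into one outer existential block followed by one outer universal block yields a sentence in the complementary language that is true iff $P$ admits an integer zero, completing the reduction. The main obstacle is syntactic bookkeeping: one must verify that, through the composition and the boolean normalisation needed to turn implications into disjunctions, every universally quantified variable continues to occur only on the right of every divisibility atom in which it appears. This is guaranteed by construction, since each such variable is introduced as the $w$ of an lcm template and appears there only in constraints of the form $a \divs w$ for some linear polynomial $a$ independent of $w$.
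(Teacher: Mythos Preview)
Your proposal is correct and follows essentially the same route as the paper: pass to the complementary $\exists\forall_R$ fragment, define $\lcm$ via a universal $w$ occurring only on the right of divisibilities, derive squaring from $\lcm(n,n+1)=n(n+1)$, derive multiplication from squaring via a polarisation identity, and conclude by MRDP. The only differences are cosmetic---the paper uses a biconditional form of the $\lcm$ predicate and the identity $4xy=(x+y)^2-(x-y)^2$ in place of your one-directional formulation and $2ab=(a+b)^2-a^2-b^2$.
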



\subsection{The Bozga-Iosif-Lechner fragment}

The Bozga-Iosif-Lechner (BIL) fragment is the
set of all \aerpadplus sentences of the form:
\[
  \forall x_1 \dots \forall x_n \exists y_1\dots
  \exists y_m
  (\vec{x} < 0) \lor
   \bigvee_{i \in I}
  \bigwedge_{j \in J_i} \left(f_j(\vec{x}) \divs g_j(\vec{x},\vec{y}) \land
  f_j(\vec{x}) > 0\right)
  \land \varphi_i(\vec{x})
  \land \vec{y} \geq \vec{0}
\] 
where $I,J_i \subseteq \mathbb{N}$ are all finite index sets, the $f_j$ and
$g_j$ are linear polynomials and the $\varphi_i(\vec{x})$ are quantifier-free
PA formulas over the variables $\vec{x}$. Note that, compared to
\aerpadplus, BIL sentences only constraint non-negative values of $\vec{x}$. (This technicality is necessary due to our second constraint below.) For readability, henceforth, we omit $(\vec{x} < 0)$ and just assume the $\vec{x}$ take non-negative integer values, i.e. from $\mathbb{N}$. Additionally, it introduces the following three important constraints:
\begin{enumerate}
  \item The $\vec{y}$ variables may only
    appear on the right-hand side of divisibility constraints.
  \item All divisibility constraints $f_j(\vec{x}) \divs g_j(\vec{x},\vec{y})$
    are conjoined with $f_j(\vec{x}) > 0$.
  \item The $\vec{y}$ variables are only allowed to take non-negative values.
\end{enumerate}
It should be clear that the first constraint is necessary to avoid
undecidability. Indeed, if the $\vec{y}$ variables were allowed in the PA
formulas $\varphi_i(\vec{x})$ then we could circumvent the restrictions of
where they appear in divisibilities by using equality constraints. The second
constraint is similar in spirit. Note that if $a = 0$ then $a \divs b$ holds
if and only if $b = 0$ so if the left-hand side of divisibility constraints is
allowed to be $0$ then we can encode PA formulas on $\vec{x}$ and $\vec{y}$ as
before. Also, the latter (which was missing
in~\cite{bi05,lechner15}) will streamline the application of the generalized
Chinese remainder theorem in the algorithm described in the sequel. While the third constraint is not required for decidability, it is convenient to include it for \Cref{sec:encoding}, where we encode instances of the synthesis problem into the BIL fragment. 

In the rest of this section, we recall the decidability proof by Bozga and
Iosif~\cite{bi05} and refine Lechner's analysis~\cite{lechner15} to obtain the
following complexity bound.

\begin{theorem}\label{thm:bildec}
  The BIL-fragment language is decidable
  in \coNTWOEXP.
\end{theorem}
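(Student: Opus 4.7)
The plan is to follow the two-phase strategy of Bozga and Iosif~\cite{bi05}, with Lechner's~\cite{lechner15} complexity analysis refined to accommodate the two additional constraints in our definition of BIL. First, eliminate the existential block $\exists \vec{y}$ from each disjunct, replacing it by an equivalent conjunction of divisibility constraints on $\vec{x}$ alone, using a symbolic version of the generalized Chinese remainder theorem (CRT). Second, decide the resulting universal PAD sentence by Lipshitz's algorithm~\cite{lipshitz78} together with the bitsize bounds on $\Sigma_0$-PAD witnesses from~\cite{LechnerOW15}.

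For the first step, fix a disjunct $i \in I$. Because $\vec{y}$ occurs only on the right-hand side of every divisibility predicate, each polynomial $g_j$ factors as $g_j(\vec{x}, \vec{y}) = \vec{c}_j(\vec{x}) \cdot \vec{y} + h_j(\vec{x})$, so the conjunct $f_j(\vec{x}) \divs g_j(\vec{x}, \vec{y})$ rewrites as the linear congruence $\vec{c}_j(\vec{x}) \cdot \vec{y} \equiv -h_j(\vec{x}) \pmod{f_j(\vec{x})}$. The accompanying clause $f_j(\vec{x}) > 0$ (BIL constraint~2) guarantees all moduli are positive, so the generalized CRT applies uniformly; moreover, since adding any common multiple of $\lcm(f_j(\vec{x}))_{j\in J_i}$ to a solution preserves every congruence, non-negative solvability coincides with integer solvability, so the non-negativity restriction (BIL constraint~3) comes for free. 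Hence, solvability of the system can be encoded as a quantifier-free PAD formula $\psi_i(\vec{x})$ whose divisibility atoms all have shape $p(\vec{x}) \divs q(\vec{x})$, with no occurrence of $\vec{y}$.

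For the second step, we are left with $\forall \vec{x}\, \bigvee_{i \in I} (\psi_i(\vec{x}) \land \varphi_i(\vec{x}))$ in the $\Pi_0$-fragment of PAD, whose negation is in $\Sigma_0$-PAD. By the small-model property for $\Sigma_0$-PAD~\cite{LechnerOW15}, any satisfying assignment has bitsize at most doubly-exponential in the input. A \coNTWOEXP{} algorithm therefore guesses a candidate $\vec{x}$ within this bitsize bound and verifies, for every $i$, that $\psi_i(\vec{x}) \land \varphi_i(\vec{x})$ evaluates to false; all arithmetic is on concrete integers of doubly-exponential bitsize and thus falls within the allowed time budget.

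The principal obstacle is the symbolic CRT step of phase one. The gcds required by the CRT are not, in general, linear polynomials in $\vec{x}$, so the elimination must proceed as a bounded disjunction over cases (e.g.\ indexed by subsets of $J_i$ indicating which divisibilities become ``redundant'' after combining) equipped with explicit divisibility witnesses. The delicate task is to verify (i) that this case analysis produces only divisibility predicates of the prescribed $p(\vec{x}) \divs q(\vec{x})$ shape, (ii) that the positivity clauses $f_j(\vec{x}) > 0$ remain available throughout the elimination (which is precisely where the earlier formulations in~\cite{bi05,lechner15} broke down), and (iii) that the blow-up in formula size is at most singly-exponential, so that the ensuing $\Sigma_0$-PAD decision procedure meets the stated \coNTWOEXP bound.
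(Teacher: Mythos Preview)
Your high-level strategy matches the paper's: eliminate the existential block via the generalized CRT, then decide the resulting universal PAD sentence using the small-model property from~\cite{LechnerOW15}. However, two points in your proposal diverge from the paper and constitute genuine gaps.

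First, your treatment of the $\gcd$ expressions produced by the CRT is different and underspecified. You propose a ``bounded disjunction over cases (e.g.\ indexed by subsets of $J_i$)'' to rewrite each $\gcd$-divisibility as ordinary PAD atoms over $\vec{x}$ alone. It is unclear how to realise this: the gcd of linear polynomials in $\vec{x}$ is not piecewise-linear over any fixed finite partition of $\mathbb{N}^n$, so no finite case split yields atoms of the promised shape $p(\vec{x})\divs q(\vec{x})$. The paper's solution is instead the identity
\[
  \gcd(f_1(\vec{x}),\dots,f_k(\vec{x}))\divs g(\vec{x})
  \iff
  \forall d\,\Bigl(\bigwedge_{i=1}^k d\divs f_i(\vec{x})\Bigr)\to d\divs g(\vec{x}),
\]
which introduces one fresh \emph{universally} quantified variable $d$ and lands directly in $\Pi_0$-PAD (\Cref{lem:bil2univ}). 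This is the device you are missing; you correctly flag the $\gcd$ step as ``the principal obstacle'' but do not actually resolve it.

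Second, your claimed ``at most singly-exponential'' blowup is too optimistic. The CRT is applied iteratively, once per existential variable $y_\ell$, and each application at most squares the current formula; after eliminating all $m$ variables the size is $|\varphi|^{2^m}$, which is doubly exponential in general. This is precisely what the paper establishes in \Cref{lem:bil2univ}, and combined with the $|\psi|^{\mathrm{poly}(n+1)}$ bitsize bound on $\Sigma_0$-PAD witnesses (\Cref{thm:epad-soln}) it still yields \coNTWOEXP{}---but the guessed valuation and the verification are both of doubly-exponential size, not singly-exponential as you suggest.
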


The idea of the proof is as follows: We start from a BIL sentence. First, we
use the \emph{generalized Chinese remainder theorem} (CRT, for short) to
replace all of the existentially quantified variables in it with a single
universally quantified variable. We thus obtain a sentence in \apad (i.e. the $\Pi_0$-fragment of PAD) and argue
that the desired result follows from the bounds on the bitsize of
satisfying assignments for existential PAD formulas
\cite{LechnerOW15}.

\begin{theorem}[Generalized Chinese remainder theorem~\cite{sm58}]
  Let $m_i \in \N_{>0}$, $a_i,r_i \in \mathbb{Z}$ for $1 \leq i
  \leq n$. Then, there exists $x \in \mathbb{Z}$ such that
  \( \bigwedge_{i=1}^n m_i \divs (a_ix - r_i) \)
  if and only if:
  \[
    \bigwedge_{1 \leq i, j \leq n}
    \gcd(a_im_j, a_jm_i) \divs (a_ir_j - a_jr_i)
    \land
    \bigwedge_{i=1}^n \gcd(a_i,m_i) \divs r_i.
  \]
  The solution for $x$ is unique modulo $\lcm(m'_1,\dots,m'_n)$, where $m'_i =
  \nicefrac{m_i}{\gcd(a_i,m_i)}$.
\end{theorem}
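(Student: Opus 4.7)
The plan is to reduce the multi-equation system to the classical Chinese remainder theorem by first handling each equation in isolation, and then translating the resulting pairwise compatibility conditions into the divisibility conditions as stated.

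First I would treat a single congruence $m_i \divs (a_ix - r_i)$. A standard argument using B\'ezout's identity shows that such a congruence admits an integer solution if and only if $\gcd(a_i,m_i) \divs r_i$, which establishes the second conjunction of the RHS of the statement as a necessary condition. Assuming this holds, I would divide through by $d_i \defeq \gcd(a_i,m_i)$ to obtain an equivalent congruence $a'_i x \equiv r'_i \pmod{m'_i}$ with $\gcd(a'_i,m'_i)=1$, where $m'_i = m_i/d_i$ and $a'_i = a_i/d_i$. Since $a'_i$ is invertible modulo $m'_i$, this pins down a unique residue $c_i$ so that the solution set of the $i$-th congruence is exactly $\{x \in \Z : x \equiv c_i \pmod{m'_i}\}$.

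At this point the original system is equivalent to the classical CRT system $x \equiv c_i \pmod{m'_i}$, which is solvable if and only if $\gcd(m'_i,m'_j) \divs (c_i - c_j)$ for all $i,j$, and in that case the solution is unique modulo $\lcm(m'_1,\dots,m'_n)$. The uniqueness claim of the theorem then drops out directly.

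The heart of the proof is the translation of the pairwise compatibility condition $\gcd(m'_i,m'_j) \divs (c_i - c_j)$ into the stated $\gcd(a_im_j, a_jm_i) \divs (a_ir_j - a_jr_i)$. For necessity, if $x$ is a common solution, then $m_i \divs (a_ix - r_i)$ yields $a_jm_i \divs (a_ia_jx - a_jr_i)$, and symmetrically $a_im_j \divs (a_ia_jx - a_ir_j)$; subtracting and taking the gcd of the two moduli delivers the required divisibility. For sufficiency, starting from $a_ic_i \equiv r_i \pmod{m_i}$ and $a_jc_j \equiv r_j \pmod{m_j}$ I can multiply the former by $a_j$ and the latter by $a_i$ and subtract to obtain $\gcd(a_im_j, a_jm_i) \divs a_ia_j(c_i - c_j) - (a_ir_j - a_jr_i)$; under the hypothesis this implies $\gcd(a_im_j, a_jm_i) \divs a_ia_j(c_i - c_j)$. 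The remaining task, which I expect to be the main technical obstacle, is to convert this divisibility into $\gcd(m'_i,m'_j) \divs (c_i - c_j)$. This I would do prime by prime, comparing $p$-adic valuations. Writing $\alpha_i = v_p(a_i)$ and $\mu_i = v_p(m_i)$, the case $\alpha_i \geq \mu_i$ (or its $j$ analogue) makes the $p$-contribution to $\gcd(m'_i, m'_j)$ trivial; in the remaining case $\alpha_i < \mu_i$ and $\alpha_j < \mu_j$ a short calculation gives $v_p(\gcd(a_im_j, a_jm_i)) - v_p(a_ia_j) = \min(\mu_i - \alpha_i, \mu_j - \alpha_j) = v_p(\gcd(m'_i,m'_j))$, which, applied to the divisibility $\gcd(a_im_j, a_jm_i) \divs a_ia_j(c_i - c_j)$, yields the desired conclusion.
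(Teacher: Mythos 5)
The paper does not prove this theorem: it is imported verbatim from the cited reference \cite{sm58} and used as a black box, so there is no in-paper proof to compare against. Judged on its own, your argument is correct and complete. The reduction of each congruence $a_i x \equiv r_i \pmod{m_i}$ to $x \equiv c_i \pmod{m'_i}$ after dividing out $d_i = \gcd(a_i,m_i)$ is sound (and correctly yields the second conjunct as the per-equation solvability condition), the appeal to the non-coprime classical CRT gives both solvability and uniqueness modulo $\lcm(m'_1,\dots,m'_n)$, and the two directions of the translation between $\gcd(m'_i,m'_j) \divs (c_i - c_j)$ and $\gcd(a_im_j,a_jm_i) \divs (a_ir_j - a_jr_i)$ are handled correctly: necessity by subtracting the two scaled congruences at a common solution, sufficiency by first deriving $\gcd(a_im_j,a_jm_i) \divs a_ia_j(c_i-c_j)$ and then closing the gap with the valuation identity $\min(\alpha_i+\mu_j,\alpha_j+\mu_i) - (\alpha_i+\alpha_j) = \min(\mu_i-\alpha_i,\mu_j-\alpha_j) = v_p(\gcd(m'_i,m'_j))$ in the nondegenerate case. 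Your case split $\alpha_i \geq \mu_i$ also silently covers $a_i = 0$ (where $m'_i = 1$ and the pairwise condition is vacuous), so no edge case is left open. The only thing you lean on without proof is the non-coprime CRT itself, which is standard and reasonable to cite.
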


From a BIL sentence, we apply the CRT to the rightmost
existentially quantified variable and get a sentence with one less
existentially quantified variable and with $\gcd$-expressions. Observe that
the second restriction we highlighted for the BIL fragment (the conjunction
with $f_j(\vec{x}) > 0$) is necessary for the correct application of the
CRT.  We will later argue that we can remove the $\gcd$
expressions to obtain a sentence in \apad.

\begin{example}
  Consider the sentence:
  \[
      \forall x \exists y_1 \exists y_2 \bigvee_{i \in I} \bigwedge_{j \in
      J_i} \left(f_j(x) \divs g_j(x,\vec{y}) \land f_j(x) > 0 \right) \land
      \varphi_i(x) \land  \vec{y} \geq \vec{0}.
  \]
  Let $\alpha_j$ denote the coefficient of $y_2$ in $g_j(x,\vec{y})$ and
  $r_j(x,y_1) \defeq -(g_j(x,\vec{y}) - \alpha_j y_2)$.  We can rewrite the
  above sentence as $\forall x \exists y_1 \bigvee_{i\in I} \psi_i(x,y_1)
  \land \varphi'_i(x) \land y_1 \geq 0$ where:
  \begin{align*}
    \psi_i(x,y_1) ={} & \exists
  y_2 \bigwedge_{j \in J_i} (f_j(x) \divs (\alpha_j y_2 - r_j(x,y_1))) \land
  y_2 \geq 0, \text{ and}\\
  \varphi'_i(x) ={} & \varphi_i(x) \land \bigwedge_{j \in J_i}
  f_j(x) > 0.
  \end{align*}
  Applying the CRT, $\psi_i(x,y_1)$ can equivalently
  be written as follows:
  \[
     \bigwedge_{j,k \in J_i}
     \gcd(\alpha_k f_j(x), \alpha_j f_k(x)) \divs (\alpha_j
     r_k(x,y_1) - \alpha_k r_j(x,y_1))
     \land \bigwedge_{j \in J_i} \gcd(\alpha_j, f_j(x)) \divs
     r_j(x,y_1).
  \]
  Note that we have dropped the $y_2 \geq 0$ constraint without loss of
  generality since the CRT states that the set of solutions forms
  an arithmetic progression containing infinitely many positive (and negative)
  integers.  This means the constraint will be trivially satisfied for any
  valuation of $x$ and $y_1$ which satisfies $\psi_i(x,y_1) \land \varphi_i(x) \land
  y_1 \geq 0$ for some $i \in I$. Observe that $y_1$ only
  appears in polynomials on the right-hand side of
  divisibilities.
\end{example}

The process sketched in the example can be applied in general to BIL sentences
sequentially starting from the rightmost quantified $y_i$. At each step, the
size of the formula is at most squared. In what follows, it will be convenient to deal with a single polyadic $\gcd$ instead of nested binary ones. Thus, using associativity of $\gcd$ and
pushing coefficients inwards --- i.e. using the equivalence $a \cdot \gcd(x,y)
\equiv \gcd(ax, ay)$ for $a \in \mathbb{N}$ --- we finally obtain a sentence:
\begin{equation}\label{eqn:bil-gcd}
  \forall x_1 \dots \forall x_n \bigvee_{i \in I} \bigwedge_{j \in L_i}
  (\gcd(\{f'_{j,k}(\vec{x})\}_{k=1}^{K_j}) \divs g'_j(\vec{x}))
  \land \varphi'_i(\vec{x})
\end{equation}
where $|L_i|$, $|K_j|$, and the coefficients may all be doubly-exponential
in the number $m$ of removed variables, due to iterated squaring.

\subparagraph{Eliminating the gcd operator.} In this next step, our goal is to obtain an \apad sentence from Equation \eqref{eqn:bil-gcd}. Recall that \apad ``natively'' allows for negated divisibility constraints. (That is, without having to encode them using Lechner's trick.) Hence, to remove expressions in terms of $\gcd$ from
Equation \eqref{eqn:bil-gcd}, we can use the following identity:
\[
  \gcd(f_1(\vec{x}), \dots, f_n(\vec{x})) \divs g(\vec{x})
  \iff \forall d \left(\bigwedge_{i=1}^n d \divs f_i(\vec{x})\right)
  \rightarrow d \divs g(\vec{x}).
\]
This substitution results in a constant blowup of the size of the sentence.
The above method gives us a sentence $\forall \vec{x} \forall d
\psi(\vec{x},d)$, where $\psi(\vec{x},d)$ is a quantifier-free PAD formula. To
summarize:

\begin{lemma}\label{lem:bil2univ}
  For any BIL sentence $\varphi = \forall x_1 \dots \forall x_n \exists y_1
  \dots \exists y_m \bigvee_{i \in I} \varphi_{i}(\vec{x},\vec{y})$ we can
  construct an \apad sentence $\psi = \forall x_1 \dots \forall x_n \forall d
  \bigvee_{i \in I} \psi_{i}(\vec{x},d)$ such that:
  $\varphi$ is true if and only if $\psi$ is true and
  for all $i \in I$, $|\psi_i| \leq |\varphi_i|^{2^m}$.
  The construction is realizable in time $\mathcal{O}(|\varphi|^{2^m})$.
\end{lemma}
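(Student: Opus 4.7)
The plan is induction on the number $m$ of existentially quantified variables, combining iterated CRT elimination (as sketched in the preceding example) with a final $\gcd$-removal step. The base case $m=0$ is immediate, as $\varphi$ is already universal and we can simply append a vacuous $\forall d$.

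For the inductive step I isolate the rightmost existential variable $y_m$ by rewriting each divisibility atom $f_j(\vec{x}) \divs g_j(\vec{x},\vec{y})$ as $f_j(\vec{x}) \divs (\alpha_j y_m + r_j(\vec{x}, y_1, \ldots, y_{m-1}))$, where $\alpha_j \in \Z$ is the coefficient of $y_m$ in $g_j$. After distributing $\exists y_m$ into the top-level disjunction and applying the generalized CRT within each disjunct to the system $\{f_j \divs (\alpha_j y_m + r_j)\}_{j \in J_i}$ --- applicable precisely because the second BIL constraint $f_j(\vec{x}) > 0$ excludes zero moduli --- the quantifier $\exists y_m$ is replaced by a conjunction of divisibility atoms whose left-hand sides are polyadic $\gcd$'s of linear polynomials in $\vec{x}$ alone and whose right-hand sides are linear polynomials in $\vec{x}, y_1, \ldots, y_{m-1}$. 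Using $a \cdot \gcd(u,v) \equiv \gcd(au,av)$ and associativity of $\gcd$, I flatten nested binary $\gcd$'s into polyadic ones and restore the BIL shape with $m-1$ existential variables (modulo the temporary $\gcd$'s, to be removed at the end). A single elimination step at most squares the size of each disjunct, so after $m$ iterations we obtain an equivalent sentence of shape \eqref{eqn:bil-gcd} whose disjuncts have size at most $|\varphi_i|^{2^m}$.

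To eliminate the remaining $\gcd$'s I apply the identity $\gcd(f_1,\ldots,f_K) \divs g \iff \forall d\,\bigl(\bigwedge_k d \divs f_k \rightarrow d \divs g\bigr)$ atom by atom, each with a freshly renamed bound variable. Using $\forall d\,(A \land B) \equiv (\forall d\,A) \land (\forall d\,B)$ I pull all newly introduced universals outward within each disjunct, and then outpush them to prenex position, arriving at a sentence of the form $\forall \vec{x} \forall \vec{d}\, \bigvee_i \psi_i(\vec{x}, \vec{d})$ in \apad (the scalar $d$ of the statement is shorthand for the vector $\vec{d}$ of fresh universals). Each substitution incurs only a constant multiplicative blowup, so the bound $|\psi_i| \leq |\varphi_i|^{2^m}$ is preserved, and the whole construction can be output in time $\mathcal{O}(|\varphi|^{2^m})$.

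The main obstacle is the careful bookkeeping needed to maintain the BIL shape across CRT iterations: after each step one must check that the new divisibility left-hand sides remain polynomials in $\vec{x}$ only (so that the next $y_{m-1}$ can be treated the same way) and that strict positivity is preserved --- since the $\gcd$ of strictly positive integers is strictly positive, the $f_j > 0$ guarantees propagate forward to the new $\gcd$-moduli. A secondary subtlety is the quantifier reorganization at the end: naively pulling a single $\forall d$ out of a top-level $\bigvee_i$ does not preserve logical equivalence, but using distinct bound variables per substituted atom and then outpushing them as independent outer universals does, at the mild cost of a vector $\vec{d}$ in place of the single $d$ advertised in the statement. With these two invariants handled, the iterated-squaring size analysis is routine.
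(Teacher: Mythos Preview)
Your proposal is correct and follows essentially the same approach as the paper: iterated CRT elimination of the $y$'s from right to left (each step at most squaring each disjunct), flattening nested binary $\gcd$'s into polyadic ones, and then replacing every $\gcd$-divisibility via the universal-$d$ identity. Your treatment of the $\forall$-through-$\bigvee$ step is in fact more careful than the paper's exposition, which states the outcome with a single scalar $d$ without justifying how that universal is pulled across the top-level disjunction; your vector-of-$d$'s workaround (one fresh variable per disjunct suffices, though you use one per atom) is the right way to make this rigorous, and the constant-factor blowup in the $\gcd$-removal step leaves the $|\varphi_i|^{2^m}$ bound intact.
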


To prove \Cref{thm:bildec}, the following small-model results for purely existential PAD formulas and BIL will be useful.

\begin{theorem}[{\cite[Theorem 14]{LechnerOW15}}]\label{thm:epad-soln}
  Let $\varphi(x_1,\dots,x_n)$ be a \epad formula. If $\varphi$ has
  a solution then it has a solution $(a_1,\dots,a_n) \in \Z^n$ with
  the bitsize of each $a_i$ bounded by $|\varphi|^{\mathrm{poly}(n)}$.
\end{theorem}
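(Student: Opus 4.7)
The plan is to reduce a satisfiable existential PAD formula to a system of linear constraints together with congruences modulo integers of bounded bitsize, so that the classical small-model theorem for existential Presburger arithmetic with modular constraints can be invoked.

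First, I would put $\varphi$ into disjunctive normal form and focus on a single conjunction
\[
  L(\vec{x}) \land \bigwedge_{i=1}^{p} f_i(\vec{x}) \divs g_i(\vec{x}) \land \bigwedge_{j=1}^{q} \lnot\bigl(f'_j(\vec{x}) \divs g'_j(\vec{x})\bigr),
\]
where $L$ collects the linear literals: $\varphi$ is satisfiable iff some such disjunct is. I would then Skolemise each divisibility atom by a fresh integer witness, replacing $f_i \divs g_i$ by $g_i = k_i f_i$ and each negated divisibility by $g'_j = k'_j f'_j + r_j \land 0 < r_j \land r_j < |f'_j|$. Branching on the finitely many ways in which some $f_i$ or $f'_j$ can vanish lets me assume all divisors are non-zero at the candidate solution.

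The main obstacle is that Skolemisation yields quadratic equalities such as $g_i = k_i f_i$, taking us outside Presburger arithmetic. My strategy is to guess, for each $f_i$, the bitsize of $f_i(\vec{x})$ at a candidate solution, and to argue via a Siegel-type bound on linear forms that such a guess can be restricted to values of bitsize at most $s^{\mathrm{poly}(n)}$, where $s = |\varphi|$. Fixing such a guess converts each positive divisibility into a congruence $g_i(\vec{x}) \equiv 0 \pmod{m_i}$ with $m_i$ a fixed integer of bounded bitsize, and each negated divisibility into a disjunction of disequations modulo the same bounded $m_i$. The resulting formula is an existential Presburger formula enriched with bounded-modulus constraints, a class for which the classical small-model theorem yields a witness $\vec{x}$ of bitsize $s^{\mathrm{poly}(n)}$.

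Given such an $\vec{x}$, each Skolem witness $k_i = g_i(\vec{x})/f_i(\vec{x})$ (and similarly $k'_j, r_j$) is determined and of bitsize linear in that of $g_i(\vec{x})$, which is polynomial in $s$ and in the bitsize of $\vec{x}$; so the composite assignment satisfies the stated bound. I expect the delicate part to be the bitsize guess for $f_i(\vec{x})$: one cannot freely choose a value of a linear form at an unknown $\vec{x}$, so the argument must be iterative, eliminating one variable at a time while tracking how moduli and coefficient sizes compound. Each round introduces a polynomial blow-up, so after $n$ rounds the bound $s^{\mathrm{poly}(n)}$ is reached, matching the theorem.
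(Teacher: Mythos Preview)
The paper does not prove this theorem; it is imported verbatim from \cite{LechnerOW15} and used as a black box. So there is no ``paper's own proof'' to compare against, and your proposal should be read as an attempt to reprove a result the paper merely cites.

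That said, your sketch has a genuine gap at the step you yourself flag as delicate. You propose to ``guess the bitsize of $f_i(\vec{x})$'' and then treat each positive divisibility as a congruence $g_i(\vec{x}) \equiv 0 \pmod{m_i}$ with a \emph{fixed} modulus $m_i$. But guessing a bitsize does not fix a value, and guessing the value $m_i$ outright adds the linear constraint $f_i(\vec{x}) = m_i$; for this to be sound you must first prove that \emph{some} solution has all $|f_i(\vec{x})|$ of bitsize $|\varphi|^{\mathrm{poly}(n)}$. That is precisely the content of the theorem you are trying to establish, so the argument is circular. A ``Siegel-type bound on linear forms'' does not help here: Siegel's lemma bounds small solutions of homogeneous linear systems, whereas what you need is a bound on the divisor values at a solution of a nonlinear (divisibility) system. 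The iterative variable-elimination you allude to at the end faces the same obstacle, since there is no analogue of Gaussian elimination that removes a variable from $f(\vec{x}) \divs g(\vec{x})$ while keeping the constraint linear-with-fixed-moduli.

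The actual proof in \cite{LechnerOW15} (refining Lipshitz's decidability argument) does establish that solutions can be taken with small divisor values, but via a combinatorial argument on chains of divisors and a careful local--global analysis, not by a reduction to Presburger arithmetic with bounded moduli. If you want to make your outline work, the missing ingredient is exactly that lemma: if $\varphi$ is satisfiable, then it is satisfiable by some $\vec{a}$ with $|f_i(\vec{a})|$ bounded by $|\varphi|^{\mathrm{poly}(n)}$ for every divisibility atom. Once you have that, the rest of your plan goes through.
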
 
\begin{corollary}\label{cor:bil-small-model}
    Let $\forall x_1 \dots \forall x_n \varphi(x_1,\dots,x_n)$ be a BIL sentence. If $\lnot \varphi$ has a solution then it has a solution $(a_1,\dots,a_n) \in \mathbb{Z}^n$ with the bitsize of each $a_i$ bounded by \(|\varphi|^{2^m \mathrm{poly}(n+1)}\).
\end{corollary}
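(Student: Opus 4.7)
The plan is to reduce the question of finding a small counterexample to the BIL sentence to an application of the small-model theorem for existential PAD (\Cref{thm:epad-soln}), using \Cref{lem:bil2univ} as a bridge.

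First, I would apply \Cref{lem:bil2univ} to the BIL sentence $\varphi = \forall x_1 \dots \forall x_n \exists y_1 \dots \exists y_m \bigvee_{i \in I}\varphi_i(\vec{x},\vec{y})$ to obtain an equivalent \apad sentence $\psi = \forall x_1 \dots \forall x_n \forall d \bigvee_{i \in I}\psi_i(\vec{x},d)$ with $|\psi_i| \leq |\varphi_i|^{2^m}$, so that $|\psi| \leq |\varphi|^{2^m} \cdot c$ for some constant $c$. Crucially, $\psi$ has no existential quantifier block at all: the $\exists \vec{y}$ has been replaced by $\forall d$, and the divisibility predicates in $\psi_i$ may now appear either positively or negatively (since \apad, unlike \aerpadplus, natively supports negated divisibilities).

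Next, I would negate $\psi$ syntactically. Because $\psi$ is a $\Pi_1$-PAD sentence whose quantifier-free matrix is a disjunction, its negation is
\[
\lnot \psi \;\equiv\; \exists x_1 \dots \exists x_n \exists d \; \bigwedge_{i\in I} \lnot \psi_i(\vec{x},d),
\]
which is a purely existential PAD formula over $n+1$ variables with size at most $O(|\varphi|^{2^m})$ (pushing negations inside only changes the formula by a constant factor, since the atoms already allow both polarities). By equivalence of $\varphi$ and $\psi$, any witness $(a_1,\dots,a_n)$ for the falsity of $\varphi$ (i.e. a solution of $\lnot \varphi$) extends to a witness $(a_1,\dots,a_n,a_{n+1})$ of $\lnot \psi$, and conversely the first $n$ components of any witness of $\lnot \psi$ make $\varphi$ false.

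Finally, I would invoke \Cref{thm:epad-soln} on the \epad formula $\lnot \psi$, which has $n+1$ free variables and size $O(|\varphi|^{2^m})$. The theorem guarantees, whenever a solution exists, one whose components have bitsize bounded by $(|\varphi|^{2^m})^{\mathrm{poly}(n+1)} = |\varphi|^{2^m\,\mathrm{poly}(n+1)}$. Projecting onto the first $n$ coordinates yields the desired $(a_1,\dots,a_n) \in \mathbb{Z}^n$ certifying $\lnot \varphi$ within the stated bitsize bound. The main subtlety is not computational but conceptual: it is exactly the passage from BIL (where divisibility must be positive) to \apad (where it need not be) in \Cref{lem:bil2univ} that makes the syntactic negation into a legitimate \epad formula, so that \Cref{thm:epad-soln} applies. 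Once that is in place, the rest is bookkeeping of the size bounds.
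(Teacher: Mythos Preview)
Your proposal is correct and follows essentially the same approach as the paper's own proof: apply \Cref{lem:bil2univ} to obtain an equivalent \apad sentence $\forall \vec{x}\,\forall d\,\psi(\vec{x},d)$ of size at most $|\varphi|^{2^m}$, negate to get an \epad formula in $n+1$ variables, and invoke \Cref{thm:epad-soln} to bound the bitsize of a solution by $|\varphi|^{2^m\,\mathrm{poly}(n+1)}$. Your additional remarks about why the negation lands in \epad and about projecting onto the first $n$ coordinates are elaborations the paper leaves implicit, but the argument is the same.
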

\begin{proof}
  Using \Cref{lem:bil2univ}, we
  translate the
  BIL sentence to $\forall x_1 \dots \forall
  x_n \forall d \psi(\vec{x},d)$, where the latter is an \apad sentence.
  Then, using \Cref{thm:epad-soln}, we get that the \epad formula $\lnot
  \psi(\vec{x},d)$ admits a solution if and only if it has one with bitsize bounded by
  $|\psi|^{\mathrm{poly}(n+1)}$. Now, from \Cref{lem:bil2univ} we have that
  $|\psi|$ is bounded by $|\varphi|^{2^m}$. Hence, 
  we get that the bitsize of a
  solution is bounded by:
  \(|\varphi|^{2^m \mathrm{poly}(n+1)}\).
\end{proof}
We are now ready to prove the theorem.
\begin{proof}[Proof of \Cref{thm:bildec}]
  As in the proof of \Cref{cor:bil-small-model}, we translate the BIL sentence to $\forall x_1 \dots \forall
  x_n \forall d \psi(\vec{x},d)$. Note
  that our algorithm thus far runs in time:
  \(\mathcal{O}\left(|\varphi|^{2^m}\right)\). By \Cref{cor:bil-small-model}, if $\lnot \psi(\vec{x},d)$ has a solution then it has
  one encodable in binary using a doubly exponential amount of bits with
  respect to the size of the input BIL sentence. The naive guess-and-check
  decision procedure applied to $\lnot\psi(\vec{x},d)$ gives us a $\coNTWOEXP$
  algorithm for BIL sentences. Indeed, after computing $\psi(\vec{x},d)$ and
  guessing a valuation, checking it satisfies $\lnot\psi$ takes polynomial
  time in the bitsize of the valuation and $|\psi|$, hence doubly
  exponential time in $|\varphi|$.
\end{proof}


\section{Succinct One-Counter Automata with Parameters}
\label{sec:prelims}
We now define OCA with parameters and recall
some basic properties. The concepts and observations we
introduce here are largely taken from~\cite{hkow09} and the exposition in
\cite{lechner16}. 

A \emph{succinct parametric one-counter automaton (SOCAP)} is a tuple
$\Au=(Q,T,\delta, X)$, where $Q$ is a finite set of states, $X$ is a finite set of parameters,
$T \subseteq Q \times Q$ is a finite set of transitions and 
$\delta: T \to \Op$ is a function that associates an operation to every transition. The set $\Op = \cu \uplus \pu \uplus
\zt \uplus \pt$ is the union of: 
\emph{Constant Updates} $\cu \defeq \{+a : a \in \Z\}$,
      \emph{Parametric Updates} $\pu \defeq \{Sx: S \in \{+1,-1\}, x \in
      X\}$,
\emph{Zero Tests} $\zt \defeq \{=0 \}$, and
      \emph{Parametric Tests} $\pt \defeq \{=x, \geq x : x \in X \}$.
%
We denote by ``$=0$'' or ``{$=x$}'' an \emph{equality test} between
the value of the counter and zero or the value of $x$ respectively; by ``$\geq x$'', a \emph{lower-bound
test} between the values of the counter and $x$.
A \emph{valuation} $V : X \to \N$ assigns to every parameter a natural number.
We assume $\cu$ are encoded in binary, hence the S in SOCAP.
We omit ``parametric'' if $X= \emptyset$ and often write $q \xrightarrow{\op}
q'$ to denote $\delta(q,q')=\op$.

A \emph{configuration} is a pair $(q,c)$ where $q \in Q$ and $c \in \N$ is the
\emph{counter value}.  Given a valuation $V : X \to \N$ and a configuration
$(q_0,c_0)$, a \emph{$V$-run from $(q_0,c_0)$} is a sequence $\rho=
(q_0,c_0)(q_1,c_1)\dots$ such that for all $i \geq 0$ the following hold: $q_i
\xrightarrow{op_{i+1}} q_{i+1}$; $c_i = 0$, $c_i = V(x)$, and $c_i \geq V(x)$,
if $\delta(q_i,q_{i+1})$ is ``$=0$'', ``$=x$'', and ``$\geq x$'',
respectively; and $c_{i+1}$ is obtained from $c_i$ based on the counter
operations. That is, $c_{i+1}$ is $c_i$ if $\delta(q_i,q_{i+1}) \in \left(\zt
\cup \pt\right)$; $c_i + a$ if $\delta(q_i,q_{i+1}) = +a$; $c_i + S \cdot V(x)$ if
$\delta(q_i,q_{i+1}) = Sx$.
We say $\rho$ \emph{reaches} a state $q_f \in Q$ if there exists $j \in
\mathbb{N}$, such that $q_j= q_f$. Also, $\rho$ reaches or visits a set of
states $F \subseteq Q$ iff $\rho$ reaches a state $q_f \in F$. If $V$ is
clear from the context we just write run instead of $V$-run.

The \emph{underlying (directed)
graph} of $\Au$ is $G_\Au=(Q,T)$. A $V$-run $\rho = (q_0,c_0) (q_1,c_1) \dots$
in $\Au$ \emph{induces} a path $\pi = q_0 q_1 \dots$ in $G_\Au$.  We assign
weights to $G_\Au$ as follows: For $t \in T$, $\weight(t)$ is $0$ if
$\delta(t) \in \zt \cup \pt$; $a$ if $\delta(t) = +a$; and $S\cdot V(x)$ if
$\delta(t) = Sx$.  We extend the $\weight$ function to finite paths
in the
natural way. Namely, 
we set $\weight(q_0 \dots q_n) \defeq \sum_{i=0}^{n-1}
\weight(q_i,q_{i+1})$. 

\subparagraph{Synthesis problems.}
\emph{The synthesis problem} asks, given a SOCAP $\Au$, a state $q$ and an
$\omega$-regular property $p$, whether there exists a valuation $V$ such that 
all infinite $V$-runs from $(q,0)$ satisfy $p$.
We focus on the following classes of $\omega$-regular properties. Given a set
of target states $F \subseteq Q$ and an infinite run $\rho= (q_0, c_0) (q_1,
c_1)\dots$ we say $\rho$ satisfies:
\begin{itemize}
    \item
the \emph{reachability} condition if $q_i \in F$ for some $i \in \N$;
    \item
the \emph{B\"uchi} condition if $q_i \in F$ for infinitely many $i \in \N$;
    \item
the \emph{coB\"uchi} condition if $q_i \in F$ for finitely many $i \in \N$ only;
    \item
the \emph{safety} condition if $q_i \not\in F$ for all $i \in \N$;
    \item
the \emph{linear temporal logic} (LTL) formula $\varphi$ over a set of
atomic propositions $P$ --- and with respect to a labelling function $f: Q \to
2^P$ --- if $f(q_0) f(q_1) \dots \models \varphi$.\footnote{See,
e.g.,~\cite{bk08} for the classical semantics of LTL.}
\end{itemize}

We will decompose the synthesis problems into reachability sub-problems. It
will thus be useful to recall the following connection between reachability
(witnesses) and graph flows.


\subparagraph*{Flows.} 
For a directed graph $G = (V,E)$, we denote the set of immediate successors of
$v \in V$ by $vE\coloneqq \{w \in V \mid (v,w) \in E\}$ and the immediate
predecessors of $v$ by $Ev$, defined analogously. 
An \emph{$s$--$t$ flow} is a mapping $f : E \to \mathbb{N}$ that satisfies
flow conservation:
\(
    \forall v \in V \setminus \{s,t\} :
    \sum_{u \in Ev} f(u,v) = \sum_{u \in vE} f(v,u).
\)
That is, the total incoming flow equals the total outgoing flow for all but
the source and the target vertices. We  then define the \emph{value} of a flow
$f$ as:
\(
    |f| \defeq \sum_{v \in sE} f(s,v) - \sum_{u \in Es} f(u,s) .
\)
We denote by $\emph{support}(f)$ the set $\{e \in E \mid f(e) > 0\}$ of edges
with non-zero flow.  A cycle in a flow $f$ is a cycle in the sub-graph induced
by $\mathrm{support}(f)$. For weighted graphs, we define $\weight(f) \defeq
\sum_{e \in E} f(e) \weight(e)$. 

\subparagraph*{Path flows.}
Consider a path $ \pi= v_0 v_1 \dots$ in $G$. We denote by $f_\pi$ its Parikh
image, i.e. $f_\pi$ maps each edge $e$ to the number of times $e$ occurs in
$\pi$. A flow $f$ is called a \emph{path flow} if there exists a path $\pi$
such that $f = f_\pi$.
Finally, we observe that an $s$--$t$ path flow $f$ in $G$ induces a $t$--$s$ path flow
$f'$ with $f'(u,v) = f(v,u)$, for all $(u,v) \in E$, in the skew transpose of $G$.

\section{Encoding Synthesis Problems into the BIL Fragment}\label{sec:encoding}

\label{sec:synthOCA}
In this section, we prove that all our synthesis problems
are decidable.
More precisely, we establish the following complexity upper bounds.

\begin{theorem} \label{thm-synthreach}
    The reachability, B\"uchi, coB\"uchi, and safety synthesis
    problems for succinct one-counter automata with parameters
    are all decidable in \NTWOEXP.
\end{theorem}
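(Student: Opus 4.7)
The plan is to encode the \emph{complement} of each synthesis problem as a sentence in the BIL fragment and then invoke \Cref{thm:bildec}. Since BIL is decidable in \coNTWOEXP{} and the negation of an ``$\exists V\,\forall\rho$'' property has the form ``$\forall V\,\exists\rho$'', this matches the quantifier prefix $\forall\vec{x}\exists\vec{y}$ of BIL and will yield a \NTWOEXP{} upper bound for the synthesis problem itself. I would target all four properties via a single reachability-style encoding, together with a lasso decomposition for the infinite-run cases.

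I would start with the reachability case. The complement asks whether, for every valuation $V$ of the parameters $\vec x$, there exists a finite $V$-run from $(\qinit,0)$ that reaches some $q_f \in F$. Building on the Haase--Kreutzer--Ouaknine--Worrell characterization reused in Lechner's work, I would express the existence of such a run by (i) guessing a sequence of ``modes'' separated by the zero/parametric tests fired along the run, (ii) for each segment between two consecutive tests, using a path-flow decomposition of the underlying graph $G_\Au$ together with cycle-multiplicity variables $\vec y$, and (iii) writing counter values at each test as an affine expression in $\vec x$ and $\vec y$ (namely, base-path weight plus $\vec y$-weighted cycle weights). Flow conservation and non-negativity of the counter become linear constraints on $\vec x,\vec y$; each zero test or parametric equality test yields a congruence that I would rewrite in the form $f(\vec{x}) \divs g(\vec{x},\vec{y})$, with $\vec y$ occurring only on the right-hand side.

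The main obstacle is ensuring the resulting sentence actually lies in BIL and not only in \aerpadplus{} (which by \Cref{thm:undec-aerpadplus} is undecidable). Concretely I must guarantee: (a) every divisibility has a strictly positive left-hand side that depends only on $\vec x$; (b) the existentially quantified $\vec y$ are non-negative; and (c) there are no hidden $\vec y$-dependencies on any divisor. For (a) I would do a case split on the sign of each would-be divisor $f(\vec x)$: the case $f(\vec x)>0$ is conjoined inside the divisibility as BIL requires, while $f(\vec x)=0$ collapses $f(\vec x)\divs g$ to the linear constraint $g=0$, and $f(\vec x)<0$ is handled by substituting $-f(\vec x)$. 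Each such case sits in its own disjunct of the outer $\bigvee_{i\in I}$. For (b) the natural witnesses (cycle counts, transition multiplicities, segment indices) are already in $\mathbb N$. The delicate point for (c), which is where the original encoding of Lechner went astray, is that the \emph{moduli} arising from cycle re-pumping must be expressed solely in terms of parameters; I would achieve this by fixing the cyclic structure (set of cycles used) as part of the outer disjunction and quantifying only their multiplicities existentially, so that the modulus appearing on the LHS is a fixed polynomial in $\vec x$.

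Once reachability is handled, the other properties follow by standard reductions packaged inside the same BIL framework. Safety is the literal complement of reachability, so its complement synthesis problem is a reachability instance and reuses the construction verbatim. For B\"uchi and coB\"uchi, the complement property on infinite runs is witnessed by a \emph{lasso}: a finite stem followed by a repeated simple cycle, where the cycle intersects $F$ (for the coB\"uchi complement) or avoids $F$ (for the B\"uchi complement) appropriately. I would enumerate the basepoint of the loop as part of the outer disjunction and then write, for each choice, two reachability conditions --- reach the basepoint from $(\qinit,0)$, then return to it with a compatible counter value --- each encoded as above. Since $\Au$ has polynomially many states and transitions and constant updates are in binary, the resulting BIL sentence has size at most exponential in $|\Au|$ (the outer disjunction ranges over exponentially many cyclic structures and basepoints, and the polynomials have exponentially bounded coefficients). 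Applying \Cref{thm:bildec} to this sentence and taking complements gives the claimed \NTWOEXP{} bound for all four problems.
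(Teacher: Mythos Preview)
Your overall strategy matches the paper's, but there is a central technical gap. You write that ``flow conservation and non-negativity of the counter become linear constraints on $\vec{x},\vec{y}$'', and this is precisely what BIL forbids: the existential variables $\vec{y}$ may appear \emph{only} on the right-hand side of divisibilities, while the PA conjuncts $\varphi_i$ must be over $\vec{x}$ alone. Case-splitting on signs of divisors and fixing the cyclic structure does not help here---even with the set of cycles fixed, the counter value at each test point is a linear combination of the cycle multiplicities $y_i$, and the constraint that this value be non-negative (or equal to a target) is a PA (in)equality on $\vec{y}$. The moment you write any such constraint you are back in \aerpadplus, which is undecidable by \Cref{thm:undec-aerpadplus}. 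Your diagnosis of where Lechner's encoding ``went astray'' is also off: her divisors already depended only on $\vec{x}$; the problem was exactly the residual PA constraints on $\vec{y}$. The paper's fix, which your proposal is missing, is the \emph{affine change of variables} (\Cref{thm:gathking}): one first writes the encoding with the PA constraints $\psi_i(\vec{y},a,b)$ present (this is \Cref{prop_reach_pbil}), then parameterizes the solution set of each $\psi_i$ as $\{\vec{E}^{(i)}\vec{z}+\vec{u}^{(i)} : \vec{z}\geq \vec{0}\}$ and \emph{substitutes} these linear polynomials for $\vec{y},a,b$ throughout the divisibilities, making the $\psi_i$ vanish. Only after this substitution is the sentence in BIL.

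Two smaller points. First, the complement of reachability synthesis is not ``for all $V$ there is a finite run reaching $F$'' but ``for all $V$ there is an infinite run \emph{avoiding} $F$''---what you wrote is the complement of safety. The paper sidesteps this asymmetry by reducing reachability, safety, and B\"uchi to coB\"uchi via \Cref{lem_synth_red} and encoding only the coB\"uchi complement. Second, applying \Cref{thm:bildec} as a black box to an exponential-size BIL sentence would yield only \NTHREEEXP; to get \NTWOEXP the paper tracks separately that the number of existential variables and the bitsize of all constants remain \emph{polynomial} in $|\Au|$ (despite the exponential formula size), and then invokes \Cref{cor:bil-small-model} directly rather than \Cref{thm:bildec}.
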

The idea is as follows: we focus on the coB\"uchi synthesis problem and reduce
its complement to the truth value of a BIL sentence. To do so, we follow
Lechner's encoding of the complement of the B\"uchi synthesis problem into
\aerpadplus \cite{lechner15}. The encoding heavily
relies on an encoding for (existential) reachability from~\cite{hkow09}. We
take extra care to obtain a BIL sentence instead of an \aerpadplus one as
Lechner originally does.

It can be shown that the other synthesis problems reduce to the coB\"uchi one in polynomial time. The
corresponding bounds thus follow from the one for coB\"uchi synthesis. The proof of the following lemma is given in the long version of the paper.

\begin{lemma} \label{lem_synth_red}
  The reachability, safety and the B\"uchi synthesis problems can be reduced
  to the coB\"uchi synthesis problem in polynomial time.
\end{lemma}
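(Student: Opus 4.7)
The plan is to realise each of the three reductions as a polynomial-size product construction on the input SOCAP $\Au=(Q,T,\delta,X)$ with target set $F\subseteq Q$, producing a SOCAP $\Au'$ with state space $Q\times\{0,1\}$ and target set $F'$ such that, for every valuation $V$, the specified property on $(\Au,F)$ holds under $V$ if and only if the coB\"uchi property on $(\Au',F')$ holds under $V$. Existentially quantifying over $V$ on both sides then yields the claimed reductions on synthesis instances.

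For \emph{safety} and \emph{reachability}, I would augment each state with a latching Boolean flag. Transitions $(q,b)\to(q',b')$ of $\Au'$ carry the same operation as the $\Au$-transition $q\to q'$, with $b' = 1$ whenever $q' \in F$ and $b' = b$ otherwise. No new counter operations are introduced, so $\Au'$ remains a SOCAP, and infinite $V$-runs of $\Au'$ are in bijection with infinite $V$-runs of $\Au$. Since $b$ is monotone, it equals $1$ from some step onwards if and only if $F$ is visited at least once along the projected run. For safety I take $F'=Q\times\{1\}$: coB\"uchi holds iff $b$ never becomes $1$, iff no infinite run of $\Au$ visits $F$. For reachability I take $F'=Q\times\{0\}$: coB\"uchi holds iff $b$ eventually becomes $1$ on every infinite run, iff every infinite run of $\Au$ visits $F$.

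For \emph{B\"uchi}, I would use a two-mode gadget instead. In mode~$0$, $\Au'$ mimics $\Au$; in mode~$1$, all $F$-states are made dead-ends while non-$F$-states keep their $\Au$-transitions. From each $(q,0)$ there is an additional nondeterministic $+0$ transition to $(q,1)$. Taking $F'=Q\times\{1\}$, every infinite $V$-run of $\Au'$ either stays in mode~$0$ forever---corresponding bijectively to an infinite $V$-run of $\Au$ and visiting $F'$ zero times---or at some step switches to mode~$1$ and continues forever there, necessarily avoiding $F$. Hence coB\"uchi$(\Au',F')$ fails under $V$ iff some configuration $(q,c)$ that is $V$-reachable in $\Au$ admits an infinite $V$-run avoiding $F$, which is exactly the negation of B\"uchi$(\Au,F)$ under $V$.

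The main obstacle to verify is the B\"uchi direction: the mode-$1$ fragment must capture tails of $\Au$-runs avoiding $F$ exactly, and the switch must not introduce spurious infinite runs. This is ensured by killing $F$-states in mode~$1$ and by using the inert $+0$ operation on the switch transition, so that neither the counter value nor any test can block the switch or the mode-$1$ evolution. All three constructions have size $O(|\Au|)$ and can be built in polynomial time, giving the claimed reductions.
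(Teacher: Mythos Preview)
Your proposal is correct and follows essentially the same idea as the paper: take two ``copies'' of $\Au$ indexed by a Boolean, and use the bit to record whether $F$ has been (or will ever again be) visited. The only notable difference is organisational: the paper first reduces safety and reachability to \emph{B\"uchi} synthesis (forcing a one-way jump from copy~1 to copy~2 whenever an $F$-state is taken, with $F'=Q_1$ resp.\ $F'=Q_2$), and then reduces B\"uchi to coB\"uchi via a second two-copy construction in which $F$-states are removed from the second copy and a nondeterministic switch piggybacks on existing transitions. You instead go directly to coB\"uchi for safety and reachability using a monotone latching flag, and for B\"uchi you add a fresh $+0$ switch transition rather than piggybacking. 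Both routes yield polynomial-size SOCAP and preserve the per-valuation correspondence; your direct flag argument for safety/reachability is arguably cleaner, while the paper's chained reduction keeps all three cases uniform.
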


Now the cornerstone of our reduction from the complement of the coB\"uchi
synthesis problem to the truth value of a BIL sentence is an encoding of
\emph{reachability certificates} into \aerpad formulas which are ``almost'' in
BIL.
%
In the following subsections we will focus on a SOCAP $\Au = (Q,T,\delta,X)$
with $X = \{x_1,\dots,x_n\}$ and often write $\vec{x}$ for $(x_1,\dots,x_n)$.
We will prove that the existence of a $V$-run from $(q,c)$ to $(q',c')$ can be
reduced to the satisfiability problem for such a formula.
\begin{proposition} \label{prop_reach_pbil}
    Given states $q,q'$, one can construct in deterministic exponential time in $|\Au|$ a
    PAD formula:
    \(
    \reach^{(q,q')}(\vec{x},a,b) = \exists \vec{y} \bigvee_{i \in I}
      \varphi_i(\vec{x},\vec{y}) \land \psi_i(\vec{y},a,b) \land
      \vec{y} \geq \vec{0}
    \)
    such that $\forall \vec{x} \exists \vec{y} \bigvee_{i \in I}
    \varphi_i(\vec{x},\vec{y}) \land \vec{y} \geq 0$ is a BIL
    sentence, the $\psi_i(\vec{y},a,b)$ are quantifier-free PA formulas, and
    additionally:
    \begin{itemize}
      \item
    a valuation $V$ of $X\cup\{a,b\}$ satisfies
        $\reach^{(q,q')}$ iff there is a $V$-run from
        $(q,V(a))$ to $(q',V(b))$;
       \item
        the bitsize of constants in $\reach^{(q,q')}$ is of polynomial
         size in $|\Au|$;
       \item
         $|\reach^{(q,q')}|$ is at most exponential with respect to
         $|\Au|$; and 
       \item
         the number of $\vec{y}$ variables is polynomial with respect to $|\Au|$.
    \end{itemize}
\end{proposition}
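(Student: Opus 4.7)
The plan is to follow the reachability encoding of parametric OCA from~\cite{hkow09}, carefully splitting its constraints into a parameter-dependent part $\varphi_i$ (so that its universal closure over $\vec{x}$ is a BIL sentence) and a parameter-free part $\psi_i(\vec{y},a,b)$. First, I would disjunctively guess the relative order of the $n+1$ values $0,x_1,\dots,x_n$ on the integer line; there are $(n+1)! = 2^{\mathcal{O}(n\log n)}$ such orderings, giving an exponential number of outer disjuncts. Under a fixed ordering, the tests in $\zt\cup\pt$ partition the counter axis into at most $n+2$ \emph{levels}, and between consecutive levels the dynamics of $\Au$ is test-free, i.e.\ finite-state with additive and parametric updates only. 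A $V$-run from $(q,V(a))$ to $(q',V(b))$ then admits a combinatorial description as a sequence of test transitions in $T$ interleaved with test-free subruns confined to a single level; enumerating these skeletons contributes at most exponentially many further disjuncts.

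For each skeleton I build the corresponding $\varphi_i\land\psi_i$ as follows. For every test-free subrun I introduce a block of non-negative $\vec{y}$-variables, one per transition of $\Au$, representing the Parikh image of the induced path in $G_\Au$. Into $\psi_i(\vec{y},a,b)$ I put the parameter-free constraints: flow conservation, source/target correctness, graph-connectedness of the flow support (via the classical spanning-tree encoding), and the telescoped identities that express each subrun's counter endpoints as a function of $a,b$ and the accumulated subrun weights. Into $\varphi_i(\vec{x},\vec{y})$ I put the parameter-dependent constraints: following~\cite{hkow09}, the requirement that the counter never leaves the prescribed level during a test-free subrun is expressible as divisibilities $f(\vec{x})\divs g(\vec{x},\vec{y})$, where $f$ is a linear polynomial in $\vec{x}$ giving the width of the level and $g$ records the accumulated cycle weight in $\vec{y}$.

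The main obstacle is shaping $\varphi_i$ to meet all three BIL constraints simultaneously: (i)~parameters $\vec{x}$ only on the left of divisibilities and $\vec{y}$ only on the right, (ii)~each divisibility conjoined with $f_j(\vec{x})>0$, and (iii)~no PA constraint mixing $\vec{x}$ and $\vec{y}$. For (i), the cycle-sum polynomials in~\cite{hkow09} have to be rewritten by introducing fresh $\vec{y}$-variables so that the parameter-bearing factors land on the left, after which Lechner's trick (\Cref{pro:lechner}) brings the result into positive-divisibility form. For (ii), I would split on whether each width polynomial $f_j(\vec{x})$ can vanish, dedicating a separate disjunct to the corresponding degenerate ordering from Step~1. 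For (iii), every PA relation touching $\vec{y}$ is routed into $\psi_i$ and every relation touching a parameter through a divisibility in $\varphi_i$. The quantitative bounds then follow by inspection: each disjunct has polynomial size, constants are of polynomial bitsize (inherited from the binary-encoded edge weights), the flow variables are reused across disjuncts so $|\vec{y}|$ stays polynomial in $|\Au|$, and there are at most exponentially many disjuncts in total, so the whole formula is constructible in deterministic exponential time.
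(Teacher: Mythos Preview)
Your proposal diverges substantially from the paper's proof and, more importantly, rests on a structural picture that does not apply to this model.

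The paper does \emph{not} partition the counter axis into ``levels'' determined by the parameter values. Instead it relies on the reachability-certificate decomposition of~\cite{hkow09} (stated here as \Cref{pro:certificates}): any test-free run factors as $\rho_1\rho_2\rho_3$ where $\rho_1$ has a type-1 certificate (no positive-weight cycles), $\rho_3$ has a type-2 certificate (no negative-weight cycles), and $\rho_2$ has a type-3 certificate (a positive cycle, a plateau, then a negative cycle). Each certificate type admits a polynomial-size flow description, and this is what \Cref{lem-type1} encodes as $\Psi_t^{(q,q')}$. The divisibilities arise from type-3 certificates: pumping a cycle of weight $m_j(\vec{x})$ a non-deterministic number of times yields constraints of the shape $m_j(\vec{x})\divs z_j$, where $m_j$ is a single parameter, its negation, or a constant---never a difference of parameters. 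The full $\reach$ formula is then obtained by enumerating orderings of \emph{zero-test transitions} (not parameter values) and stitching together $\reachnt$ pieces between consecutive zero tests.

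Your approach has a genuine gap at its core. In this model the counter is unbounded above: between two test transitions a subrun is \emph{not} confined to any level---it can climb arbitrarily high and come back down. Consequently the sentence ``the requirement that the counter never leaves the prescribed level during a test-free subrun is expressible as divisibilities $f(\vec{x})\divs g(\vec{x},\vec{y})$ where $f$ is the width of the level'' has no content here: there is no such requirement, and no such $f$. What you actually need to capture is the non-negativity of the counter along the entire subrun, and a bare Parikh/flow description does not do this---flow conservation records only the net effect, not the intermediate minima. The certificate trichotomy is precisely the device that tames this: types~1 and~2 bound the run's extrema by a polynomial function of the flow, while type~3 reduces non-negativity to a divisibility by the positive-cycle weight. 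Without invoking that (or an equivalent) structural lemma, your encoding cannot guarantee that the flow it describes lifts to a valid run, and the step ``following~\cite{hkow09}'' where you locate the divisibilities is simply not what that paper does.
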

Below, we make use of this proposition to prove \Cref{thm-synthreach}. Then, we prove some auxiliary results in \Cref{sec:reach-cert} and, in \Cref{sec:together}, we present a sketch of our proof of \Cref{prop_reach_pbil}.

We will argue that
$\forall\vec{x} \exists a \exists b \reach^{(q,q')}(\vec{x},a,b)$
can be transformed into an equivalent BIL sentence. 
Note that for this to
be the case it suffices to remove the $\psi_i(\vec{y},a,b)$ subformulas.
Intuitively, since these are quantifier-free PA formulas, their set of
satisfying valuations is \emph{semi-linear} (see, for
instance,~\cite{haase18}). Our intention is to remove the
$\psi_i(\vec{y},a,b)$ and replace the occurrences of $\vec{y},a,b$ in the rest of
$\reach^{(q,q')}(\vec{x},a,b)$ with linear polynomials ``generating''
their set of solutions. This is formalized below.

\subparagraph{Affine change of variables.} 
Let $\vec{A} \in \mathbb{Z}^{m \times n}$
be an integer matrix of size $m \times n$ of rank $r$,
and $\vec{b} \in \mathbb{Z}^m$.
Let $\vec{C} \in \mathbb{Z}^{p \times n}$ be
an integer matrix of size $p \times n$ such that
$\begin{psmallmatrix} \vec{A}\\ \vec{C} \end{psmallmatrix}$ has
rank $s$, and $\vec{d} \in \mathbb{Z}^p$. We write $\mu$ for the maximum absolute value
of an $(s-1) \times (s-1)$ or $s \times s$ sub-determinant
of the matrix $\begin{psmallmatrix} \vec{A} & \vec{b}\\ \vec{C} & \vec{d} 
\end{psmallmatrix}$ that incorporates at least $r$ rows from
$\begin{pmatrix} \vec{A} & \vec{b} \end{pmatrix}$. 
\begin{theorem}[From~\cite{vzgs78}]\label{thm:gathking}
    Given integer matrices $\vec{A} \in \mathbb{Z}^{m \times n}$ and
    $\vec{C} \in \mathbb{Z}^{p \times n}$, integer vectors
    $\vec{b} \in \mathbb{Z}^m$ and $\vec{d} \in \mathbb{Z}^p$,
    and $\mu$ defined as above,
    there exists a finite set $I$, a collection of $n \times (n - r)$
    matrices $\vec{E}^{(i)}$, and $n \times 1$ vectors
    $\vec{u}^{(i)}$, indexed by $i \in I$, all with integer 
    entries bounded by $(n+1)\mu$ such that:
    \(
      \{\vec{x} \in \mathbb{Z}^n : \vec{A}\vec{x} = \vec{b} \land 
      \vec{C}\vec{x} \geq \vec{d}\}
      = \bigcup_{i \in I}
      \{\vec{E}^{(i)}\vec{y} + \vec{u}^{(i)} : \vec{y} \in \mathbb{Z}^{n-r},
      \vec{y} \geq \vec{0}\}.
    \)
\end{theorem}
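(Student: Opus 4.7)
The plan is to decompose the proof into two stages: first, handle the equality constraints $\vec{A}\vec{x} = \vec{b}$ to reduce the problem to a pure system of inequalities in fewer variables; second, parameterize the integer points of the resulting rational polyhedron as a finite union of translated positive cones. For the first stage, I would compute the Hermite normal form of $\vec{A}$, which (assuming the system has an integer solution at all) exhibits the integer solution set of $\vec{A}\vec{x}=\vec{b}$ as $\{\vec{F}\vec{z}+\vec{v} : \vec{z}\in\mathbb{Z}^{n-r}\}$, where the columns of $\vec{F}$ form an integer basis of the kernel of $\vec{A}$ and $\vec{v}$ is a particular integer solution. Cramer's rule together with Hadamard's inequality yields entry-wise bounds on $\vec{F}$ and $\vec{v}$ in terms of $r \times r$ sub-determinants of $(\vec{A}\mid\vec{b})$.

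After substitution, the remaining task is to describe the integer points of the rational polyhedron $P = \{\vec{z}\in\mathbb{R}^{n-r} : \vec{C}'\vec{z}\geq\vec{d}'\}$, where $\vec{C}' = \vec{C}\vec{F}$ and $\vec{d}' = \vec{d}-\vec{C}\vec{v}$. By the Minkowski--Weyl theorem, $P$ decomposes as $\mathrm{conv}(V) + \mathrm{cone}(R)$ for finite sets $V$ of vertices and $R$ of extreme rays, each of which is determined by an $(n-r)\times(n-r)$ non-singular sub-system of $\vec{C}'\vec{z} = \vec{d}'$. After triangulating $\mathrm{cone}(R)$ into simplicial sub-cones with primitive integer generators and enumerating, for each such cone, the integer points of its fundamental parallelepiped relative to the chosen vertex, every integer point of $P$ can be written as $\vec{u}^{(i)} + \vec{E}^{(i)}\vec{y}$ for some index $i$ in a finite index set $I$ and some $\vec{y}\in\mathbb{Z}^{n-r}_{\geq 0}$. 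Composing with the affine change of variables from the first stage yields the claimed parameterization.

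The main obstacle is obtaining the precise bound $(n+1)\mu$ on the entries of the $\vec{E}^{(i)}$ and $\vec{u}^{(i)}$. This requires a unified determinantal bookkeeping argument: all sub-determinants arising in the second stage (vertices and rays of $P$, as well as fundamental-parallelepiped translates) must be expressed as sub-determinants of $\bigl(\begin{smallmatrix}\vec{A}&\vec{b}\\\vec{C}&\vec{d}\end{smallmatrix}\bigr)$ that incorporate at least $r$ rows from the $\vec{A}$-block, reflecting that the Hermite reduction ``consumes'' exactly $r$ equations. The linear factor $n+1$ would come out of a Cauchy--Binet expansion when composing the two stages. The rest of the argument is essentially careful bookkeeping, but this determinantal analysis is the technical heart of the proof and the place where one needs to be most careful to avoid losing an exponential factor in the dimension.
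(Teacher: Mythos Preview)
The paper does not prove this theorem; it is quoted verbatim from the cited source~\cite{vzgs78} (von zur Gathen and Sieveking) and used as a black box in the proof of Theorem~\ref{thm-synthreach}. There is therefore no ``paper's own proof'' to compare your proposal against.

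That said, your outline is broadly the right shape for how such a result is established: reduce the equalities via an integral kernel basis (Hermite normal form), then decompose the integer points of the remaining polyhedron into finitely many shifted simplicial cones. The part you correctly flag as delicate---tracking that every sub-determinant arising in the second stage can be rewritten as an $s\times s$ or $(s{-}1)\times(s{-}1)$ sub-determinant of $\bigl(\begin{smallmatrix}\vec{A}&\vec{b}\\\vec{C}&\vec{d}\end{smallmatrix}\bigr)$ involving at least $r$ rows from the top block---is exactly where the original reference does the work, and your sketch does not actually carry it out. If you intend this as a self-contained proof rather than a pointer to~\cite{vzgs78}, you would need to make that bookkeeping explicit; otherwise, simply citing the source as the paper does is appropriate here.
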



We are now ready to prove \Cref{thm-synthreach}.
\begin{proof}[Proof of \Cref{thm-synthreach}]
We will first prove that the complement of the coB\"uchi synthesis problem can
be encoded into a BIL sentence. Recall that the complement of the coB\"uchi
synthesis problem asks: given a SOCAP $\Au$ with parameters $X$, for all
valuations does there exist an infinite run from a given configuration
$(q,0)$, that visits the target set $F$ infinitely many times. Without loss of generality, we assume 
that the automaton has no parametric tests as they can be simulated using parametric
 updates and zero tests.

The idea is to
check if there exists a reachable ``pumpable cycle'' containing one of the
target states. Formally, given the starting configuration $(q,0)$, we want to
check if we can reach a configuration $(q_f, k)$, where $q_f \in F$ and $k \geq 0$ and then we want to reach $q_f$ again via a pumpable
cycle.  This means that starting from $(q_f,k)$ we reach the configuration
$(q_f,k)$ again or we reach a configuration $(q_f, k')$ with $k'\geq k$
without using zero-test transitions. Note that reachability while avoiding
zero tests is the same as reachability in the sub-automaton obtained after
deleting all the zero-test transitions. We write $\reachnt$ for the $\reach$
formula constructed for that sub-automaton as per \Cref{prop_reach_pbil}.
The above constraints can
be encoded as a formula $\cob(\vec{x}) = \exists k \exists k' \bigvee_{q_f
\in F} \zeta(\vec{x},k,k')$ where the subformula $\zeta$ is:
\(
  (k \leq k') 
  \land \reach^{(q,q_f)}(\vec{x},0,k)
  \land \left( \reachnt^{(q_f,q_f)}(\vec{x},k,k') \lor
  \reach^{(q_f,q_f)}(\vec{x},k,k) \right).
\)
Finally, the formula
$\cob(\vec{x})$ will look as follows:
\[
    \exists \vec{y} \exists k \exists k'
     \bigvee_{i \in I} \bigwedge_{j \in J_i} 
    \left(f_j(\vec{x}) \divs g_j(\vec{x},\vec{y})\right)
    \land \varphi_i(\vec{x})\land \psi_i(\vec{y},k, k') \land \vec{y} \geq \vec{0}   
\]
where, by \Cref{prop_reach_pbil}, the $\varphi_i(\vec{x})$ are quantifier-free
PA formulas over $\vec{x}$ constructed by grouping all the quantifier-free PA
formulas over $\vec{x}$. Similarly, we can construct $\psi_i(\vec{y}, k, k')$
by grouping all the quantifier free formulas over $\vec{y}, k$ and $k'$. Now, 
we use the affine change of variables
to remove the formulas $\psi_i(\vec{y},k, k')$. Technically, the free
variables from the subformulas $\psi_i$ will be replaced in all other
subformulas by linear polynomials on newly introduced variables $\vec{z}$.
Hence, the final formula $\cob(\vec{x})$ becomes: 
\[
\exists \vec{z} \bigvee_{i \in I'} \bigwedge_{j \in J_i}  \left(f_j(\vec{x}) \divs g_j(\vec{x},\vec{z})\right) \land \varphi_i(\vec{x}) \land \vec{z} \geq \vec{0}.
\] 

Note that, after using the affine change of variables, the number of $\vec{z}$
variables is bounded by the number of old existentially quantified variables
($\vec{y},k,k'$). However, we have introduced exponentially many new
disjuncts.\footnote{Indeed, because of the bounds on the entries of the
matrices and vectors, the cardinality of the set $I$ is exponentially
bounded.}

By construction, for a valuation $V$ there is an infinite $V$-run in $\Au$
from $(q,0)$ that visits the target states infinitely often iff
$\cob(V(\vec{x}))$ is true. Hence, $\forall \vec{x} (\vec{x} < 0 \lor
\cob(\vec{x}))$ precisely
encodes the complement of the coB\"uchi synthesis problem. Also, note that it
is a BIL sentence since the subformulas (and in particular the divisibility constraints)
come from our usage of \Cref{prop_reach_pbil}.
Now, the number of $\vec{z}$ variables, say $m$, is bounded by the number of
$\vec{y}$ variables before the affine change of variables which is polynomial
with respect to $|\Au|$ from  \Cref{prop_reach_pbil}. Also, the bitsize of
the constants in $\cob$ is polynomial in $|\Au|$ though the size of the
formula is exponential in $|\Au|$. Now,
using \Cref{lem:bil2univ},
we construct an \apad sentence $\forall \vec{x} \forall d \psi(\vec{x},d)$
from $\forall \vec{x} (\vec{x} < 0 \lor \cob(\vec{x}))$. By \Cref{cor:bil-small-model},
$\lnot\psi$ admits a solution of bitsize bounded by:
\(
\exp({\ln(|\cob|)2^m \mathrm{poly}(n+1)})= \exp({|\Au| \cdot
2^{\mathrm{poly}(|\Au|)} \mathrm{poly}(n+1)}),
\)
which is doubly exponential in the size of $|\Au|$. As in the
proof of \Cref{thm:bildec}, a guess-and-check
algorithm for $\lnot\psi$ gives us the desired
\NTWOEXP{} complexity result for the coB\"uchi synthesis problem. By
\Cref{lem_synth_red}, the other synthesis problems have the same
complexity.
\end{proof}

In the sequel we sketch our proof of \Cref{prop_reach_pbil}.

\subsection{Reachability certificates}\label{sec:reach-cert}
We presently recall the notion of reachability certificates from~\cite{hkow09}.
Fix a SOCAP $\Au$ and a valuation $V$. A flow $f$ in $G_\Au$ is a \emph{reachability
certificate} for two configurations $(q,c),(q',c')$ in $\Au$ if there is a $V$-run from $(q,c)$ to $(q',c')$ that induces a path $\pi$ such that $f = f_\pi$
and one of the following holds:
\emph{(type 1)} $f$ has no positive-weight cycles,
\emph{(type 2)} $f$ has no negative-weight cycles, or
\emph{(type 3)} $f$ has a positive-weight cycle that can be taken from $(q,c)$ and
    a negative-weight cycle that can be taken to $(q',c')$.

In the sequel, we will encode the conditions from the following result into a
PAD formula so as to accommodate parameters. Intuitively, the proposition
states that there is a run from $(q,c)$ to $(q',c')$ if and only if there is
one of a special form: a decreasing prefix (type 1), a positive cycle leading
to a plateau followed by a negative cycle (type 3), and an increasing suffix
(type 2). Each one of the three sub-runs could in fact be an empty run.

\begin{proposition}[{\cite[Lemma 4.1.14]{christoph2012a}}]\label{pro:certificates} 
    If $(q', c')$ is reachable from $(q, c)$ in a SOCAP with $X = \emptyset$
    and without zero tests then there is a run $\rho = \rho_1\rho_2\rho_3$
    from $(q, c)$ to $(q', c')$, where $\rho_1$, $\rho_2$,
    $\rho_3$, each have a polynomial-size reachability certificate of type 1,
    3 and 2, respectively.
\end{proposition}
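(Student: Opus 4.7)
The plan is, starting from any witnessing $V$-run $\rho$ from $(q,c)$ to $(q',c')$, to identify two pivot configurations on $\rho$ that split it into three pieces with the advertised certificate types. Let $(p_1,h_1)$ be the first configuration on $\rho$ whose state lies on some simple positive-weight cycle of $G_{\Au}$, and let $(p_2,h_2)$ be the last configuration on $\rho$ whose state lies on some simple negative-weight cycle. The degenerate cases are handled separately: if $\rho$ visits no state on any positive-weight simple cycle of $G_{\Au}$, a standard decomposition of the Parikh image of $\rho$ into simple cycles shows that $\rho$'s flow has no positive-weight cycle at all, so $\rho$ is already a type~1 certificate; set $\rho_2,\rho_3$ to be empty. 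Symmetric reasoning handles the other degenerate case.

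In the generic case, let $\rho_1$ be the prefix of $\rho$ up to $(p_1,h_1)$ and $\rho_3$ the suffix from $(p_2,h_2)$. By minimality of the pivots, no state visited by $\rho_1$ lies on a positive-weight simple cycle of $G_{\Au}$, so---by the same simple-cycle decomposition---the Parikh image of $\rho_1$ contains no positive-weight cycle and $\rho_1$ is a type~1 certificate. Symmetrically, $\rho_3$ is a type~2 certificate. For $\rho_2$, I would replace the middle portion of $\rho$ between the pivots by a freshly-constructed run: take a simple path from $p_1$ to $p_2$ in $G_{\Au}$, precede it with enough iterations of a positive-weight simple cycle at $p_1$ to pump the counter high, and follow it with iterations of a negative-weight simple cycle at $p_2$. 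The cycle multiplicities are chosen so that the endpoints match $(p_1,h_1)$ and $(p_2,h_2)$ while the counter remains non-negative throughout, giving the type~3 structure.

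The main obstacle will be establishing polynomial size of the three certificates. For $\rho_1$ and $\rho_3$ I would invoke a Carath\'eodory-style flow decomposition: any flow without positive (resp.\ negative) cycles can be supported on at most $|T|$ simple paths and simple cycles of $G_{\Au}$, each of length at most $|Q|$. For the hand-constructed $\rho_2$, the support consists of one simple path and two simple cycles of total length at most $3|Q|$. The cycle multiplicities themselves may be pseudo-polynomially large in order to satisfy the counter-arithmetic constraints at the seams, but since the ``size'' of a reachability certificate refers to the cardinality of its support rather than the magnitude of its flow values, the polynomial bound advertised in the statement is met.
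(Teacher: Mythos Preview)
The paper does not prove this proposition; it is cited from Haase's thesis without argument. Assessing your sketch on its own: the overall decomposition shape is right, but your construction of $\rho_2$ has two genuine gaps. First, you choose $(p_1,h_1)$ as the first configuration whose state lies on some positive simple cycle of $G_\Au$ and then iterate that cycle from $(p_1,h_1)$---but the cycle is only known to exist in the graph, and its drop may exceed $h_1$. Concretely, with edges $q\xrightarrow{0}p_1$, $p_1\xrightarrow{-100}r$, $r\xrightarrow{+101}p_1$, $p_1\xrightarrow{+50}p_2$, a self-loop $p_2\xrightarrow{-1}p_2$, and $p_2\xrightarrow{0}q'$, the configuration $(q',10)$ is reachable from $(q,0)$, your pivot is $(p_1,0)$, and yet the only positive cycle through $p_1$ cannot be taken even once from counter value $0$; your $\rho_2$ therefore fails the ``takeable'' clause in the definition of a type-3 certificate. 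The symmetric problem arises at $p_2$. Second, even granting traversability, hitting $(p_2,h_2)$ exactly requires non-negative integers $a,b$ with $a\,w^{+} - b\,w^{-} = h_2 - h_1 - w_{\text{path}}$, which is unsolvable when $\gcd(w^{+},w^{-}) \nmid h_2 - h_1 - w_{\text{path}}$; you give no mechanism to select a simple $p_1$--$p_2$ path of compatible residue, and one need not exist.

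The usual repair for both issues is to let $\rho$ supply the cycles: take the pivots where a positive (resp.\ negative) cycle is first (resp.\ last) completed \emph{as a sub-run of $\rho$}, so takeability is automatic, and connect the pivots using a segment of $\rho$ itself so that the net effect is exactly $h_2-h_1$; then pump the two cycles with multiplicities that cancel. That reused segment must subsequently be shortened to obtain a polynomial-size certificate, which is a separate argument your sketch also omits (your claim that certificate ``size'' is merely support cardinality is not the intended reading: the flow values must also have polynomial bit-length). A minor further omission: you do not say what to do when your $(p_1,h_1)$ occurs after $(p_2,h_2)$ along $\rho$.
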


\subparagraph{Encoding the certificates.}
Now, we recall the encoding for the reachability certificates proposed
by Lechner~\cite{lechner15,lechner16}. Then, we highlight the changes
necessary to obtain the required type of formula.
We begin with type-1 and type-3 certificates.
\begin{lemma}[{From~\cite[Lem.~33 and Prop.~36]{lechner16}}]
    Suppose $\Au$ has no zero tests and let $t \in \{1,2,3\}$.
    Given states $q,q'$, one can construct in deterministic exponential time
    the existential PAD formula $\Phi^{(q,q')}_t(\vec{x},a,b)$. Moreover, a valuation
    $V$ of $X \cup \{a,b\}$ satisfies $\Phi_t^{(q,q')}(\vec{x},a,b)$ iff there
    is a $V$-run from $(q,V(a))$ to $(q',V(b))$ that induces a path
    $\pi$ with $f_\pi$ a type-$t$ reachability certificate.
\end{lemma}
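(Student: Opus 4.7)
The plan is to follow Lechner's encoding of reachability certificates, adapted to the parametric setting. A candidate path $\pi$ is represented by its Parikh image, i.e. an edge-count vector, together with, for type-3 certificates, extra existential variables for the configurations reached at the two pumping points.

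First, for each transition $t \in T$, I would introduce an existential variable $y_t \geq 0$ counting the number of times $t$ is used in $\pi$. Flow conservation at each state gives a linear PA equation, with corrections at $q$ and $q'$. The total weight of $\pi$ is $\sum_t y_t \cdot w_t(\vec{x})$, where $w_t(\vec{x})$ is the (possibly parametric) weight of $t$; the weight equation $\sum_t y_t \cdot w_t(\vec{x}) = b - a$ is then a linear PA atom in $\vec{x}, \vec{y}, a, b$. Since flow conservation alone does not imply the support of $\vec{y}$ forms a connected subgraph from $q$ to $q'$, I would disjunct over the (exponentially many) possible supports $S \subseteq T$, requiring $y_t = 0$ for $t \notin S$ together with a polynomial-time structural check that $G[S]$ is $q$-$q'$-connected.

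For the type-specific constraints I proceed by cases. For type 1, I would fix a polynomial-size cycle basis of $G[S]$ and require each basis cycle to have non-positive weight; counter non-negativity along $\pi$ then follows from $a \geq 0$, $b \geq 0$, and polynomially many local checks along the simple $q$-$q'$-skeleton of the path, since no positive cycle can push the counter above the skeleton's peak. Type 2 is symmetric. For type 3, I would introduce extra existential variables for the configurations $(p_1, v_1)$ and $(p_2, v_2)$ at the pumping points and encode the certificate as a composition: a type-1 sub-certificate from $(q,a)$ to $(p_1, v_1)$; iteration of the positive cycle; a plateau; iteration of the negative cycle; and a type-2 sub-certificate from $(p_2, v_2)$ to $(q',b)$. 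To avoid multiplying existential variables, I eliminate the iteration counts $k_1, k_2$ using divisibility: an expression $v' = v_1 + k_1 w_+(\vec{x})$ is rewritten as $v' \geq v_1 \land w_+(\vec{x}) > 0 \land w_+(\vec{x}) \divs (v' - v_1)$, and symmetrically for the negative cycle.

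The main obstacle I anticipate is ensuring counter non-negativity along the plateau of type-3 certificates in the presence of parametric updates: the minimum counter value along the plateau depends on both the parameters and the (now eliminated) iteration counts. This is handled by requiring the positive pumping to inflate the counter sufficiently for every local minimum along the plateau to remain non-negative --- a polynomial number of additional linear constraints per chosen support. Putting everything together, $\Phi^{(q,q')}_t$ becomes an exponential-size disjunction of existential PAD subformulas, with polynomially many $\vec{y}$-variables and polynomial-bitsize constants, constructible in deterministic exponential time in $|\Au|$.
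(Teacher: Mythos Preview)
There is a genuine gap in your encoding, and it concerns the very first step. You write that the weight equation $\sum_t y_t \cdot w_t(\vec{x}) = b - a$ ``is then a linear PA atom in $\vec{x}, \vec{y}, a, b$''. It is not: for a parametric-update transition $t$ with $\delta(t) = \pm x$, the term $y_t \cdot w_t(\vec{x}) = \pm y_t x$ is a product of two first-order variables, hence outside PA. This is precisely why the encoding targets PAD rather than PA. In Lechner's construction (and as reflected in the formula shape the paper records immediately after the lemma), one does \emph{not} keep the edge counts $y_t$ for parametric transitions. Instead one introduces a fresh variable $z_j$ standing directly for the aggregate parametric effect $y_t \cdot x$, and expresses ``$z_j$ is a non-negative multiple of $x$'' via the divisibility atom $m_j(\vec{x}) \divs z_j$ together with the sign constraint $(m_j(\vec{x}) > 0 \leftrightarrow z_j > 0)$, where $m_j(\vec{x}) \in \{x, -x\} \cup \mathbb{N}_{>0}$. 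You do apply exactly this trick later, when eliminating the iteration counts $k_1,k_2$ of the pumping cycles in type~3; the point is that it must be applied already at the level of the global weight balance, and similarly to every place where a parametric weight is multiplied by a flow value (e.g.\ in your ``polynomially many local checks'' for counter non-negativity).

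A second, smaller gap: your type-1/type-2 treatment via a cycle basis is not sound. Non-positivity of all basis cycles does not imply non-positivity of all directed cycles in $G[S]$ (integer combinations of basis cycles can have either sign). The standard way to encode ``$G[S]$ has no positive-weight directed cycle'' with parametric weights is a symbolic Bellman--Ford / potential-function argument: existentially guess potentials $\pi_q$ for $q \in Q$ and require $\pi_{q'} - \pi_q \geq w_t(\vec{x})$ for every $t = (q,q') \in S$, which is linear in $\vec{x}$ and the $\pi_q$. The paper's footnote explicitly mentions that Lechner uses a symbolic Bellman--Ford encoding here; the na\"ive alternative is to enumerate simple cycles of $G[S]$ (exponentially many) and constrain each to have non-positive weight, which is what produces the exponential disjunction $|I|$ in the displayed formula shape. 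Either route works, but the cycle-basis shortcut does not.
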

The formulas $\Phi^{(q,q')}_t$ from the result above look as follows:
\[
    \bigvee_{i \in I} \exists \vec{z}
    \bigwedge_{j \in J_i} m_j(\vec{x}) \divs z_j
    \land (m_j(\vec{x}) > 0 \leftrightarrow z_j > 0)
    \land \varphi_i(\vec{x})
    \land \psi_i(\vec{z},a,b)
    \land \vec{z} \geq \vec{0}
\]
where $|I|$ and the size of each disjunct are exponential.\footnote{Lechner~\cite{lechner15}
  actually employs a symbolic encoding of the Bellman-Ford algorithm to get
  polynomial disjuncts
  in her formula. However, 
  a na\"ive encoding --- while exponential --- yields the formula we present here and
streamlines its eventual transformation to BIL.}
Further, all the $\varphi_i$ and $\psi_i$ are
quantifier-free PA formulas and the $m_j(\vec{x})$ are all either $x$, $-x$,
or $n \in \mathbb{N}_{>0}$.

We observe that the constraint $(m_j(\vec{x}) > 0 \leftrightarrow z_j > 0)$
regarding when the variables can be $0$, can be pushed into a further
disjunction over which subset of $X$ is set to $0$. In one case the
corresponding $m_j(\vec{x})$'s and $z_j$'s are replaced by $0$, in the
remaining case we add to $\varphi_i$ and $\psi_i$ the constraints $z_j > 0$
and $m_j(\vec{x}) > 0$ respectively. 
%
We thus obtain formulas $\Psi_t^{(q,q')}(\vec{x},a,b)$ with the following
properties.
\begin{lemma} \label{lem-type1}
  Suppose $\Au$ has no zero tests and let $t \in \{1,2,3\}$.
    Given states $q,q'$, one can construct in deterministic exponential time a
    PAD formula
    \(
    \Psi_t^{(q,q')}(\vec{x},a,b) = \exists \vec{y} \bigvee_{i \in I}
      \varphi_i(\vec{x},\vec{y}) \land \psi_i(\vec{y},a,b) \land
      \vec{y} \geq \vec{0}
    \)
    s.t. $\forall \vec{x} \exists \vec{y} \bigvee_{i \in I}
    \varphi_i(\vec{x},\vec{y}) \land \vec{y} \geq \vec{0}$ is a BIL
    sentence, the $\psi_i(\vec{y},a,b)$ are quantifier-free PA formulas, and
    additionally:
   \begin{itemize}
    \item a valuation $V$ of $X \cup \{a,b\}$ satisfies $\Psi^{(q,q')}_t$ iff
      there is a $V$-run from $(q,V(a))$ to $(q',V(b))$ that induces a path
      $\pi$ such that $f_\pi$ is a type-$t$ reachability certificate,
    \item the bitsize of constants in $\Psi_t^{(q,q')}$ is of polynomial size
      in $|\Au|$,
    \item $|\Psi_t^{(q,q')}|$ is at most exponential with respect to $|\Au|$, and 
    \item the number of $\vec{y}$ variables is polynomial with respect to $|\Au|$.
    \end{itemize}
\end{lemma}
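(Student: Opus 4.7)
The plan is to transform the formula $\Phi_t^{(q,q')}$ from the preceding lemma into the required shape by (a) eliminating the biconditionals $(m_j(\vec{x}) > 0 \leftrightarrow z_j > 0)$ via a case-split over which parameters are zero, (b) renaming and pulling the existentials outside the disjunction, and (c) verifying the resulting formula satisfies all listed properties.

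For step (a), observe that every $m_j(\vec{x})$ is either a positive integer constant or of the form $\pm x$ with $x \in X$, so whether $m_j(\vec{x}) = 0$ depends only on which parameters vanish. I further disjunct each $i \in I$ over all $2^{|X|}$ subsets $S \subseteq X$ of zero-valued parameters, conjoining $\bigwedge_{x \in S} x = 0 \land \bigwedge_{x \notin S} x > 0$ to the $\varphi_i$ part of that branch. Within each branch every $m_j(\vec{x})$ is statically known to be either $0$ or strictly positive. If $m_j(\vec{x}) = 0$, I drop the divisibility and biconditional altogether and add $z_j = 0$ to $\psi_i$ (sound because $0 \divs z_j \iff z_j = 0$). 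Otherwise, I drop the biconditional, add $m_j(\vec{x}) > 0$ to $\varphi_i$ and $z_j > 0$ to $\psi_i$, and keep $m_j(\vec{x}) \divs z_j$ in $\varphi_i$.

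For step (b), rename each disjunct's local $\vec{z}$ to a single common pool $\vec{y}$ (introducing fresh copies only as needed). Each original disjunct introduces polynomially many $z_j$ variables --- one per $j \in J_i$, and $|J_i|$ is polynomial since the underlying reachability certificates have polynomial size by \Cref{pro:certificates} --- so the unified pool $\vec{y}$ still has polynomial cardinality in $|\Au|$. Pulling the existential to the front then gives $\Psi_t^{(q,q')}(\vec{x},a,b) = \exists \vec{y} \bigvee_{i \in I'} \varphi_i(\vec{x},\vec{y}) \land \psi_i(\vec{y},a,b) \land \vec{y} \geq \vec{0}$ with $|I'|$ at most exponential in $|\Au|$.

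For step (c), every remaining divisibility has the BIL-required shape $m_j(\vec{x}) \divs y_j$ conjoined with $m_j(\vec{x}) > 0$, with $\vec{y}$ appearing only on right-hand sides of divisibilities; the constraint $\vec{y} \geq \vec{0}$ is conjoined at the outermost level; and each $\psi_i(\vec{y},a,b)$ is a quantifier-free PA formula mentioning only $\vec{y}, a, b$. The case-split adds at most a $2^{|X|}$ blow-up and introduces no new constants, so the exponential total size and polynomial constant bitsize inherited from $\Phi_t^{(q,q')}$ are preserved, and construction runs in deterministic exponential time. The main obstacle is disciplined bookkeeping during the case-split: one must carefully route $\vec{x}$-only constraints into $\varphi_i$ and $\vec{y}$-only constraints into $\psi_i$, so as never to reintroduce $\vec{y}$ on the left-hand side of a divisibility or leave a mixed $(\vec{x},\vec{y})$ atom outside a positively-conjoined divisibility, either of which would break the BIL syntactic restrictions.
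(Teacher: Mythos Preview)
Your approach is exactly the paper's: eliminate the biconditionals $(m_j(\vec{x}) > 0 \leftrightarrow z_j > 0)$ by a case split over which parameters vanish, then route the resulting atoms into $\varphi_i(\vec{x},\vec{y})$ versus $\psi_i(\vec{y},a,b)$.

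There is, however, a small gap in step~(a). You assert that after fixing $S \subseteq X$, ``every $m_j(\vec{x})$ is statically known to be either $0$ or strictly positive''. This fails when $m_j(\vec{x}) = -x$ with $x \notin S$: then $x > 0$ and hence $m_j(\vec{x}) < 0$. Your recipe for the ``otherwise'' case adds the constraint $m_j(\vec{x}) > 0$, i.e.\ $-x > 0$, to $\varphi_i$, which is false on that branch and kills the disjunct. But in the original $\Phi_t$ the biconditional $(-x > 0 \leftrightarrow z_j > 0)$ together with $z_j \geq 0$ merely forces $z_j = 0$, after which $-x \divs 0$ holds trivially --- so the disjunct is \emph{not} vacuous there. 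Your $\Psi_t$ would therefore be strictly stronger than $\Phi_t$, breaking the first bullet. The fix is immediate: for every $j$ with $m_j(\vec{x}) = -x$ the conjunct $m_j \divs z_j \land (m_j > 0 \leftrightarrow z_j > 0) \land z_j \geq 0$ is equivalent to $z_j = 0$ regardless of whether $x \in S$, so preprocess these away (add $z_j = 0$ to $\psi_i$, drop the divisibility and the biconditional) \emph{before} the case split. Afterwards only $m_j \in \{x : x \in X\} \cup \mathbb{N}_{>0}$ remain and your dichotomy is valid. The paper's sketch is equally terse on this point; the same repair applies there.
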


\subsection{Putting everything together}\label{sec:together}
In this section, we combine the results from the previous subsection to
construct $\reach$ for \cref{prop_reach_pbil}. 
%
The construction, in full detail, and a formal proof that $\reach$ enjoys the claimed properties are
given in the long version of this paper.
First, using \Cref{pro:certificates}
and the lemmas above, we define a formula
$\reachnt^{(q,q')}(\vec{x},a,b)$ that is satisfied by a valuation $V$ of $X
\cup \{a,b\}$ iff there is a $V$-run from $(q, V(a))$ to $(q', V(b))$ without
any zero-test transitions. To do so, we use formulas for the sub-automaton
obtained by removing from $\Au$ all zero-test transitions.
Then, the formula $\reach^{(q,q')}(\vec{x},a,b)$ expressing general reachability can
be defined by taking a disjunction over all orderings on the zero tests.
In other words, for each enumeration of zero-test transitions we take the
conjunction of the intermediate $\reachnt$ formulas as well as $\reachnt$
formulas from the initial configuration and to the final one.

Recall
that for any LTL formula $\varphi$ we can construct a \emph{universal coB\"uchi automaton} of exponential size in $|\varphi|$~\cite{bk08,kv05}. (A universal coB\"uchi automaton accepts a word $w$ if all of its infinite runs on $w$ visit $F$ only finitely often. Technically, one can construct such an automaton for $\varphi$ by constructing a B\"uchi automaton for $\lnot \varphi$ and ``syntactically complementing'' its acceptance condition.) By considering the product of
this universal coB\"uchi automaton and the given SOCAP, the LTL synthesis problem reduces to coB\"uchi synthesis.
\begin{corollary}\label{cor:ltl-synth}
    The LTL synthesis problem for succinct one-counter automata with parameters is decidable in \NTHREEEXP.
\end{corollary}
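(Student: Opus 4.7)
The plan is to reduce LTL synthesis to coB\"uchi synthesis on an exponentially larger SOCAP and then invoke \Cref{thm-synthreach}. First I would translate the given LTL formula $\varphi$ (over atomic propositions $P$ labelling the states of the input SOCAP $\Au$ via a function $f \colon Q \to 2^P$) into a universal coB\"uchi automaton $\mathcal{U}_\varphi$ over the alphabet $2^P$. The standard construction~\cite{bk08,kv05} yields a nondeterministic B\"uchi automaton recognising the models of $\lnot\varphi$ with $2^{\Or(|\varphi|)}$ states; syntactically swapping its acceptance condition produces a universal coB\"uchi automaton of the same size whose language is exactly the set of infinite words modelling $\varphi$.

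Next I would form the synchronous product $\Au \otimes \mathcal{U}_\varphi$. This product is itself a SOCAP: the finite-state component is $Q_\Au \times Q_{\mathcal{U}_\varphi}$, the counter, parameter set, updates, and tests are inherited verbatim from $\Au$, and a transition $(q,u) \to (q',u')$ exists exactly when $q \to q'$ is a transition of $\Au$ (carrying its original operation) and $u \to u'$ is a transition of $\mathcal{U}_\varphi$ on the letter $f(q)$. Its size is $|\Au| \cdot 2^{\Or(|\varphi|)}$, i.e. singly exponential in the input, and the designated target set is $F' \defeq Q_\Au \times F_{\mathcal{U}_\varphi}$.

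The key correctness claim I would then establish is that, for every valuation $V$, every infinite $V$-run of $\Au$ from $(\qinit,0)$ satisfies $\varphi$ if and only if every infinite $V$-run of $\Au \otimes \mathcal{U}_\varphi$ from the corresponding initial configuration visits $F'$ only finitely often. The forward direction holds because any product run that violates the coB\"uchi condition projects to a $V$-run of $\Au$ whose induced labelling admits a $\mathcal{U}_\varphi$-run visiting $F_{\mathcal{U}_\varphi}$ infinitely often, contradicting the assumption that the labelling models $\varphi$. Conversely, any $V$-run of $\Au$ whose labelling violates $\varphi$ can be lifted through the witnessing $\mathcal{U}_\varphi$-run to a product run visiting $F'$ infinitely often. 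Hence LTL synthesis on $(\Au,\varphi)$ reduces to coB\"uchi synthesis on $(\Au \otimes \mathcal{U}_\varphi, F')$.

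Finally, I would apply \Cref{thm-synthreach} to the product, which yields an \NTWOEXP{} procedure in its size; since the product is singly exponential in $|\Au|+|\varphi|$, this composes to \NTHREEEXP{} in the original input, as claimed. The main delicate point is the alignment of quantifiers in the correctness equivalence: the universal quantification over runs of $\mathcal{U}_\varphi$ baked into universal coB\"uchi acceptance must be matched with the universal quantification over runs of the product, which in turn matches the ``for all $V$-runs'' of coB\"uchi synthesis. Once this is laid out carefully, checking that the product preserves the SOCAP signature (no additional counter or parametric machinery is introduced) and computing the size bound are routine.
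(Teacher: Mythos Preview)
Your proposal is correct and follows essentially the same approach as the paper: translate the LTL formula into a universal coB\"uchi automaton of exponential size (via the B\"uchi automaton for its negation), take the synchronous product with the SOCAP, and invoke \Cref{thm-synthreach} on the resulting exponentially larger instance to obtain the \NTHREEEXP{} bound. Your explicit discussion of the quantifier alignment and the product construction supplies exactly the details the paper leaves implicit.
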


\section{One-Counter Automata with Parametric Tests}

\label{sec:ocapt}

In this section, we introduce a subclass of SOCAP where only the tests are parametric. The updates are non-parametric and assumed to be given in unary. Formally, \emph{OCA with parametric tests (OCAPT)} allow for constant updates of the form $\{+a : a \in \{-1,0,1\}\}$ and zero and parametric tests. However, $\pu = \emptyset$.

We consider the synthesis problems for OCAPT. Our main result in this section are better complexity upper bounds than for general SOCAP. \Cref{lem_synth_red} states that all the synthesis problems reduce to the coB\"uchi synthesis problem for SOCAP. Importantly, in the construction used to prove \Cref{lem_synth_red}, we do not introduce parametric updates. Hence, the reduction also holds for OCAPT. This allows us to focus on the coB\"uchi synthesis problem --- the upper bounds for the other synthesis problems follow.

\begin{theorem} \label{thm_synthreach_ocapt}
    The coB\"uchi, B\"uchi and safety synthesis problems for OCAPT are in \PSPACE; the reachability synthesis problem, in
    $\NP^{\coNP} = \NP^{\NP}$. 
\end{theorem}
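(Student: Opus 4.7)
The plan is to use \Cref{lem_synth_red} --- which, critically, introduces no parametric updates in its reduction and hence still applies to OCAPT --- to reduce reachability, B\"uchi, and safety synthesis to coB\"uchi synthesis. Giving a \PSPACE algorithm for coB\"uchi synthesis over OCAPT therefore yields the \PSPACE bounds for B\"uchi and safety synthesis as corollaries. The $\NP^{\NP}$ bound for reachability is obtained separately by a more direct argument that avoids the liveness part of the coB\"uchi machinery.

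For coB\"uchi synthesis I follow the pumpable-lasso decomposition used in the proof of \Cref{thm-synthreach}: its complement asks whether for every valuation $V$ there is an infinite $V$-run from $(q,0)$ that reaches some $(q_f,k)$ with $q_f \in F$ and continues through a pumpable cycle at $q_f$. Because updates in OCAPT are non-parametric and in unary, the weight of any cycle is a fixed integer of polynomial bitsize and counter displacements along any path are polynomially bounded and independent of the parameter valuation. I plan to adapt the reachability analysis of Bollig et al.~\cite{bqs19} to decide configuration-to-configuration reachability in an OCAPT by non-deterministically guessing, in polynomial space, a short sequence of parameter-test interactions together with the path segments between them; each such segment reduces to reachability in a non-parametric unary OCA, which is in \NL. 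The universal quantification over $V$ in the complement problem is absorbed by tracking only the relative order of parameter values and their positions relative to a polynomial number of ``critical'' counter values, which by a small-model argument is sufficient symbolic information. Combining these ingredients keeps the procedure in \PSPACE.

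For reachability synthesis the absence of a liveness condition lets me argue more directly: I guess a valuation $V$ of polynomial bitsize --- justified by a small-model property derived from the unary encoding, in which only parameters can introduce large integers --- and then call a \coNP oracle to check that every infinite $V$-run from $(q,0)$ visits $F$. The complement asks whether there is an infinite $V$-run avoiding $F$; with $V$ fixed, this is in \NP by guessing a polynomial-size prefix that enters a pumpable cycle contained in $Q \setminus F$ and verifying it in polynomial time. This places the problem in $\NP^{\coNP} = \NP^{\NP}$. The main obstacle will be to establish the small-model property cleanly and to show it composes with the pumpable-cycle analysis required for coB\"uchi-accepting tails --- concretely, that a polynomially-bounded symbolic description of $V$ (its pairwise order with the other parameters and its position relative to the critical counter values arising between consecutive tests) suffices to characterise feasibility of both the reachability and the infinite-tail portions of the certificate.
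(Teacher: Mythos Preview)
Your proposal diverges from the paper's approach and has a genuine gap in the coB\"uchi case.

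For the \PSPACE{} bound the paper does not use the pumpable-lasso decomposition or any symbolic analysis of parameter orderings. It follows Bollig et al.~\cite{bqs19} directly---and that paper's central technique is precisely an alternating two-way automaton (A2A) encoding, not the segment-by-segment reachability analysis you sketch. Concretely, the paper encodes each valuation $V$ as an infinite word over $X \cup \{\square\}$ and builds in polynomial time an A2A $\T$ whose language is exactly the set of words $w$ for which all infinite $V_w$-runs of $\Au$ from $(q_0,0)$ satisfy coB\"uchi (\Cref{OCA to A2A}); synthesis then reduces to non-emptiness of $\T$, which is in \PSPACE{} by~\cite{Serre06}. The head position of the A2A encodes the counter, universal branching handles the ``for all runs'' quantifier, and a safety-copy gadget captures the coB\"uchi tail. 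Your alternative---absorbing the universal $\forall V$ by tracking only the relative order of parameters with respect to polynomially many ``critical'' counter values---is not justified: counter values along runs are unbounded (pumpable cycles can drive the counter arbitrarily high before it interacts with a parametric test), so there is no a-priori polynomial set of critical values, and you give no argument that the order-type of $V$ alone determines whether a B\"uchi witness exists for every $V$.

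For reachability synthesis your structure matches the paper's (guess $V$, then use a \coNP{} oracle per \Cref{prop_synthreach_nopara}), but the small-model bound on $V$ is not ``derived from the unary encoding'' as you suggest. The paper obtains it by converting the A2A $\T$ to an exponential-size nondeterministic B\"uchi automaton and reading off the bound from a lasso witness (\Cref{bounded-parameter}). Without the A2A construction you have no argument for the polynomial-bitsize bound on $V$, and this is the step that makes the $\NP^{\NP}$ algorithm sound.
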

To prove the theorem, we follow an idea from~\cite{bqs19} to encode parameter valuations of OCAPT into words accepted by an alternating two-way automaton. Below, we give the proof of the theorem assuming some auxiliary results that will be established in the following subsections.

\begin{proof}
   In \Cref{OCA to A2A}, we reduce the coB\"uchi synthesis problem to the non-emptiness problem for alternating two-way automata. Hence, we get the \PSPACE{} upper bound. Since the B\"uchi and the safety synthesis problems reduce to the coB\"uchi one (using \Cref{lem_synth_red}) in polynomial time, these are also in \PSPACE. 

Next, we improve the complexity upper bound for the reachability synthesis problem from $\PSPACE$ to $\NP^\NP$. In \Cref{bounded-parameter} we will prove that if there is a valuation $V$ of
   the parameters such that all infinite $V$-runs reach $F$ then we can assume that $V$
   assigns to each $x \in X$ a value at most exponential. Hence, we can guess their binary encoding and store it using a polynomial number of bits.
   Once we have guessed $V$
   and replaced all the $x_i$ by $V(x_i)$,
   we obtain a  non-parametric one counter automata $\Au'$ with $X= \emptyset$ and we ask whether all infinite runs reach $F$. We will see in \Cref{prop_synthreach_nopara} that
   this problem is in $\coNP$. The claimed complexity upper bound for the reachability synthesis problem follows.
\end{proof}
Using a similar idea to \Cref{cor:ltl-synth}, we reduce the LTL synthesis problem to the coB\"uchi one and 
we obtain the following.
\begin{corollary}\label{cor:ltl-synth2}
    The LTL synthesis problem for OCAPT is in \EXPSPACE.
\end{corollary}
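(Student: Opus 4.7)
The plan is to mirror the approach of \Cref{cor:ltl-synth}, adapting the complexity analysis to the OCAPT setting. Given an LTL formula $\varphi$ over the propositions labelling $\Au$, the first step is to construct a universal coB\"uchi automaton $\mathcal{B}_\varphi$ of size at most exponential in $|\varphi|$, using the standard LTL-to-automaton translation mentioned in the paper, i.e. build a nondeterministic B\"uchi automaton for $\lnot\varphi$ and syntactically dualize its acceptance condition.

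Next, I would form the synchronous product of the input OCAPT $\Au$ with $\mathcal{B}_\varphi$. This product operation does not introduce any new counter behaviour: it synchronizes the control states of $\Au$ and $\mathcal{B}_\varphi$ via the labelling function, while the original counter, together with its parametric tests and its non-parametric unary updates from $\{-1,0,+1\}$, is preserved verbatim. Consequently, the product is again an OCAPT, and its size is bounded by $|\Au|\cdot|\mathcal{B}_\varphi|$, hence at most exponential in $|\Au|+|\varphi|$. By construction, a valuation $V$ of the parameters makes every infinite $V$-run of $\Au$ from $(q,0)$ satisfy $\varphi$ if and only if, in the product OCAPT, every infinite $V$-run from the corresponding initial configuration visits the set of states whose $\mathcal{B}_\varphi$-component is accepting only finitely often. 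Thus LTL synthesis on $(\Au,\varphi)$ reduces, in exponential time, to coB\"uchi synthesis on the product.

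Finally, I would invoke \Cref{thm_synthreach_ocapt}, which places coB\"uchi synthesis for OCAPT in \PSPACE{}. Applying this procedure to an input of exponential size yields a procedure that runs in space polynomial in an exponential quantity, i.e.\ in \EXPSPACE{}, establishing the claim. The only point that warrants checking is that the product genuinely stays inside the OCAPT syntax --- no parametric updates appear, and the update alphabet $\{-1,0,+1\}$ is untouched --- but this is immediate because $\mathcal{B}_\varphi$ is a purely finite-state device with no access to the counter. No new technical obstacle arises beyond the familiar exponential blow-up of the LTL-to-automaton translation.
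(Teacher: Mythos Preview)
Your proposal is correct and follows essentially the same approach as the paper: reduce LTL synthesis to coB\"uchi synthesis via the product with a universal coB\"uchi automaton for $\varphi$ (incurring an exponential blow-up), observe that the product remains an OCAPT, and then invoke \Cref{thm_synthreach_ocapt} to obtain the \EXPSPACE{} bound.
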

\subsection{Alternating two-way automata} 
Given a finite set $Y$, we denote by $\mathbb{B}^{+}(Y)$ the set of
positive Boolean formulas over $Y$, including $\true$ and $\false$. A
subset $Y' \subseteq Y$ satisfies $\beta \in \mathbb{B}^{+}(Y)$, written
$Y' \models \beta$, if $\beta$ is evaluated to $\true$ when substituting
$\true$ for every element in $Y'$, and $\false$ for every element in
$Y\setminus Y'$. In particular, we have $\emptyset \models \true$. 

We can now define an \emph{alternating two-way automaton} (A2A, for short) as a tuple 
$\T = (S, \Sigma, s_{in}, \Delta, S_f)$,
where $S$ is a finite set of states,
$\Sigma$ is a finite alphabet,
$s_{in} \in S$ is the initial state,
$S_f \subseteq S$ is the set of accepting states, and
$\Delta \subseteq S \times (\Sigma \cup \{\first?\}) \times \mathbb{B}^{+}(S \times 
\{+1,0,-1\})$ is the finite transition relation. 
%
The $+1$ intuitively means that the head moves to the right;
$-1$, that the head moves to the left; $0$, that it stays at the
current position. Furthermore, transitions are labelled by
Boolean formulas over successors which determine whether the current
run branches off in a non-deterministic or a universal
fashion.

A \emph{run (tree)} $\gamma$ of $\T$ on an infinite word 
$w = a_0 a_1 \dots \in \Sigma^w$
from $n \in \mathbb{N}$ is a (possibly infinite) rooted tree whose vertices are labelled with elements
in $S \times \N$ and such that it satisfies the following
properties. 
The root of $\gamma$ is labelled
by $(s_{in}, n)$.
Moreover, for every vertex labelled by $(s, m)$ with
$k \in \N$ children labelled by $(s_1 ,n_1) , \dots, (s_k, n_k)$, there
is a transition $(s, \sigma, \beta) \in \Delta$ such that, the set $\{(s_1, n_1-m), \dots, (s_k, n_k-m)\} \subseteq S \times \{+1, 0, -1\}$ satisfies $\beta$. Further $\sigma = \first?$ implies $m=0$, and $\sigma \in \Sigma$ implies $a_m = \sigma$.

%
A run is \emph{accepting} if all of its infinite branches
contain infinitely many 
labels from $S_f
\times \mathbb{N}$.
%
%
%
%
The \emph{language of $\T$} is $L(\T) \defeq \{w \in \Sigma^{\omega}
  \mid \exists \text{ an accepting run of } \T \text{ on } w\text{ from } 0\}$.
The \emph{non-emptiness problem for A2As} asks,
given an A2A
$\T$ and $n \in \mathbb{N}$, whether
$L(\T) \not = \emptyset$.
\begin{proposition} [From \cite{Serre06}] \label{A2A} 
   Language emptiness for A2As is in \PSPACE.
\end{proposition}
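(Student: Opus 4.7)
My plan is to reduce the emptiness problem for the A2A $\T$ to the emptiness problem for a one-way nondeterministic B\"uchi word automaton (NBW) $\T'$ of size singly-exponential in $|\T|$, and then invoke the folklore logspace algorithm for NBW non-emptiness (guess a reachable accepting lasso). Since the states of $\T'$ admit a representation of size $\mathrm{poly}(|\T|)$, the resulting algorithm uses polynomial space in $|\T|$, matching the claimed \PSPACE upper bound.

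The states of $\T'$ would encode, for the current position $m$ of the input word, a \emph{summary} consisting of the set of pairs $(s,b) \in S \times \{0,1\}$ such that $\T$ admits an accepting sub-run tree rooted at $(s, m)$, with $b = 1$ recording whether that sub-run witnesses a state from $S_f$ while its branches stay at positions $\geq m$. Transitions of $\T'$, upon reading $a_m$, would nondeterministically guess the summary at position $m+1$ and locally verify consistency with the transition relation $\Delta$ of $\T$: for every $(s,b)$ in the current summary, some transition $(s, a_m, \beta) \in \Delta$ must be satisfied by a set of children whose $0$ and $-1$ moves are consistent with the current summary and whose $+1$ moves are consistent with the guessed next summary. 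Acceptance of $\T'$ would be enforced through a Miyano--Hayashi-style obligation tracker that certifies that every infinite branch fulfils its promise to visit $S_f$ infinitely often.

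The main obstacle is the correctness of the summary-based simulation in the presence of two-way moves: a $-1$ move taken at position $m+1$ forces sub-run behaviour at position $m$ to satisfy additional constraints that depend on what has already been guessed at $m$, creating circular dependencies between summaries at adjacent positions. Resolving these via a fixed-point characterisation of the valid summaries, so that the purely local consistency check used by $\T'$ truly captures global acceptance of $\T$, is the technical heart of the argument. This is precisely where I would appeal to the construction of Serre~\cite{Serre06}. Once the reduction is established, the single-exponential size bound on $|\T'|$ combined with the logspace emptiness test for NBW yields the desired \PSPACE upper bound on the emptiness problem for $\T$.
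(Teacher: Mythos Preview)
The paper does not give its own proof of this proposition; it simply imports the result from~\cite{Serre06} (and later, in the proof of \Cref{bounded-parameter}, invokes the $2^{\mathcal{O}(|\T|^2)}$ translation to nondeterministic B\"uchi automata from~\cite{vardi98,dk08} for the same purpose). Your high-level plan---simulate the A2A by a one-way nondeterministic B\"uchi automaton of single-exponential size and then test emptiness in space logarithmic in that automaton---is exactly the route underlying those references, so in spirit your proposal agrees with how the paper treats the result.

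One caution on the details: the state space you describe for $\T'$, a set of pairs $(s,b) \in S \times \{0,1\}$ at each position together with a Miyano--Hayashi obligation set, is the right device for removing \emph{alternation} but is too coarse by itself to eliminate \emph{two-wayness}. The standard constructions track, at each position, richer objects---profiles or strategies, essentially relations on $S$ (or on $S \times \{0,1\}$) describing how sub-runs that enter the position from one side eventually leave it on the other---and it is these profiles that compose associatively along the word and dissolve the circular dependency you correctly flag. This is why the blow-up is $2^{\mathcal{O}(|S|^2)}$ rather than $2^{\mathcal{O}(|S|)}$. Since you ultimately defer to Serre's construction for precisely this step, your proposal is consistent with the paper's treatment; but if you were to spell the argument out, the description of the states of $\T'$ would need to be revised accordingly.
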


In what follows, from a given OCAPT $\Au$ we will build an A2A $\T$
such that $\T$ accepts precisely those words which correspond to a
valuation $V$ of
$X$ under which all infinite runs satisfy the coB\"uchi condition. Hence,
the corresponding synthesis problem for $\Au$ reduces to checking
non-emptiness of $\T$.

\subsection{Transformation to alternating two-way automata}
Following~\cite{bqs19}, we encode a valuation $V : X \to \mathbb{N}$ as an
infinite \emph{parameter word} $w =  a_0 a_1 a_2 \dots$ over the alphabet
$\Sigma= X \cup \{\square\}$ such that $a_0= \square$ and, for every $x \in X$,
there is exactly one position $i \in \N$ such that $a_i=x$.
We write $w(i)$ to
denote its prefix $a_0 a_1 \dots a_i$ up to the letter $a_i$. By
$|w(i)|_{\square}$, we denote the number of occurrences of
$\square$ in $a_1 \dots a_i$. (Note that we ignore $a_0$.)
Then, a parameter word $w$ determines a valuation $V_w: x \mapsto |w(i)|_{\square}$
where $a_i = x$.
%
%
We denote the set of all parameter words over $X$ by $W_X$.

From a given OCAPT $\Au = (Q,T,\delta,X)$, a starting configuration $(q_0, 0)$ and a set of target states $F$, we will now construct an A2A $\T = (S,\Sigma,s_{in},\Delta,S_f)$ that accepts words $w \in W_X$ such that,
under the valuation $V= V_w$, all infinite runs from $(q_0, 0)$ visit $F$ only finitely many times.
\begin{proposition} \label{OCA to A2A}
    For all OCAPT $\Au$ there is an A2A $\T$ with $|\T| =
    |\Au|^{\mathcal{O}(1)}$ and $w \in L(\T)$ if and only if all infinite $V_w$-runs of
    $\Au$ starting from $(q_0,0)$ visit $F$ only finitely many times.
\end{proposition}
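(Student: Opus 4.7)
The strategy is to construct $\T$ by simulating $\Au$ on the parameter word $w$, using the head position as a faithful representation of the current counter value. I will maintain the invariant that when $\Au$ is in state $q$ with counter $c$, the head of $\T$ sits at position $p_c$, where $p_0 \defeq 0$ and, for $k \geq 1$, $p_k$ denotes the position of the $k$-th $\square$ in $a_1 a_2 \dots$ (well-defined since only the finitely many parameter letters are non-$\square$). Under this encoding, the OCAPT updates $+1$, $-1$, and $0$ translate into short scanning sub-routines that move the head right or left past parameter letters until the next $\square$ (or stay put, refusing $-1$ at position $0$ via $\first?$). The $=0$ test reduces to checking $\first?$. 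The parametric tests $=x$ and $\geq x$ are implemented by $O(1)$-state sub-automata that locate the (unique) occurrence of $x$ by scanning in one direction and inspect whether a $\square$ lies strictly between the head and $x$; a small case split on whether $x$ lies to the left or to the right of $p_c$ suffices, using the monotonicity of $c(\cdot)$ for $\geq x$.

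The states of $\T$ will consist of ``main'' states $(q, m)$ for $q \in Q$ and $m \in \{\mathrm{norm}, \mathrm{safe}\}$, together with $O(1)$ auxiliary states per transition of $\Au$ to host the sub-routines, giving $|\T| = |\Au|^{O(1)}$. At a main state $(q, m)$, the transition formula is a universal conjunction over the outgoing transitions of $q$ in $\Au$. For a transition carrying a test, the corresponding conjunct takes the disjunctive form ``the test fails along this branch'' \emph{or} ``the test succeeds and we proceed to $(q', m)$''; the former disjunct kills unavailable OCAPT transitions, the latter implements the successful case using the sub-routines, thus preserving the head-position invariant. For a transition with an update, the conjunct simply invokes the update sub-routine and jumps to $(q', m)$. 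In normal mode, an extra existential disjunction offers the prover the option to commit \emph{irrevocably} to safe mode, but only when the target state lies outside $F$; transitions in safe mode to states in $F$ are set to $\false$, so any branch that re-enters $F$ after committing to safe dies in finitely many steps. I set $S_f \defeq \{(q, \mathrm{safe}) : q \notin F\}$ and design the auxiliary states so that no Büchi ``gap'' is introduced during sub-routine execution.

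Correctness rests on two dual arguments. If all infinite $V_w$-runs of $\Au$ from $(q_0, 0)$ visit $F$ only finitely often, the prover builds an accepting run tree by universally branching over $\Au$'s transitions and, on each branch, existentially committing to safe mode after the last visit to $F$ on that branch; every infinite branch is then eventually confined to $S_f$, satisfying Büchi. Conversely, an infinite $V_w$-run visiting $F$ infinitely often induces a branch that can neither remain in normal mode (whose main states lie outside $S_f$) nor commit to safe mode (it would die upon the next return to $F$), so no accepting run tree can exist. A short induction on simulation length establishes the head-position-to-counter-value invariant and, together with the correctness of the sub-routines, identifies the infinite branches of $\T$'s run trees with the infinite $V_w$-runs of $\Au$ from $(q_0, 0)$.

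The main obstacle I anticipate is accommodating \emph{conditional} OCAPT transitions within a universal A2A: since A2A transition formulas are positive, a test that fails must still be handled without blocking its entire conjunct, which forces the ``test-fails-or-continue'' disjunctive form above and requires a positive sub-automaton for the ``test-fails'' verification (in particular for $=x$ and $\geq x$). A secondary technical point is the joint design of the irrevocable ``switch-to-safe'' commitment and the totality of safe-mode transitions on non-$F$ states: these together ensure that the Büchi acceptance of $\T$ is realisable on $w$ exactly when the coBüchi specification on $\Au$ holds under $V_w$.
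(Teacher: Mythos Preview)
Your construction mirrors the paper's almost exactly: two copies (normal/safe), head position encoding the counter via the count of $\square$'s, per-operation sub-automata with violation/validation/simulation branches, a universal conjunction over the outgoing $\Au$-transitions at each main state, and an irrevocable switch to safe with $S_f=\{(q,\mathrm{safe}):q\notin F\}$. One small omission is the sub-automaton $\subata{inp}$ that the paper conjoins at the root to certify that the input is a valid parameter word.

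There is, however, a genuine gap in the forward direction of correctness, and it is one your proposal shares with the paper's own argument. You assert that the prover commits to safe mode ``on each branch \ldots\ after the last visit to $F$ on that branch'', but the commitment is not a per-branch choice: once a node is in safe mode, the universal conjunction at that node still ranges over \emph{all} outgoing transitions of the underlying $\Au$-state, so every continuation through that node is in safe mode as well. Take $Q=\{q_0,q_F,q_1\}$, $F=\{q_F\}$, with $+1$-labelled transitions $q_0\to q_0$, $q_0\to q_F$, $q_F\to q_1$, $q_1\to q_1$. Every infinite run from $(q_0,0)$ visits $F$ at most once, so $w$ should be accepted. Yet at any node labelled $(q_0,\mathrm{safe})$ the conjunct for $q_0\to q_F$ evaluates to $\false$ (your version) or leads to the non-accepting sink $q''_F$ (the paper's), so the prover can never move the $q_0$-loop into safe mode; and if the loop stays in normal mode forever, the branch that always takes $q_0\to q_0$ never meets $S_f$. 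Hence no accepting run tree exists even though all $V_w$-runs are coB\"uchi. The safe-copy reduction of coB\"uchi to B\"uchi is correct for a single existentially chosen run (as in the construction of~\cite{bqs19} this section adapts) but does not survive universal branching. A clean repair is to drop the safe copy entirely and give $\T$ a coB\"uchi (equivalently, parity) acceptance condition whose marked set is $\{q':q\in F\}$; the \PSPACE{} bound for A2A emptiness from~\cite{Serre06} covers this case as well.
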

The construction is based on the A2A built in~\cite{bqs19}, although we make
more extensive use of the alternating semantics of the automaton. To capture the coB\"uchi condition, we simulate a \emph{safety copy} with the target states as ``non-accepting sink'' (states having a self-loop and no other outgoing transitions) inside $\T$. Simulated accepting runs of $\Au$ can ``choose'' to enter said safety copy once they are sure to never visit $F$ again. Hence, for every state $q$ in $\Au$, we have two copies of the state in $\T$: $q' \in S$ representing $q$ normally and $q'' \in S$ representing $q$ from the safety copy. 
Now the idea is to encode runs of $\Au$ as branches of run trees of $\T$ on parameter words $w$ by letting sub-trees $t$ whose root is labelled with $(q',i)$ or $(q'',i)$ correspond to
the configuration $(q,|w(i)|_\square)$ of $\Au$.
If $t$ is accepting, it will serve as a witness that all infinite runs of $\Au$ from $(q,|w(i)|_\square)$
satisfy the coB\"uchi condition.

We present the overview of the construction below with some intuitions. A detailed proof of~\Cref{OCA to A2A} is given in the long version of the paper.
 \begin{itemize}
    \item The constructed A2A $\T$ for the
        given $\Au$ is such that for every $q \in Q$, there are two copies $q', q'' \in S$ as mentioned earlier. We also introduce new states in $\T$ as required.
    
    \item The A2A includes a sub-A2A that verifies that the input word is a valid parameter word. For every $x_i$, a branch checks that it appears precisely once in the parameter word. 
    
    \item From a run sub-tree whose root is labelled with $(q',i)$ or $(q'',i)$, the A2A verifies that all runs of $\Au$ from $(q,|w(i)|_\square)$ visit $F$ only finitely many times.    
    To do this, for all transitions $\delta$ of the form $q \xrightarrow{op} r$ in $\Au$, we create a sub-A2A $\subata{\delta}$ using copies of sub-A2As. For each such transition, one of two cases should hold: either
    the transition cannot be simulated (because of a zero test or a decrement from zero), or the
    transition can indeed be simulated. For the former, we add a \emph{violation branch} to check that it is indeed the case; for the latter, a
    \emph{validation branch} checks the transition can be simulated and a \emph{simulation branch}
    reaches the next vertex with the updated counter value. Now if the root vertex is of the form $(q',i)$ then the \emph{simulation branch} could reach a vertex labelled with $r'$ or with $r''$ --- with the idea being that $\T$ can choose to move to the safety copy or to stay in the ``normal'' copy of $\Au$. If the root vertex is of the form $(q'',i)$, the simulation branch can only reach the vertex labelled with $r''$ with the updated counter value.
    \item We obtain the global A2A $\T$
        by connecting sub-A2As. To
        ensure that all runs of $\Au$ are 
        simulated, we have the global transition 
        relation $\Delta$ be a conjunction of that of the sub-A2As which
        start at the same state $q \in \{p',p''\}$ for some $p \in Q$. For instance, let
        $\delta_1 = (q,op_1,q_1)$ and $\delta_2=(q,op_2,q_2)$
        be transitions of $\Au$. The constructed sub-A2As $\subata{\delta_1},
        \subata{\delta_2}$
        will contain transitions $(q,\square,\beta_1) \in \Delta_1,
        (q,\square,\beta_2) \in \Delta_2$ respectively.
        In $\T$, we instead have $(q,\square,\beta_1 \land \beta_2) \in \Delta$.
    
    \item Finally, the accepting states are chosen as follows: For every $q \in Q\setminus F$, we set $q''$ as accepting in $\T$. The idea is that if a run in $\Au$ satisfies the coB\"uchi condition then, after some point, it stops visiting target states. In $\T$, the simulated run can choose to move to the safety copy at that point and loop inside it forever thus becoming an accepting branch. On the other hand, if a run does not satisfy the condition, its simulated version cannot stay within the safety copy. (Rather, it will reach the non-accepting sink states.) Also, the violation and the validation branches ensure that the operations along the runs have been simulated properly inside $\T$. It follows that $\T$ accepts precisely those words whose run-tree contains a simulation branch where states from $F$ have been visited only finitely
    many times.
\end{itemize}


\subsection{An upper bound for reachability synthesis of OCAPT}\label{sec_ub_reachsynth}

Following~\cite{bqs19}, we now sketch a guess-and-check procedure using the fact
that \Cref{OCA to A2A} implies a sufficient bound
on valuations satisfying the reachability synthesis problem. Recall that, the reachability synthesis problem asks whether all infinite runs reach a target state.

\begin{lemma}[Adapted from~{\cite[Lemma 3.5]{bqs19}}]\label{bounded-parameter}
    If there is a valuation $V$ such that all infinite $V$-runs of $\Au$ reach
    $F$, there is a valuation $V'$ such that $V'(x)
    =\exp(|\Au|^{\mathcal{O}(1)})$ for all $x \in X$ and all infinite
    $V'$-runs of $\Au$ reach $F$.
\end{lemma}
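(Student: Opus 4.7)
The plan is to adapt the A2A construction from \Cref{OCA to A2A} so that the resulting automaton captures the reachability synthesis property, and then to invoke a small-model property for alternating two-way automata to extract a short accepted parameter word. From such a word, the induced valuation $V'$ inherits the desired exponential bound.

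Concretely, the first step is to construct an A2A $\T$ of polynomial size in $|\Au|$ accepting precisely those parameter words $w$ such that every infinite $V_w$-run of $\Au$ from $(q_0,0)$ eventually reaches $F$. The simulation, validation, and violation sub-A2As are reused almost verbatim from \Cref{OCA to A2A}; only the acceptance component changes. Rather than using a safety copy, we keep two copies $q'$ and $q''$ of each state $q\in Q$, where $q''$ records that $F$ has already been visited along the current branch. Simulation branches are forced to switch to the $q''$-copy upon entering a state of $F$, and $S_f$ comprises all $q''$ states together with the terminal sinks used to certify success of the validation and violation branches. Duplicating the state space is only a polynomial blow-up, so $|\T|=|\Au|^{\mathcal{O}(1)}$.

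The hypothesis then gives a parameter word $w_V$ in $L(\T)$, so $L(\T)\neq\emptyset$. Applying the standard exponential translation of A2A over infinite words into nondeterministic B\"uchi automata (Miyano--Hayashi combined with the two-way simulation) yields an NBA $\mathcal{B}$ of size $\exp(|\Au|^{\mathcal{O}(1)})$ with $L(\mathcal{B})=L(\T)$. The classical small-model property for NBA provides an ultimately periodic word $u v^\omega \in L(\mathcal{B})$ with $|uv|\leq |\mathcal{B}|^{\mathcal{O}(1)}=\exp(|\Au|^{\mathcal{O}(1)})$. Since $\T$ enforces that accepted words are valid parameter words, $uv^\omega$ is one; moreover each $x\in X$ appears exactly once, so no $x$ can lie inside the repeating factor $v$ (otherwise it would occur infinitely often), and hence all $x$-positions lie within the prefix $u$. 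The valuation $V'$ induced by $uv^\omega$ therefore satisfies $V'(x)\leq |uv|=\exp(|\Au|^{\mathcal{O}(1)})$ for every $x\in X$, and by the semantics of $\T$ every infinite $V'$-run of $\Au$ reaches $F$.

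The main obstacle I anticipate is the careful re-design of the acceptance component when moving from coB\"uchi to reachability: one must verify that the ``$F$-visited'' flag is correctly propagated across universal branchings, that simulation branches can only be accepting once the flag has been raised, and that the validation and violation sub-A2As still certify faithful simulation of $\Au$'s transitions without compromising the reachability reading. Once this adaptation is in place, the remaining A2A-to-NBA translation and the small-model extraction of $uv^\omega$ are invoked as black-box results and the claimed bound on $V'$ follows immediately.
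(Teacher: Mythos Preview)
Your proposal is correct and follows essentially the same approach as the paper: build a polynomial-size A2A whose language is the set of parameter words witnessing the reachability synthesis property, translate it to an exponential-size nondeterministic B\"uchi automaton, and extract a short lasso word whose prefix bounds the valuation. The paper's proof is terser---it invokes \Cref{OCA to A2A} directly (implicitly adapting the acceptance component from coB\"uchi to reachability, which you spell out explicitly) and observes that the lasso word has the form $u\,\square^\omega$ with $|u|\le|\B|$, which is exactly your argument that no parameter symbol can occur in the periodic part $v$.
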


It remains to give an algorithm to verify that in the resulting non-parametric OCA (after substituting
parameters with their values), all infinite runs from $(q_0,0)$ reach $F$.

\begin{proposition} \label{prop_synthreach_nopara}
    Checking whether all infinite runs from $(q_0,0)$ reach a target state in a  non-parametric one-counter automata is
    \coNP-complete.
\end{proposition}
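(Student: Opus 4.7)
My plan is to prove the \coNP upper bound and \coNP-hardness separately.

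For membership in \coNP, I would show that the complement --- does there exist an infinite run from $(q_0,0)$ that never visits $F$? --- is in \NP. Deleting every transition targeting $F$ yields a sub-automaton $\Au'$, and infinite runs of $\Au$ avoiding $F$ are precisely infinite runs of $\Au'$ from $(q_0,0)$. A standard pumping argument for one-counter automata shows that such a run exists iff $\Au'$ admits a \emph{lasso} witness: a finite prefix from $(q_0,0)$ to some configuration $(q,c)$, followed by a cycle at $q$ that is traversable from $(q,c)$ indefinitely (hence of non-negative effective weight and compatible with whatever zero or lower-bound tests occur in it). I would non-deterministically guess $q$, a value $c$ written in polynomially many bits, a polynomial-size reachability certificate for $(q_0,0) \to^* (q,c)$ in $\Au'$ in the sense of~\Cref{pro:certificates}, and a polynomial-size description of the repeating cycle; verification then reduces to a polynomial-time computation.

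For \coNP-hardness, I would reduce from the complement of reachability in succinct one-counter automata, which is \coNP-hard by~\cite{hkow09}. Given an instance $(\B,(q_s,0),(q_t,0))$ of non-reachability, I would build a non-parametric OCA $\Au'$ that embeds $\B$, transitions from $(q_t,0)$ to a fresh state $q_{\text{safe}} \notin F$ carrying only a zero-update self-loop, and routes every other behaviour into a sink $q_F \in F$ (also with a self-loop, to preserve infinite runs); the binary-encoded updates of $\B$ would be simulated by small gadgets that combine $\pm 1$ self-loops with equality tests against the update's constant. Then every infinite run of $\Au'$ visits $F$ iff $(q_t,0)$ is unreachable in $\B$.

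The main obstacle is the small-model argument underpinning the upper bound: binary-encoded test constants can force exponentially large counter values, so a direct pigeonhole on configurations does not suffice. I would address this via an analysis analogous to~\Cref{cor:bil-small-model} and the techniques of~\cite{hkow09}, exploiting that once $c$ exceeds every lower-bound threshold relevant to the cycle, the residual dynamics are captured by an existential PA formula whose solutions admit polynomial-bit-size representatives. A secondary hurdle is engineering the simulation gadgets in the reduction so that no spurious infinite runs are introduced, which can be handled by standard ``commit-and-verify'' patterns for unary/binary counter simulation.
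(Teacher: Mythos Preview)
Your upper-bound plan is essentially the paper's: show the complementary existential question is in \NP{} by exhibiting a lasso and invoking the \NP{} reachability algorithm of~\cite{hkow09}. The paper makes the small-model step you defer concrete via \Cref{property_of_inf_runs}: any $F$-avoiding infinite run yields one of shape $\pi_0\cdot\pi_1^\omega$ where $\pi_1$ either begins with a zero test or is a \emph{simple} zero-test-free non-negative cycle. In the latter case one may assume $\pi_0$ contains no simple pumpable cycle, whence every cycle in $\pi_0$ has negative effect or passes a zero test, so the counter along $\pi_0$ is bounded by $W_{\max}$ (the sum of all positive update constants in $\Au$) and along $\pi_0\pi_1$ by $2W_{\max}$. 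That is the polynomial-bitsize bound on the guessed configuration; your appeal to \Cref{cor:bil-small-model} is misplaced, since no BIL sentence is in play here.

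Your lower-bound reduction has a genuine gap. If $\B$ itself admits infinite runs that never reach $(q_t,0)$, those runs persist in your $\Au'$ and never visit $F$; adding optional edges into $q_F$ does not force them there, so ``every infinite run of $\Au'$ visits $F$'' can fail even when $(q_t,0)$ is unreachable in $\B$. There is no evident way to ``route every other behaviour into a sink'' without already deciding the very problem you are reducing from. The paper instead reduces from the complement of \textsc{SubsetSum}: an acyclic gadget non-deterministically selects a subset, accumulates its sum on the counter, subtracts the target $t$, and a zero test then branches into a safe self-loop $q_s\notin F$ versus a self-loop at $q_f\in F$. Thus the only infinite runs end in exactly one of the two sinks, and all of them visit $F$ iff no subset sums to $t$. (This also clarifies that the target model is succinct OCA with binary updates, so your unary-simulation gadgets are unnecessary---and, as sketched, a $\pm 1$ self-loop with an equality test does not faithfully simulate a fixed binary update anyway.)
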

Before proving the claim above, we first recall a useful lemma from~\cite{lechner15}.

A \emph{path} $\pi= q_0 q_{1} \dots q_{n}$ in $G_\Au$
is a \emph{cycle} if $q_0=q_{n}$. We say the cycle is \emph{simple} if no state (besides $q_0$)
is repeated.
A cycle \emph{starts from a zero test}
if $\delta(q_0,q_1)$ is ``$=0$''. A \emph{zero-test-free cycle} is a cycle where no $\delta(q_i,q_{i+1})$ is a zero test.
We define a \emph{pumpable cycle} as being a zero-test-free cycle such that for all runs 
$\rho= (q_0, c_0) \dots (q_{n},c_{n})$ lifted from $\pi$ we have $c_{n} \geq c_0$, i.e.,
the effect of the cycle is non-negative.

\begin{lemma}[From~\cite{lechner15}] \label{property_of_inf_runs}
    Let $\Au$ be a SOCA with an infinite run that does not reach $F$.
    Then, there is an infinite run of $\Au$ which does not reach $F$ such that it induces a path
    $\pi_0 \cdot \pi_1^{\omega}$, where $\pi_1$ either starts from a zero test or
    it is a simple pumpable cycle.  
\end{lemma}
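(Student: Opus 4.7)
The plan is to case-split on whether the given $F$-avoiding infinite run $\rho$ of $\Au$ uses zero-test transitions infinitely often or only finitely often, and in each case extract from $\rho$ a lasso-shaped $F$-avoiding infinite run $\pi_0 \cdot \pi_1^\omega$.

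If some zero-test transition $q \xrightarrow{=0} q'$ fires infinitely often in $\rho$, the configuration $(q,0)$ recurs infinitely often because the test forces the counter to vanish. I pick two consecutive times $t_1 < t_2$ at which this transition fires, let $\pi_0$ be the induced path prefix up to $t_1$, and let $\pi_1$ be the control-graph cycle traced between $t_1$ and $t_2$. By construction $\pi_1$ begins with the zero-test transition and both of its endpoints correspond to $(q,0)$, so iterating $\pi_1$ gives an infinite run with an identical counter profile on each iteration and with visited states a subset of those of $\rho$, hence $F$-avoiding.

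Otherwise, fix a prefix of $\rho$ after which no zero test is used and denote the remaining zero-test-free suffix by $\rho_1$. I split further on whether the counter stays bounded along $\rho_1$ or not. If bounded, the set of configurations occurring in $\rho_1$ is finite, so some $(q,c)$ recurs infinitely often and each walk $W$ between two consecutive returns to $(q,c)$ is a zero-test-free closed walk in the control graph with $\weight(W) = 0$. If unbounded, by pigeonhole some recurrent state has unbounded counter at its visits, and passing to a strictly increasing sub-sequence of counter values there yields analogous closed walks $W$ with $\weight(W) > 0$. In both sub-cases, decomposing each such walk into simple control-graph cycles (whose weights sum to $\weight(W) \ge 0$) and applying pigeonhole over the finitely many simple cycles of the control graph combined with the infinitely many available $W$'s isolates a simple pumpable cycle $\pi_1$ of non-negative weight occurring as a contiguous sub-walk of $\rho_1$. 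Letting $(r,c_r)$ be the entry configuration of one such occurrence and $\pi_0$ be the induced path prefix of $\rho$ up to that occurrence, the first iteration of $\pi_1$ is witnessed by $\rho_1$ itself and so stays counter-positive; since $\weight(\pi_1) \ge 0$, each subsequent iteration begins from counter value at least $c_r$ and also stays counter-positive, producing the desired lasso.

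The main obstacle is the zero-test-free case: one must extract a simple control-graph cycle that is both pumpable and iterable from a configuration actually occurring in $\rho$. The combinatorial decomposition into simple cycles is standard, and pigeonhole against the counter trajectory of $\rho_1$ supplies a non-negative-weight candidate; the delicate point is ensuring that some recurrence of this candidate lines up with a contiguous sub-walk of $\rho_1$ whose entry counter suffices for iteration, and this uses the infinitely-many-available-walks aspect of the dichotomy.
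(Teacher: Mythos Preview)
Your approach mirrors the paper's: split on whether zero tests recur, and in the zero-test-free suffix locate a non-negative closed walk from which to extract a simple pumpable cycle. Your Case~1 is in fact cleaner than the paper's, since picking two firings of the \emph{same} zero-test transition forces the counter to be $0$ at both endpoints and makes the cycle genuinely iterable; the bounded/unbounded sub-split in Case~2 is extra scaffolding the paper does without.

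The real gap is the contiguity claim in Case~2. You assert that decomposing each non-negative closed walk $W$ into simple cycles and applying pigeonhole ``isolates a simple pumpable cycle $\pi_1$ of non-negative weight occurring as a contiguous sub-walk of $\rho_1$,'' and then use contiguity to argue that the first iteration is witnessed by $\rho_1$ itself. But the stack-based decomposition of a closed walk into simple cycles does \emph{not} produce only contiguous sub-walks: for $W=abcba$ with edge weights $+3,-2,+1,-2$ the total weight is $0$, yet the only simple contiguous sub-cycle is $bcb$ of weight $-1$; the non-negative piece $aba$ of the decomposition is not contiguous in $W$. Your pigeonhole over infinitely many $W$'s does not explain how contiguity is recovered, and the paper, for what it is worth, hand-waves the identical step with ``it is easy to see.'' A clean fix sidesteps contiguity altogether: once you have \emph{any} simple cycle $C$ of non-negative weight built from edges traversed by $\rho_1$ (the decomposition does give one, since otherwise every extracted simple cycle has weight $\le -1$ and the prefix sums diverge to $-\infty$), rotate $C$ to start at a vertex where its running minimum is attained; from that vertex all prefix sums along $C$ are non-negative, so any prefix of $\rho$ reaching that vertex, followed by $C^\omega$, is already a valid $F$-avoiding run without ever appealing to $\rho_1$ witnessing the first lap.
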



%

\begin{proof}[Sketch of proof of~\Cref{prop_synthreach_nopara}]
  We want to check whether all infinite runs starting from $(q_0,0)$ reach
  $F$. \Cref{property_of_inf_runs} shows two conditions, one of which must
  hold if there is an infinite run that does not reach $F$. Note that both
  conditions are in fact reachability properties: a path to a cycle that
  starts from a zero test or to a simple pumpable cycle.

  For the first condition, making the reachability-query instances concrete
  requires configuration a $(q,0)$ and a state $q'$ such that $\delta(q,q')$
  is a zero test. Both can be guessed and stored in polynomial time and space.
  For the other condition, we can assume that $\pi_0$ does not have any simple
  pumpable cycle. It follows that every cycle in $\pi_0$ has a zero test or
  has a negative effect. Let $W_{\textrm{max}}$ be the sum of all the positive
  updates in $\Au$. Note that the counter value cannot exceed
  $W_{\mathrm{max}}$ along any run lifted from $\pi_0$ starting from
  $(q_0,0)$.  Further, since $\pi_1$ is a simple cycle the same holds for $2W_{\mathrm{max}}$
  for runs lifted from $\pi_0\pi_1$. Hence, we can guess and store in
  polynomial time and space the two configurations $(q,c)$ and $(q,c')$
  required to make the reachability-query instances concrete.

  Since the reachability problem for non-parametric SOCAP is in
  \NP~\cite{hkow09}, we can guess which condition will hold and guess the
  polynomial-time verifiable certificates.  This implies the problem is in
  \coNP.
	
    For the lower bound, one can easily give a reduction from the complement of the
    \textsc{SubsetSum} problem, which is \NP-complete~\cite{gj79}. The idea is
    similar to reductions used in the literature to prove \NP-hardness for
    reachability in SOCAP. In the long version of the paper, the reduction is given in full detail.
\end{proof}


\section{Conclusion}
We have clarified the decidability status of synthesis problems for
OCA with parameters and shown that, for
several fixed $\omega$-regular properties, they are in \NTWOEXP. If the
parameters only appear on tests, then we further showed that those synthesis
problems are in \PSPACE. Whether our new upper bounds are tight remains an
open problem: neither our \coNP-hardness result for the reachability
synthesis problem nor the \PSPACE{} and $\NP^\NP$ hardness results
known~\cite{sc85,lechner15,lechner16} for other synthesis problems (see
\Cref{tab:complexity}) match them.

We 
believe the BIL fragment will find uses beyond the synthesis
problems for OCA with parameters: e.g. it might imply decidability of
the software-verification problems that motivated the study of \aerpadplus
in~\cite{bi05}, or larger classes of quadratic string equations than the ones
solvable by reduction to \epad~\cite{lm18}. While we have shown BIL is
decidable in \NTWOEXP,
the best
known lower bound is the trivial
\coNP-hardness that follows from encoding the complement of the \textsc{SubsetSum}
problem. (Note that BIL does not syntactically include the $\Pi_1$-fragment of PA
so it does not inherit hardness from the results in~\cite{haase14}.) Additionally, it
would be interesting to reduce validity of BIL sentences
to a synthesis problem. Following~\cite{hkow09}, one can easily
establish a reduction to this effect for sentences of the form:
\(
  \forall \vec{x} \exists \vec{y}
  \bigvee_{i \in I} f_i(\vec{x}) \divs g(\vec{x},\vec{y}) \land
  f_i(\vec{x}) > 0
  \land \varphi_i(\vec{x})
  \land \vec{y} \geq \vec{0}
\)
but full BIL still evades us.


\bibliography{refs}

\clearpage
\onecolumn
\appendix
\section{Lechner's trick}

\begin{proof}[Proof of \Cref{pro:lechner}]
    Consider a sentence $\Phi$ in \aerpad:
    \[
      \forall x_1 \dots \forall x_n \exists y_1 \dots \exists y_m
      \varphi(\vec{x},\vec{y}).
    \]
    We observe $\Phi$ can always be brought into
    negation normal form so that negations are applied only to
    predicates~\cite{rv01}. Hence, it suffices to argue that we can remove
    negated divisibility predicates while staying within \aerpad.
    
    The claim follows from the identity below since the newly
    introduced variables $x',x''$ are both existentially quantified and only
    appear on the right-hand side of divisibility constraints.
    For all $a,b \in \Z$ we have the following. 
    \begin{align*}
        \lnot(a \divs b) \iff (a = 0 \land b \neq 0) \lor {}  \exists x' \exists x''
        \Big( &\left((b = x' + x'') \land (a \divs x') \land (0 < x'' < a)\right) \lor {}\\
        & \left((b = -x' - x'') \land (a \divs x') \land (0 < x'' < -a)\right)\Big)
    \end{align*}
    In other words, if $a = 0$ and $b \neq 0$ then $\lnot(a \divs b)$.
    Further, if $a \neq 0$, there are integers
    $q,r \in \Z$ such that $b = qa + r$ and $0 < r < |a|$ if and only if
    $\lnot (a \divs b)$.
\end{proof}

\section{Undecidability of \texorpdfstring{\earpad}{EARPAD}}
For completeness, we give a proof of Proposition~\ref{pro:undec-earpad} below.
\begin{proof}[Proof of Proposition~\ref{pro:undec-earpad}]
    We will show the language \earpad of all sentences of the form $\lnot
    \varphi$ such that $\varphi \in$ \aerpad is undecidable.

    We begin by recalling the definition of the $\lcm(\cdot,\cdot,\cdot)$ 
    predicate.
    A common multiple of $a,b \in \mathbb{Z}$ is an integer $m \in \mathbb{Z}$
    such that
    $a \divs m$ and $b \divs m$. Their least common multiple $m$ is minimal, 
    that is $m \divs m'$ for all common multiples $m'$. This leads to the 
    following definition of $\lcm(a,b,m)$ for all $a,b,m \in \Z$.
    \[
        \lcm(a,b,m) \iff \forall m'
        \left(( a \divs m') \land (b \divs m'))
        \longleftrightarrow (m \divs m')\right)
    \]
    Observe that the universally-quantified $m'$ appears only on the
    right-hand side of 
    the divisibility constraints. We thus have that \earpad can be assumed to
    include
    a least-common-multiple predicate.\footnote{We remark that this
    definition of the least
    common multiple is oblivious to the sign of $m$, e.g. $\lcm(2,3,-6)$
    is true and
    $\lcm(a,b,m) \iff \lcm(a,b,-m)$ in general.
    This is not a problem since we can add $m \geq 0$ if desired.}
    For convenience, we will write $\lcm(a,b) = m$ instead of
    $\lcm(a,b,m)$.

    Now, once we have defined the $\lcm(\cdot,\cdot,\cdot)$ predicate, we can define the 
    perfect square relation using the identity:
    \[
        x>0 \land   x^2=y \iff \lcm(x,x+1)= y+x
    \]
    and multiplication via:
    \[
        4xy = (x+y)^2 - (x-y)^2.
    \]
    Observe that we are now able to state
    Diophantine equations. Undecidability thus follows from the MRDP theorem~\cite{matijasevic70} which states that satisfiability for such equations (i.e. Hilbert's 10th problem) is undecidable.
\end{proof}

\section{Example where decidability algorithm for \texorpdfstring{\aerpad}{AERPAD} fails}

Here we provide some insight where the attempt of Bozga and Iosif \cite{bi05} fails to show that \aerpadplus is decidable. First note that every \aerpadplus sentence $\varphi$ is of the form where $\Phi= \forall \vec{x} \varphi(\vec{x})$ where $\varphi(\vec{x})=  \exists y_1 \dots \exists y_m \bigvee_{i \in I}
      \bigwedge_{j \in J_i} \left(f_j(\vec{x}) \divs g_j(\vec{x},\vec{y})\right)
      \land \psi_i(\vec{x},\vec{y})$, where $\psi_i$ are Presburger formulas with free variables $\vec{x}$ and $\vec{y}$. In their proposed algorithm the first step claims that by substituting and renaming the existentially quantified variables, we can reduce $\varphi$ to the following DNF-BIL form:
      \begin{equation*}
      \exists y_1 \dots \exists y_m \bigvee_{i \in I}
      \bigwedge_{j \in J_i} \left(f_j(\vec{x}) \divs g_j(\vec{x},\vec{y})\right)
      \land \psi'_i(\vec{x})
      \end{equation*}
      Intuitively their algorithm proposes that we can remove all the existentially quantified variables occurring outside of the divisibility predicates. Now, we take an example: we start with a \aerpadplus formula and follow their proposed steps and show that it is not true. 

\begin{center}
\scalebox{0.9}{
\begin {tikzpicture}[-latex ,auto ,node distance =1 cm]

\node (S) [] {$\begin{aligned} &\exists x_1 \exists x_2  (y \divs 5x_1+4x_2)\\  &\land (5x_1+ 6x_2 -y \leq 0)\\ &\land (5x_1+ 4x_2 -y \leq 0)\\ &\land (3y-2x_2 \leq 0)\end{aligned}$};
\node (A) [right= 6cm of S] {$\begin{aligned} &\exists \vec{x} \exists \vec{z} (y \divs 5x_1+4x_2)\\ &\land (5x_1+ 6x_2 -y +z_1=0)\\ &\land (5x_1+ 4x_2 -y + z_2= 0)\\ &\land (3y-2x_2+z_3=0)\\ &\land (\vec{z} \geq 0)\end{aligned}$};
\node (B) [below= 3cm of A] {$\begin{aligned}&\exists x_2 \exists \vec{z} (y \divs 2y-z_1-2x_2)\\ &\land (y-2x_2-z_1+z_2=0)\\ &\land (3y-2x_2+z_3=0)\\ &\land (\vec{z} \geq 0)\end{aligned}$};
\node (C) [left= 5cm of B] {$\begin{aligned}&\exists \vec{z} (y \divs y-z_2)\\ &\land (2y+z_1-z_2+z_3=0)\\ &\land (\vec{z} \geq 0) \end{aligned}$};

\path (S) edge[] node[sloped,above=0.2cm]{$\text{turning inequalities to equalities}$}(A);
\path (A) edge[] node[left=0.2cm]{$\text{removing }x_1$}(B);
\path (B) edge[] node[sloped,below=0.2cm]{$\text{removing }x_2$}(C);
\end{tikzpicture}
}
\end{center}

  Now the equation $(2y+z_1-z_2+z_3=0)$ cannot be reduced anymore as we cannot remove any of the $\vec{z}$ variables and hence in the end we get existentially quantified variables outside divisibility. \hfill $\blacksquare$
 
\section{Reduction from all the Synthesis Problems to the coB\"uchi one}
\begin{proof}[Proof of \Cref{lem_synth_red}]
Here we give the polynomial time reduction from the reachability, safety and B\"uchi synthesis problems to the coB\"uchi synthesis problem.

Consider a SOCAP $\Au = (Q,T, \delta, X)$, an initial configuration $(q_0, c_0)$ and the set of target states $F$. We construct an automaton $\B = (Q', T', \delta', X)$ which is disjoint union of two copies of $\Au$: $\B \defeq \Au_1 \uplus \Au_2$. We denote the states of $\Au_1$ as $Q_1$ and states of $\Au_2$ as $Q_2$ and the set of target states in $\B$ as $F'$. We take the initial configuration as $(q^{\mathrm{in}}_1, c_0)$ in $\B$ where, $q^{\mathrm{in}}_1 \in Q_1$ is the copy of $q_0$ in $\Au_1$. We ``force'' a move from the first copy to the second one via the target states (only) and there is no way to come back to the first copy once we move to the second one. Formally, for every transition $(u,v) \in T$ such that $u \notin F$, we have $(u_1,v_1), (u_2,v_2) \in T'$ where $u_i,v_i \in Q_i$. For the transitions $(s,t) \in T$ such that $s \in F$, we have $(s_1, t_2), (s_2,t_2) \in T'$ where $s_i,t_i \in Q_i$. For all states $q \in Q_1$ and $q' \in Q_2$, $q \in F'$ and $q' \notin F'$.

Note that, for all valuations, there is an infinite run in $\Au$ that visits a target state if and only if in $\B$ the corresponding run moves to $\Au_2$ (and never comes back to the first copy) if and only if it visits target states only finitely many times. Hence, the answer to the safety synthesis problem in $\Au$ is false if and only if the answer to the B\"uchi synthesis is false in $\B$. For the reduction from \emph{reachability synthesis to B\"uchi}, we can take the exact same construction of $\mathcal{B}$ reversing the target and the non-target states in $\mathcal{B}$. 

The construction of the automaton $\mathcal{B}$ for the reduction from \emph{B\"uchi  synthesis to coB\"uchi} is a bit different from the previous one. Here also, we construct $\B$ as a disjoint union of two copies of $\Au$, but we remove the states in $F$ from the copy $\Au_2$. Also, for every $(u,v) \in T$, we have  $(u_1,v_1), (u_1,v_2), (u_2,v_2) \in T'$. (Note that if $v \in F$ then $(u_1,v_2), (u_2,v_2) \notin T'$ as $v_2$ does not exist.) We set $F' \defeq Q_2$. Now, for all valuations there is an infinite run $\rho$ in $\Au$ that visits $F$ only finitely many times if and only if there is an infinite run in $\B$ that follows $\rho$ within $\Au_1$ until it last visits a state from $F$ and then moves to $\Au_2$ so that it visits states from $F'$ infinitely often. Hence, the answer to the B\"uchi synthesis problem in $\Au$ is negative if and only if it is negative for the coB\"uchi problem in $\B$.
\end{proof}

\section{Putting everything together: encoding reachability into BIL}

\begin{proof}[Proof of \Cref{prop_reach_pbil}]
We first define the formula $\reachnt^{(q,q')}(\vec{x},a,b)$ that is
satisfied by a valuation $V$ of $X \cup \{a,b\}$ iff there is a $V$-run from
$(q, V(a))$ to $(q', V(b))$ without any zero-test transitions. By
\Cref{pro:certificates}, there is such a $V$-run if and only if there is a
$V$-run $\rho$ from $(q, V(a))$ to $(q',
V(b))$ without zero-test transitions and such that:
\begin{itemize}
    \item there exists a configuration $(u,k)$ such that, there is a run
      $\rho_1$ from $(q, V(a))$ to $(u,k)$ that has a type-1 reachability
      certificate;
    \item there exists a configuration $(v,k')$ such that, there is a run
      $\rho_2$ from $(u,k)$ to $(v,k')$ that has a type-3 reachability
      certificate;
    \item there is a run $\rho_3$ from $(v,k')$ to $(q', V(b))$ that has a
      type-2 reachability certificate; and
    \item $\rho = \rho_1\rho_2\rho_3$.
\end{itemize}

We will construct formulas for the sub-automaton obtained by removing from $\Au$ all zero-test transitions.
Now, using \Cref{lem-type1} the
first and the third items above can be encoded as $\exists k
\Psi_1^{(q,u)}(\vec{x},a,k)$ and $ \exists k'\Psi_2^{(v,q')}(\vec{x},k',b)$ such
that the valuation $V$ satisfies them. Also, using \Cref{lem-type1}, the second
item can be encoded as $\exists k \exists k' \Psi_3^{(u,v)}(\vec{x},k,k')$.
Combining all of them, $\reachnt^{(q,q')}(\vec{x},a,b)$ becomes $\exists k
\exists k' \Psi_{\mathrm{reach-nt}}(\vec{x},k,k',a,b)$ where
$\Psi_{\mathrm{reach-nt}}$ looks as
follows.\footnote{Note that in the proof of \Cref{thm-synthreach} we could
  also use this simpler implementation of $\reachnt$. We opted for using one
implemented using $\reach$ to keep the argument self-contained.}
\begin{align*}
\bigvee_{u,v \in Q} \left( 
\Psi_1^{(q,u)}(\vec{x},a,k)
\land \Psi_3^{(u,v)}(\vec{x},k,k')
\land
\Psi_2^{(v,q')}(\vec{x},k',b) \right)
\end{align*}

The formula $\reach^{(q,q')}(\vec{x},a,b)$ expressing general reachability can
then be defined by choosing an ordering on the zero tests.  Formally, let $ZT$
denote the set of all zero-test transitions. We write
$1,\dots,m \in ZT$ to denote an enumeration $(p_1,q_1),\dots,(p_m,q_m)$  of a subset of zero-test
transitions. We define $\reach^{(q,q')}(\vec{x},a,b)$ as:
\[
  \bigvee_{1,\dots,m \in ZT}
  \exists k_0 \dots \exists k_{m+1} \exists k'_0 \dots \exists k'_{m+1}
  \Phi(\vec{x},\vec{k},\vec{k}')
\]
where $\Phi$ is given by:
\[
  \Psi_{\mathrm{reach-nt}}^{(q,p_1)}(\vec{x},k_0,k'_0,a,0)
  \land
  \Psi_{\mathrm{reach-nt}}^{(q_m,q')}(\vec{x},k_{m+1},k'_{m+1},0,b)
  \land \bigwedge_{i=1}^{m-1} \Psi_{\mathrm{reach-nt}}^{(q_i,
  p_{i+1})} (\vec{x},k_i,k'_i,0,0)
\]
In words: for each enumeration of zero-test transitions we take the
conjunction of the intermediate $\reachnt$ formulas as well as $\reachnt$
formulas from the initial configuration and to the final one.

Note that $\reach$ has the required form as every $\Psi$ subformula is in
the required form too. Indeed, the existentially quantified variables in each
$\Psi$ only appear in (the right-hand side of) divisibility constraints and
every divisibility constraint $f(\vec{x}) \divs g(\vec{x},\vec{z})$ appears
conjoined with $f(\vec{x}) > 0$. Also, we have only
introduced an exponential number of disjunctions (over the enumeration of
subsets of zero-test transitions), $2|T| + 4$ new variables (since $m \leq
|T|$) and have not changed the bitsize length of constants after the
construction of the $\Psi$ subformulas. Thus, the bitsize of constants and the
number of variables in $\reach$ remain polynomial and $|\reach|$ is at most
exponential in $|\Au|$.
\end{proof}

  \section{Detailed construction of A2A from OCAPT}
  
  Here we give the detailed constructions of all the sub-A2A for each operation of the OCAPT. The general idea of the construction is given in the \Cref{fig:A2A_subA2A}. In this section, while describing A2A constructions from transitions of the form $(q_i,op,q_j)$, we will represent the $\simu$ branches as $s_i \rightsquigarrow s_j$ for readability, where $s_i$ (similarly, $s_j$) represents $q'_i$ or $q''_i$ corresponding to the normal or the safety copy as described earlier. 

\begin{figure}[ht]
	\centering
	\subfloat[%
	Sub-A2A structure in the normal copy]{
		\scalebox{0.7}{
		\begin {tikzpicture}[-latex ,auto ,node distance =1 cm,
		every state/.style={minimum size=7mm,inner sep=0pt}]
		
		\node[state] (Q) [] {$q'_i$};
		\node (R) [right=0.5cm of Q] {$\lor$};
		\node (P)[right=0.5cm of R]{$\land$};
		\node (vio) [below right= 2cm of R] {};
		\node (T)[right= 1cm of P]{$\lor$};
		\node [state] (U) [above right=1cm of T] {$q'_j$};
		\node [state](V) [below right=1cm of T] {$q''_j$};
		\node (val)[below right= 2cm of P]{};
		\node (U')[right= 0.05cm of U]{(normal copy)};
		\node (V')[right= 0.05cm of V]{(safety copy)};	
		\node (T')[right= 0.2cm of T]{$\simu$};
		
		\path (Q) edge [] node[] {} (R);
		\path (R) edge [] node[] {} (P);
		\path (R) edge [] node[sloped, above] {$\vio$} (vio);
		\path (P) edge [] node[sloped, above] {$\val$} (val);
		\path (P) edge [] node[above] {} (T);
		\path (T) edge [] node[] {} (U);
		\path (T) edge [] node[] {} (V);
	\end{tikzpicture}}
	}
	\qquad
	\subfloat[%
	Sub-A2A structure in the safety copy]{
		\scalebox{0.7}{
		\begin {tikzpicture}[-latex ,auto ,node distance =1 cm,
		every state/.style={minimum size=7mm,inner sep=0pt}]
		
		\node[state] (Q) [] {$q''_i$};
		\node (R) [right=0.5cm of Q] {$\lor$};
		\node (P)[right=0.5cm of R]{$\land$};
		\node (vio) [below right= 2cm of R] {};
		\node[state] (T)[right= 2cm of P]{$q''_j$};
		\node (val)[below right= 2cm of P]{};

		\path (Q) edge [] node[] {} (R);
		\path (R) edge [] node[] {} (P);
		\path (R) edge [] node[sloped, above] {$\vio$} (vio);
		\path (P) edge [] node[sloped, above] {$\val$} (val);
		\path (P) edge [] node[above] {$\simu$} (T);
	\end{tikzpicture}}
	}
	\caption{General Sub-A2A structure simulating $(q_i,op,q_j)$}\label{fig:A2A_subA2A}
\end{figure}

Now we move forward to the detailed constructions for each operations. 
  \begin{figure}[ht]
    \centering
    \subfloat[%
    $\subata{\mathrm{inp}}$ checking if
    the input is a valid parameter
    word\label{fig:A2A_validinput}]{\scalebox{0.7}{
\begin {tikzpicture}[-latex ,auto ,node distance =1 cm,
every state/.style={minimum size=7mm,inner sep=0pt}]

\node (S) [] {$\land$};
\node[state] (S') [left= 1cm of S] {$s_{in}$};
\node[state] (Q) [below=2.5cm of S] {$q'_0$};
\node[state] (A) [right=3cm of S, label = above:$search(x_2)$] {};
\node[below=0.7cm of A] (Ds) {$\vdots$};
\node[state] (B) [right=3cm of A] {$\checkmark_{x_1}$};
\node[state] (C) [right=3cm of Q, label = below:$search(x_n)$] {};
\node[state] (D) [right=2cm of C] {$\checkmark_{x_n}$};

\path (S) edge[dotted] node[sloped,above=0.2cm]{$+1$}(Ds);
\path (S') edge [] node [above =0.2cm]{$\square$} (S);
\path (S) edge [] node {$0$} (Q);
\path (S) edge [] node {$+1$} node[above =0.5 cm] {} (A);
\path (S) edge [] node[sloped, below left =0.1 cm] {$+1$} (C);
\path (A) edge [] node[above =0.15 cm] {$x_1, +1$} (B);
\path (C) edge [] node[below =0.15 cm] {$x_n, +1$} (D);
\path (A) edge [loop below] node[right] {$X\setminus \{x_1\}, +1$} (A);
\path (C) edge [loop above] node[right] {$X\setminus \{x_n\}, +1$} (C);
\path (B) edge [loop below] node[right] {$X\setminus \{x_1\}, +1$} (B);
\path (D) edge [loop above] node[right] {$X\setminus \{x_n\}, +1$} (D);
\end{tikzpicture}
}}
    \qquad
    \subfloat[%
    $\subata{inc}$ encoding an increment
    \label{fig:A2A_increment}]{\scalebox{0.7}{
\begin {tikzpicture}[-latex ,auto ,node distance =1 cm,
every state/.style={minimum size=7mm,inner sep=0pt}]

\node[state] (A) [] {$s_i$};
\node[state] (B) [right=1cm of A,label=below:$\righ(s_j)$] {};
\node[state] (C) [above=2.5cm of B]   {$s_j$};

\path (A) edge [] node {$\square, +1$} (B);
\path (B) edge [] node[swap] {$\square, 0$} (C);
\path (B) edge [loop right] node[above =0.15cm] {$x, +1$} (B);
\end{tikzpicture}
}}
    \caption{Sub-A2As for the word-validity check and to simulate
    increments of the form $(q_i,+1,q_j)$; we use $search(x)$, $\checkmark_{x}$, and $\righ(q)$ as state names to make their function explicit}
\end{figure}
  
\subparagraph{Verifying the input word} 
The sub-A2A $\subata{inp}$ depicted in \Cref{fig:A2A_validinput}
checks whether the given input is a valid parameter word. The states of the form $\checkmark_{x_i}$ represents that $x_i$ has been found along the path.
We let $S_f$ consist of states $\checkmark_{x_i}$, one per $x_i \in X$. 

\begin{lemma}\label{lem_A2A_validinput}
     It holds that $L(\subata{inp})= W_X$.
\end{lemma}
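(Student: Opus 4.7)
The plan is to prove the two inclusions $W_X \subseteq L(\subata{inp})$ and $L(\subata{inp}) \subseteq W_X$ by direct construction and analysis of run trees. The key structural observation driving both directions is that $\subata{inp}$ has essentially one ``job'' per parameter: for every $x_i \in X$, the universal branch starting at $(search(x_i), 1)$ must certify that $x_i$ occurs exactly once in the suffix $a_1 a_2 \dots$, while the initial transition from $s_{in}$ certifies $a_0 = \square$.

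For the forward inclusion, I will take $w = a_0 a_1 a_2 \dots \in W_X$ and explicitly build an accepting run tree. The root $(s_{in}, 0)$ fires the transition on $\square$ (applicable since $a_0 = \square$), which conjunctively spawns the child $(q'_0, 0)$ and children $(search(x_i), 1)$ for every $x_i \in X$. The $q'_0$-branch is not expanded further within $\subata{inp}$ and constitutes a finite, vacuously accepting branch. For each $x_i$, let $k_i \geq 1$ be the unique position at which $a_{k_i} = x_i$; the corresponding branch uses the self-loop labelled $\Sigma \setminus \{x_i\}$ to move through positions $1,\dots,k_i-1$, then fires the transition to $\checkmark_{x_i}$ at position $k_i$, and finally loops at $\checkmark_{x_i}$ forever using the self-loop (applicable since $x_i$ never reappears by uniqueness). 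This infinite branch visits $\checkmark_{x_i} \in S_f$ at every step from position $k_i+1$ onwards, so the Büchi/coBüchi acceptance condition is met on every infinite branch.

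For the converse inclusion, I will start from an accepting run tree on $w$ and read off the conditions defining $W_X$. The root transition must read $a_0$, and since the only outgoing edge of $s_{in}$ is labelled $\square$, we get $a_0 = \square$. The mandatory conjunctive child $(search(x_i), 1)$ must be extendable into a valid sub-tree: if it stayed in $search(x_i)$ forever, the infinite branch would never visit $S_f$, violating acceptance, so the transition to $\checkmark_{x_i}$ must fire at some unique position $k_i \geq 1$, yielding $a_{k_i} = x_i$. Once at $\checkmark_{x_i}$, the only available transition is the self-loop on $\Sigma \setminus \{x_i\}$, so any further occurrence of $x_i$ would leave the run tree with no applicable transition at that point, contradicting the validity of the run tree. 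Hence $x_i$ occurs exactly once in $w$, and iterating the argument over all $x_i \in X$ shows $w \in W_X$.

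The main subtlety — rather than a genuine obstacle — is to justify the treatment of the $(q'_0, 0)$ branch, which is the connection point to the rest of the global automaton $\T$ but is not part of $\subata{inp}$ proper; handling it as a finite leaf branch (or equivalently assigning $q'_0$ a trivially accepting behaviour inside $\subata{inp}$) makes the argument go through cleanly without affecting either direction of the equivalence.
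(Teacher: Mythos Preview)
Your proposal is correct and follows essentially the same approach as the paper's proof: both observe that $\subata{inp}$ is, modulo the initial $\square$-transition and the hand-off to $q'_0$, a conjunction of one deterministic one-way sub-automaton per $x \in X$, each recognising exactly the words in which $x$ occurs precisely once. The paper leaves it at that one-line observation, whereas you spell out the two inclusions explicitly and also address the status of the $(q'_0,0)$ leaf; this extra detail is sound but not a different argument.
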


\begin{proof}
    The A2A $\subata{inp}$ consists of one deterministic one-way automata, per $x \in X$,
    whose language clearly corresponds to 
    the set of words where $x$ occurs exactly once.
    In $\subata{inp}$, from the initial state
    and on the first letter $\square$, a
    transition with a conjunction formula
    leads to all sub-automata for each $x$.
    The result follows.
\end{proof}

\subparagraph{Increments}
For every transition $\delta= (q_i,+1,q_j)$ we construct $\subata{inc}$
(see \Cref{fig:A2A_increment}). A run of this sub-A2A
starts from $s_i$ and some position $c$ on the input word. Recall that $c$
uniquely determines the current counter value in the simulated run of
$\Au$ (although, it should be noted $c$ itself is not the counter value). Then, the run of $\subata{inc}$ moves to the next occurrence of
$\square$ to the right of the current position and then goes to $s_j$ accordingly.

\subparagraph{Decrements}
For transitions $\delta = (q_i,-1, q_j)$ we construct $\subata{dec}$ 
(see \Cref{fig:A2A_decrement}). In contrast to the increment sub-A2A,
it also includes a $\vio$ branch in case the decrement would result
in a negative counter value: On this branch, $\subata{dec}$ attempts to
read $\first?$ to determine if the position of the head corresponds to the
first letter of the word.

\begin{figure}[ht]
    \centering
    \subfloat[%
    $\subata{dec}$ encoding an
    decrement\label{fig:A2A_decrement}]{\scalebox{0.7}{
\begin {tikzpicture}[-latex ,auto ,node distance =1 cm,
every state/.style={minimum size=7mm,inner sep=0pt}]

\node (A) [] {$\lor$};
\node[state] (Q) [left =1cm of A] {$s_i$};
\node[state] (B) [above right=1cm of A,label = above:$\lef(s_j)$] {};
\node[state] (C) [right=2cm of B]   {$s_j$};
\node[state] (D) [right=2cm of A,label = below:$\final$] {};
\node[state] (E) [right=2cm of D,label = below:$\true$] {};

\path (Q) edge [] node[above =0.2 cm] {$\square$} (A);
\path (A) edge [] node[pos=0.6,sloped,below =0.15 cm] {$-1$} (B);
\path (B) edge [] node[above =0.15 cm] {$\square,0$} (C);
\path (B) edge [loop left] node[left =0.15cm] {$x, -1$} (B);
\path (A) edge [] node[below =0.15cm] {$0$} (D);
\path (D) edge [] node[below =0.15cm] {$\first?$} (E);
\end{tikzpicture}
}}
    \qquad
    \subfloat[%
    $\subata{zero}$ encoding a zero
    test\label{fig:A2A_zerocheck}]{\scalebox{0.7}{
\begin {tikzpicture}[-latex ,auto ,node distance =1 cm,
every state/.style={minimum size=7mm,inner sep=0pt}]

\node (A) [] {$\lor$};
\node[state] (Q) [left =1cm of A] {$s_i$};
\node[state] (B) [above right=1cm of A] {$s_i^{=0}$};
\node[state] (C) [right=2cm of B]   {$s_j$};
\node[state] (D) [right=2cm of A] {$s_i^{\neq 0}$};
\node[state] (E) [right=2cm of D,label=above:$\true$] {};

\path (Q) edge [] node[above =0.15 cm] {$\square$} (A);
\path (A) edge [] node[sloped, above =0.15 cm] {$0$} (B);
\path (B) edge [] node[above =0.15 cm] {$\first?$} (C);
\path (A) edge [] node[sloped, below =0.15cm] {$-1$} (D);
\path (D) edge [] node[below =0.15cm] {$\Sigma$} (E);
\end{tikzpicture}
}}
    \caption{Sub-A2As to simulate decrements and zero tests}
\end{figure}
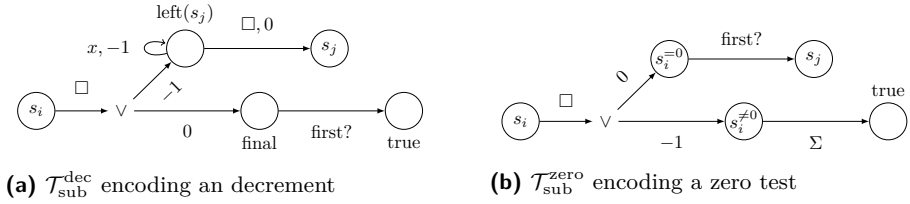

\begin{lemma} \label{lem_A2A_decrement}
    Let $k,l \in \N$ and $w \in W_X$ with $\square$ the $(i+1)$-th letter
    of $w$.
    A run tree $\gamma$ of $\subata{dec}$ on $w$ from $k$
    is accepting if and only if
    either $(s_i,k) \rightsquigarrow (s_j,l)$
    is a part of $\gamma$ and $|w(k)|_\square - 1 = |w(l)|_\square$,
    or $(s_i,0) \rightsquigarrow (\mathrm{final}, 0)$ is a part of $\gamma$
    and $k = 0$.
\end{lemma}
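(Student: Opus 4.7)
The plan is to exploit that $\subata{dec}$ has a very rigid shape: starting from $(s_i, k)$ with $a_k = \square$ (as required for the initial $\square$-transition at $s_i$ to fire at all), the only available move leads to a disjunction node with exactly two branches -- a $-1$ branch to $\lef(s_j)$ and a $0$ branch to $\final$. Any run tree must commit to one of them, so the iff splits cleanly into two cases matching the two alternatives in the statement.

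First I would analyse the simulation branch. Taking $-1$ moves the head to $(\lef(s_j), k-1)$, which requires $k \geq 1$ (otherwise the child position would be negative and no valid run tree exists). From $\lef(s_j)$ the only available transitions are the self-loop $x, -1$ for $x \in X$ and the exit $\square, 0$ to $s_j$. The head therefore slides left through parameter letters until it hits the next $\square$, i.e.\ the largest $l < k$ with $a_l = \square$; such $l$ exists because $a_0 = \square$. The identity $|w(l)|_\square = |w(k)|_\square - 1$ then falls out of the definition of $|w(\cdot)|_\square$: the prefix $a_1 \dots a_l$ is obtained from $a_1 \dots a_k$ by removing $a_k = \square$ together with a (possibly empty) tail of letters from $X$, so exactly one $\square$ disappears from the count.

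Next I would handle the violation branch. Taking $0$ keeps the head at position $k$ and lands at $(\final, k)$. The only transition out of $\final$ is labelled $\first?$, which by the A2A semantics fires only when the head is at position $0$; so this branch is realisable iff $k = 0$, in which case the subsequent step reaches $\true$ and the branch terminates (using $\emptyset \models \true$). Combining the two cases gives the forward direction of the iff. For the converse, each alternative explicitly constructs a valid branch; acceptance then comes for free because every such run tree is finite -- the self-loop at $\lef(s_j)$ runs at most $k$ times, and no other cycles exist in $\subata{dec}$, so there are no infinite branches whose B\"uchi-style condition must be checked.

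The main obstacle I foresee is not any single calculation but rather being precise about the correspondence between the informal $\rightsquigarrow$ notation and the actual tree structure of $\gamma$, together with the off-by-one in the definition of $|w(\cdot)|_\square$ (which ignores $a_0$). Everything else reduces to careful bookkeeping within the two branches of the root disjunction, and the mutual exclusivity of the two cases (for $k = 0$ the simulation branch is blocked by the negative-position constraint, for $k > 0$ the violation branch is blocked by $\first?$) ensures the characterisation is tight.
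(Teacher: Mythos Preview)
Your proposal is correct and follows essentially the same approach as the paper: split on the two branches of the root disjunction, argue that the simulation branch forces $|w(k)|_\square - 1 = |w(l)|_\square$ while the violation branch forces $k=0$ via $\first?$, and observe mutual exclusivity. Your write-up is considerably more explicit than the paper's terse proof (e.g., you spell out why the run tree is finite and handle the converse direction explicitly), but the underlying argument is the same.
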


 \begin{proof}
    Note that any accepting run $\gamma$ of the sub-A2A must include
    at least one of the two finite branches from the claim.
    We further argue that each branch enforces
    the corresponding constraints if they appear in $\gamma$.
    Since these are
    mutually exclusive, it follows that $\gamma$
    includes exactly one of the branches.
    
    If $\gamma$ includes $(s_i,k) \rightsquigarrow (s_j,l)$
    then $|w(k)|_\square - 1 = |w(l)|_\square$. The latter implies
    $k > l >= 0$ since otherwise the position of the head
    cannot be moved to the left. On
    the other hand, if $\gamma$ includes $(s_i,n) \rightsquigarrow
    (\mathrm{final},n)$ then $\gamma$ can only be accepting if $n = 0$.
    Hence, $\gamma$ includes $(s_i,0) \rightsquigarrow (\mathrm{final},0)$.
\end{proof}

\subparagraph{Zero tests}
For every transition $\delta = (q_i, =0, q_j)$ we construct $\subata{zero}$
(see \Cref{fig:A2A_zerocheck}) similarly to how we did for decrements.
For the $\val$ branch, it reads $\first?$ to confirm the position of
the head is at the beginning of the word. For the $\vio$ branch, it moves
the head to the left to confirm that the head is not at the beginning.

\begin{lemma} \label{lem_A2A_zerocheck}
    Let $k \in \N$ and $w \in W_X$ with $\square$ the $(k+1)$-th letter
    of $w$. A run tree $\gamma$ of $\subata{zero}$ on $w$ from $k$
    is accepting if and only if
    either $(s_i,0) \rightsquigarrow (s_j,0)$ is a part of $\gamma$ and $k=0$,
    or $(s_i,k) \rightsquigarrow (s_i^{\neq 0}, k-1)$ is a part of $\gamma$
    and $|w(k)|_\square > 0$.
\end{lemma}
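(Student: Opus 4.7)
The plan is to mirror the argument used for $\subata{dec}$ in~\Cref{lem_A2A_decrement}. Concretely, I would show that every accepting run tree $\gamma$ of $\subata{zero}$ from $(s_i,k)$ must contain exactly one of the two finite branches listed in the statement, and that each such branch enforces precisely the corresponding side-condition; conversely, whenever one of the two conditions holds, the corresponding branch can be extended into an accepting run tree. First I would observe that, since $a_k = \square$ by assumption, the only transition applicable at $s_i$ takes offset $0$ and leads to the $\lor$-node; hence $\gamma$ must select exactly one of the two disjuncts, producing a child labelled either $(s_i^{=0}, k)$ or $(s_i^{\neq 0}, k-1)$.

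For the $s_i^{=0}$ branch: the unique outgoing transition reads the sentinel symbol $\first?$ which, by the A2A semantics, is enabled precisely when the current position is $0$. Thus the branch can be completed (to $(s_j,0)$) if and only if $k = 0$, matching the first disjunct of the lemma. For the $s_i^{\neq 0}$ branch: the offset $-1$ forces the child's position to be $k-1$, which must lie in $\N$, and so requires $k \geq 1$; since $a_k = \square$ is counted in $|w(k)|_\square$ (the definition ignores only $a_0$), we obtain $|w(k)|_\square \geq 1$, matching the second disjunct. The branch then continues to $\true$, which is vacuously satisfied via $\emptyset \models \true$, so no further constraints arise.

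The two side-conditions $k = 0$ and $k \geq 1$ are mutually exclusive, which shows that $\gamma$ contains exactly one of the two branches; conversely, whenever one of the stated conditions holds the corresponding branch can be completed into an accepting run. The main point I would check carefully — the only subtlety beyond the pattern of~\Cref{lem_A2A_decrement} — is the boundary behaviour at position $0$: one must verify both that $\first?$ fires \emph{only} when the head is at the first letter, and that the offset $-1$ is \emph{only} applicable when $k \geq 1$. Both are immediate consequences of the A2A semantics recalled in~\Cref{sec:ocapt}.
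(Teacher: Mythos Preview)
Your proposal is correct and follows essentially the same approach as the paper: both proofs explicitly reduce to the pattern of \Cref{lem_A2A_decrement}, perform a case split on the two disjuncts at $s_i$, and argue that the $\first?$ transition forces $k=0$ in the $s_i^{=0}$ branch while the $-1$ offset forces $k\geq 1$ (hence $|w(k)|_\square>0$) in the $s_i^{\neq 0}$ branch. Your write-up is in fact slightly more explicit than the paper's about the converse direction and about why $a_k=\square$ contributes to $|w(k)|_\square$; the only cosmetic point is that the ``$\lor$-node'' in the figure is not a separate state but the Boolean formula of the transition out of $s_i$, so the children $(s_i^{=0},k)$ and $(s_i^{\neq 0},k-1)$ are produced directly rather than via an intermediate offset-$0$ step.
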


\begin{proof}
    We proceed as in the proof of \Cref{lem_A2A_decrement}.
    
    If $\gamma$ includes a branch with the state
    $s_i^{=0}$ then $\gamma$ is accepting if and only if it reaches $s_j$.
    It can only reach $s_j$ with the $\first?$ transition, i.e. when $k=0$. 
    Otherwise, it has to include a branch with $s_i^{\neq 0}$ and reading
    any letter it reaches $\true$. This is only possible if $k > 0$.
    Since the $(k+1)$-th letter of $w$ is $\square$, the
    latter means $|w(k)|_\square > 0$.
\end{proof}

\subparagraph{Parametric equality tests}
For every transition $\delta = (q_i, =k, q_j)$ we construct
$\subata{eq}$
(see \Cref{fig:A2A_equality}). For the $\val$ branch, it moves the 
head right, skipping over other variable symbols $X \setminus\{x\}$, 
while looking for $k$. For the $\vio$ branch it skips over other
variable symbols while looking for the next $\square$.

\begin{figure}[ht]
    \centering
    \subfloat[%
    $\subata{eq}$ for parametric equality
    tests\label{fig:A2A_equality}]{\scalebox{0.7}{
\begin {tikzpicture}[-latex ,auto ,node distance =1 cm,
every state/.style={minimum size=7mm,inner sep=0pt}]

\node[state] (Q) {$s_i$};
\node (lor) [below=0.8cm of Q] {$\lor$};
\node (land) [right=1.5cm of lor] {$\land$};
\node[state] (Q') [above=0.8cm of land] {$s_j$};
\node[state] (A) [right=1.5cm of land, label=below left:$present(x)$] {};
\node[state] (B) [below=1.5cm of A, label=below:$\true$] {};
\node[state] (C) [left=1.5cm of B, label=below:$absent(x)$] {};

\path (Q) edge [] node[swap] {$\square$} (lor);
\path (lor) edge  (land);
\path (land) edge [] node {$0$} (Q');
\path (land) edge [] node[above =0.15 cm] {$+1$} (A);
\path (lor) edge [] node[sloped] {$+1$} (C);
\path (A) edge [] node {$x$}  (B);
\path (C) edge [] node {$\square$}   (B);
\path (A) edge [loop right] node {$X\setminus \{x\}, +1$} (A);
\path (C) edge [loop left] node[] {$X\setminus \{x\}, +1$} (C);
\end{tikzpicture}
}}
    \qquad
    \subfloat[%
    $\subata{lb}$ for parametric lower-bound
    tests\label{fig:A2A_lowerbound}]{\scalebox{0.7}{
\begin {tikzpicture}[-latex ,auto ,node distance =1 cm,
every state/.style={minimum size=7mm,inner sep=0pt}]

\node[state] (Q) {$s_i$};
\node (lor) [below=1cm of Q] {$\lor$};
\node (landr) [right= 1cm of lor] {$\land$};
\node[state](less)[below right=1.5cm of lor, label=left:$front(x)$] {};
\node (lorr) [right= 1cm of landr] {$\lor$};
\node (equal)[right= 1cm of lorr,align=left]{$\land$-state\\ in $\subata{eq}$};
\node(search)[right=1.5cm of less,align=left]{$search(x)$\\ in $\subata{inp}$};
\node[state] (Q') [above=1cm of landr] {$s_j$};
\node[state] (A) [above right=1.2cm of lorr,label=left:$back(x)$] {};
\node[state] (B) [right=1.2cm of A,label=above:$search^{-}(x)$] {};
\node[state] (C) [below=2cm of B,label=below:$\true$] {};

\path (Q) edge [] node[swap] {$\square$} (lor);
\path (lor) edge [] (landr);
\path (landr) edge (lorr);	
\path (landr) edge node {$0$} (Q');
\path (lorr) edge node[swap,sloped] {$-1$}  (A);
\path (lorr) edge node[swap] {$0$} (equal);
\path (A) edge [] node {$\square, -1$}   (B);
\path (B) edge [] node {$x$}  (C);
\path (B) edge [loop right] node {$\Sigma\setminus \{x\}, -1$}  (B);
\path (lor) edge [] node[sloped,below =0.15 cm] {$+1$} (less);
\path (less) edge[] node {$\square,+1$} (search);
\path (less) edge [loop above] node {$X\setminus \{x\}, +1$} (less);
\path (A) edge [loop above] node {$X\setminus \{x\}, -1$} (A);
\end{tikzpicture}
}}
    \caption{Sub-A2As to simulate parametric tests}
\end{figure}
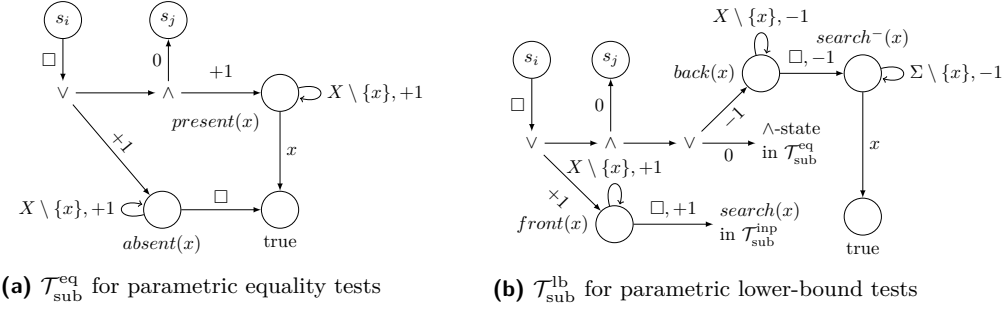

\begin{lemma} \label{lem_A2A_equality}
    Let $k \in \N$ and $w \in W_X$ with $\square$ the $(k+1)$-th
    letter of $w$. A run tree $\gamma$ of $\subata{eq}$ on $w$
    from $k$ is accepting if and only if
    either $(s_i,k) \rightsquigarrow (s_j,k)$ is part of $\gamma$ and
    $V_w(x) = |w(k)|_\square$,
    or $(s_i,k) \rightsquigarrow (absent(x), k+1)$ is a part of
    $\gamma$ and $V_w(x) \neq |w(k)|_\square$.
\end{lemma}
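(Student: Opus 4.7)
The proof follows the template of \Cref{lem_A2A_decrement} and \Cref{lem_A2A_zerocheck}: I would first argue that any accepting run tree $\gamma$ of $\subata{eq}$ from $(s_i,k)$ must include at least one of the two branches described in the statement, and then verify that each of them extends to an accepting subtree precisely under the corresponding valuation condition. Since the $(k+1)$-th letter of $w$ is $\square$, the transition $s_i \xrightarrow{\square} \lor$ is applicable at position $k$ and is in fact the only transition available from $(s_i,k)$. The $\lor$-node then forces $\gamma$ to realize at least one of its two disjuncts: either the $\val$ branch leading to the $\land$-node (and thence, by the conjunction, to both $(s_j, k)$ and $(present(x), k+1)$), or the $\vio$ branch leading directly to $(absent(x), k+1)$.

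If the $\val$ branch is realized, the conjunction forces both $(s_j,k)$ and $(present(x), k+1)$ into $\gamma$, so the structural part of condition (i) holds. It then remains to show that the $(present(x), k+1)$ subtree is accepting if and only if $V_w(x) = |w(k)|_\square$. The key observation is that $present(x)$ has a self-loop on $X \setminus \{x\}$ (moving right) and a transition to $\true$ on reading $x$, but \emph{no} applicable transition on $\square$. Hence the subtree can be completed (reach $\true$) if and only if, starting at position $k+1$, every letter scanned before encountering $x$ lies in $X \setminus \{x\}$; equivalently, $x$ occurs at some position $i > k$ before the next $\square$. By the definition of $V_w$, this is exactly $V_w(x) = |w(k)|_\square$.

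The symmetric analysis applies to the $\vio$ branch: $absent(x)$ has a self-loop on $X \setminus \{x\}$ and a transition to $\true$ on $\square$, with no applicable transition on $x$. Thus the $(absent(x), k+1)$ subtree completes if and only if $\square$ is encountered before $x$ when scanning right from position $k+1$, which is equivalent to $V_w(x) \neq |w(k)|_\square$. Combining both cases, any accepting $\gamma$ includes exactly one of the two branches, since having both would require both contradictory valuation conditions to hold; this proves the forward direction. The converse is by explicit construction: given the matching valuation condition, choose the corresponding disjunct at $\lor$ and append the $\true$-terminated path through $present(x)$ or $absent(x)$ as dictated by the positions of $x$ and the next $\square$. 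The main technical point to articulate carefully is that the absence of a transition from $present(x)$ on $\square$ (respectively from $absent(x)$ on $x$) precludes extending the run tree at such a vertex, so that the sub-A2A's branching structure faithfully encodes the test $V_w(x) = |w(k)|_\square$.
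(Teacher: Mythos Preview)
Your proposal is correct and follows essentially the same approach as the paper's proof: both analyze the two disjuncts at the $\lor$-node and argue that the $present(x)$ branch completes iff $x$ is read before the next $\square$ (giving $V_w(x) = |w(k)|_\square$), while the $absent(x)$ branch completes iff $\square$ is read before $x$ (giving $V_w(x) \neq |w(k)|_\square$). Your write-up is in fact more thorough than the paper's---you explicitly handle the converse direction and the mutual exclusivity of the two cases, and you make precise the key mechanism (no applicable transition from $present(x)$ on $\square$, respectively from $absent(x)$ on $x$), whereas the paper's proof leaves these points implicit.
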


\begin{proof}
    Fix a word $w \in W_X$ with $\square$ as $(k+1)$-th letter. Consider any run tree $\gamma$ of $\subata{eq}$ on $w$.
    After reading the first $\square$, suppose $\gamma$ has a branch
    leading to the state $s_j$. It must therefore also have a branch containing
    $present(x)$. Since, from there, it can only move to the state $\true$ if it reads $x$
    before reading another $\square$ symbol to the right, we have $V(x)=|w(k)|_\square$.

    If $\gamma$ has a branch containing $absent(x_i)$,
    then it is accepting if and only if it reaches
    $\true$ after reading another $\square$
    before ever reading $x$. Hence, $V(x_i) \neq |w(k)|_\square$.
\end{proof}

\subparagraph{Parametric lower-bound tests}
For every transition $\delta = (q_i, \geq x, q_j)$ we
construct $\subata{lb}$ (see \Cref{fig:A2A_lowerbound}).
For the $\val$ branch, we check for equality to $x$
or we check whether $>x$.
We also create the corresponding $\vio$ branches.

\begin{lemma} \label{lem_A2A_lowerbound}
    Let $k \in \N$ and $w \in W_X$ with $\square$ the
    $(k+1)$-th letter of $w$. A run tree $\gamma$ of
    $\subata{lb}$ on $w$ from $k$ is accepting if
    and only if
    either $(s_i,k) \rightsquigarrow (s_j, k)$ is part of $\gamma$ and $|w(k)|_\square \geq V_w(x)$,
    or $(s_i,k) \rightsquigarrow (front(x), k+1)$ is a part of $\gamma$ and $|w(k)|_\square < V_w(x)$.
\end{lemma}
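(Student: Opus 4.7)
The plan is to follow the same template used for the previous sub-A2A lemmas (in particular, the proofs of \Cref{lem_A2A_equality} and \Cref{lem_A2A_decrement}): analyse the structure of $\subata{lb}$, identify the two deterministic branching choices that any accepting run tree $\gamma$ must make after reading the forced first letter $a_k = \square$, and show each choice forces exactly one of the two cases in the statement of the lemma. The key observation is that $a_k = \square$ is imposed by hypothesis, so from $(s_i,k)$ the only transition is into the $\lor$-node, whose two disjuncts are the $\val$ branch (the $\land$-node, reached on $0$) and the $\vio$ branch (reached on $+1$, leading to $front(x)$).

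First I would handle the $\val$ side. The $\land$ forces $\gamma$ to simultaneously spawn the simulation branch $(s_j,k)$ and a verification sub-tree. The verification sub-tree is itself a disjunction: on $0$ it enters the $\land$-node of $\subata{eq}$, in which case \Cref{lem_A2A_equality} applied at position $k$ yields $V_w(x) = |w(k)|_\square$; on $-1$ it reaches $back(x)$ at position $k-1$, loops through $X\setminus\{x\}$ until a $\square$ is consumed, and then uses $search^{-}(x)$ to detect $x$ strictly to the left of that earlier $\square$. In the latter case, letting $i$ be the position of $x$, the existence of a $\square$ strictly between $i$ and $k$ gives $V_w(x) = |w(i)|_\square \leq |w(k)|_\square - 1 < |w(k)|_\square$. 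Either way the check amounts to $|w(k)|_\square \geq V_w(x)$, matching the first alternative of the lemma.

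Next I would handle the $\vio$ side. From $front(x)$ at position $k+1$, the self-loop on $X \setminus \{x\},+1$ allows $\gamma$ to proceed right only as long as no $x$ is read before the next $\square$. The transition on $\square,+1$ then enters the $search(x)$ state of $\subata{inp}$, and \Cref{lem_A2A_validinput} (together with the loop/transition structure around $\checkmark_x$) guarantees that this sub-run is accepting exactly when $x$ occurs strictly after the next $\square$ to the right of position $k$. If $i$ denotes that position, then $V_w(x) = |w(i)|_\square \geq |w(k)|_\square + 1 > |w(k)|_\square$, which gives the second alternative.

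The final step is a mutual-exclusion argument: because any accepting run tree must choose one of the two disjuncts below the initial $\lor$, and because the two resulting numerical conditions on $|w(k)|_\square$ versus $V_w(x)$ are complementary, the equivalence in the lemma follows. The only delicate point I anticipate is the $back(x)$/$search^{-}(x)$ sub-branch: one must be careful that the single enforced $\square$-transition on $-1$ really does certify a strict inequality (and does not accidentally allow $V_w(x) = |w(k)|_\square$), and that the case where no earlier $\square$ exists correctly causes this sub-branch to fail. Verifying this by explicit position bookkeeping, analogous to the proof of \Cref{lem_A2A_decrement}, should close the argument.
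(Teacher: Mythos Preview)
Your proposal is correct and follows essentially the same approach as the paper's proof: analyse the two branches of the initial disjunction, split the validation branch into the equality sub-case (via the $\land$-node of $\subata{eq}$) and the strict-inequality sub-case (via $back(x)$ and $search^{-}(x)$), handle the violation branch via $front(x)$ and $search(x)$, and conclude by complementarity of the two numerical conditions. Your write-up is in fact more explicit about position bookkeeping than the paper's; the only minor caveat is that \Cref{lem_A2A_equality} is stated for runs of $\subata{eq}$ starting at $s_i$, not at its internal $\land$-node, so you should either restate the relevant part of its proof inline or phrase the appeal accordingly.
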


\begin{proof}[Proof of \Cref{lem_A2A_lowerbound}]
    Fix a word $w \in W_X$ with $\square$ as $(k+1)$-th letter and consider any run tree $\gamma$ of $\subata{lb}$ on $w$. After reading the first $\square$, let us
    suppose it adds a branch checking $= x$
    in $\subata{eq}$. Then, $\gamma$ is accepting if and only if it additionally contains a branch to $(s_j, k)$ and $|w(k)|_\square=V_w(x)$. If it has the other sub-tree, i.e. it contains $back(x)$, $\gamma$ is accepting if and only if it reaches the state $\true$ which is possible only if there is a $\square$ to the left of the current position and it reads an $x$ to the left of that. It follows that it is accepting
    if and only if $|w(k)|_\square > V_w(x)$ and $(s_i,k) \rightsquigarrow (s_j, k)$ is part of $\gamma$.

    If $\gamma$ instead contains the branch with $front(x_i)$, it is accepting only if it can read $x$ from $search(x)$
    after having read a $\square$ from $front(x)$ to the right
    of the current position of the input. Hence,
    $|w(k)|_\square < V_w(x_i)$.
\end{proof}

Using the previous lemmas, it is straightforward to prove \Cref{OCA to A2A}. 
The detailed proof of correctness is given below.
\begin{proof}[Proof of~\Cref{OCA to A2A}]
	
	Here we present a detailed proof of \Cref{OCA to A2A}. We have to show that, $L(\T) = \{w \in W_X \mid$ all infinite $V_w$-runs of $\Au$ visit $F$ finitely many times from $(q_0,0)\}$. We prove this in two parts:

    $\supseteq$: Consider a word $w = a_0 a_1 a_2 \dots \in W_X$, such that with valuation
    $V_w$ all infinite $V_w$-runs of $\Au$ visit $F$ only finitely many times starting from $(q_0,0)$. We have to show, that $w$ is accepted
    by
    $\T$, i.e., there exists an accepting run tree $\gamma$ of $w$ on $\T$. 
     We will now grow an accepting run tree $\gamma_{\mathrm{valid}}$.
    Since $w$
    is a valid parameter word, we can add to
    $\gamma_{\mathrm{valid}}$ a sub-tree with root labelled by $(s_{in},0)$
    and a branch extending to $(q'_0,0)$ (see Lemma \ref{lem_A2A_validinput}).
    
    Consider now a valid infinite run $\rho$ of $\Au$ that visits $F$ only finitely many times. Hence, $\rho$ can be divided into $\rho= \rho_f \cdot \rho_{inf}$ such that $\rho_f$ is a finite prefix and $\rho_{inf}$ is the infinite suffix that never visits $F$. Let $\pi$ be the path of the form $(q_0, op_1, q_1)(q_1, op_2, q_2 ) \dots$ induced by $\rho$. We extend the division of $\pi$ into $\pi= \pi_1 \cdot (q_{j-1},op_j,q_{j}) \cdot \pi_2$ such that, $\pi_1 \cdot (q_{j-1},op_j,q_{j})$ is induced by $\rho_f$ and $\pi_2$ is induced by $\rho_{inf}$. The idea is that, the run $\rho$ jumps to a ``safety component'' from the state $q_j$ after which it does not visit $F$ at all as $\rho$ satisfies the coB\"uchi condition.
    
    Now, we further extend $\gamma_{\mathrm{valid}}$ by appending to it, from the $(q'_0,0)$-labelled vertex, a sub-tree $\gamma_{\pi_1}$ simulating the prefix $\pi_1$ as follows: for every transition of the
    form $(q_i, op_{i+1}, q_{i+1})$ where $op_i$ is an increment or decrement, the corresponding $\subata{inc}$ and $\subata{dec}$ simulate the path from $q'_i$ to
    $q'_{i+1}$ correctly. Also, as every transition in $\pi$ is valid in $\pi_1$ (i.e. does not result in negative counter values), using the first part of Lemmas \ref{lem_A2A_zerocheck}, \ref{lem_A2A_equality}, and
    \ref{lem_A2A_lowerbound}, we can take the $\val$ sub-trees of $\subata{zero}$, $\subata{eq}$, and $\subata{lb}$, and append them to our run tree. For every $\simu$ branch, we stay at the normal copy and we move from $q'_i$ to
    $q'_{i+1}$. Now, for the transition $(q_{j-1},op_j,q_{j})$, we do the same for the $\vio$ and $\val$ branches but in the $\simu$ branch, we move to the \emph{safety copy} and move to $q''_j$. Intuitively, this safety copy simulates the safety component of $\rho$ as mentioned above. Now, with this we append another sub-tree $\gamma_{\pi_2}$, which we create exactly in the similar way as $\gamma_{\pi_1}$ but the $\simu$ branch stays in the safety copy, i.e., it moves from states of the form $q''_i$ to $q''_{i+1}$. It is easy to see that, $\gamma_{\pi_2}$ simulates the suffix $\rho_{inf}$ correctly.
    Note that, since $\pi_2$ does not visit $F$ at all, the $\simu$ branch never reaches the non-accepting sink states in the safety copy and it infinitely loops within the accepting states in the safety copy, making it accepting.

	As $\rho$ was chosen arbitrarily, we have that $\gamma_{\rho}$, for all infinite runs $\rho$, are accepting. To conclude, we need to deal with run trees arising from maximal finite runs--the runs that cannot be continued with any valid operation and hence, finite: We construct a sub-tree $\gamma_{\mathrm{maxf}}$ appending $\simu$ and $\val$ sub-trees for as long as possible. By definition of maximal finite runs, every such run reaches a point where all possible transitions are disabled. There, we append a $\vio$ sub-tree which, using the second part of the mentioned lemmas, is accepting. Hence, $\gamma_{\mathrm{valid}}$ is accepting.
    
    $\subseteq:$ Consider a word $w \in L(\T)$. We have to show that with valuation $V_w$, every infinite run of $\Au$ visits $F$ only finitely often from $(q_0,0)$. We will prove the contrapositive of this statement: Let there exists a valuation $V$ such that there is an infinite run of $\Au$ that visits $F$ infinitely often from $(q_0,0)$, then for all words $w$ with $V_w=V$, $w \not\in L(\T)$.
    
    Let $\rho$ be such an infinite run with valuation $V_w$. Now, $\rho$ induces the path $\pi$ which has the following form $(q_0, op_1, q_1) \dots$, where for every $i$ there exists a $j$ such that $q_j \in F$. Recall that for every $op_i$, a run of $\subata{op_i}$ has one $\simu$ branch, one or more $\val$ branches or a $\vio$ branch. Now, as $\rho$ is a valid infinite run of $\Au$, every $op_i$ can be taken, i.e, the counter value never becomes negative along the run. Hence, any $\vio$ branch in any $\subata{op_i}$ will be non-accepting already using the corresponding lemmas of the different operations. 
    Hence, for every $op_i$ appearing in $\pi$, let us consider the $\simu$ and $\val$ branches. Consider the global $\simu$ branch $b$ in $\T$: $(s_{in},0) \rightsquigarrow s_0  \rightsquigarrow s_1 \dots$, where each $s_i$ in $\T$ represents $q_i$ in $\Au$ and is in the form $q'_i$ or $q''_i$ depending on whether it has jumped to the safety copy or not. If every $s_i$ is of the form $q'_i$, then the infinite branch $b$ has never moved to the safety copy and has not visited the accepting states at all. Hence, it is already non-accepting. 
    
    Now, for some $l$, let $s_l$ be of the form $q''_l$ representing $q_l$ in $\Au$, i.e., it has moved to the safety copy in $\T$. Note that, if a branch in $\T$ moves to a safety copy, it can never escape that is for all $m \geq l$, $s_m$ is of the form $q''_m$. Notice that from our assumption, there exists $n \geq l$, such that $q_n \in F$. Hence, $s_n$, representing $q_n$ in the safety copy of $\T$, is a non-accepting sink establishing the fact that the branch $b$ reaches a non-accepting sink making it non-accepting.

     Note that, $b$ is a valid infinite branch in a run in A2A with no final states visited. Branch $b$ will be present in every run of $w$ in $\T$, resulting no accepting run for $w$.
 \end{proof}

\section{Missing Proofs from Section \ref{sec_ub_reachsynth}}

\begin{proof}[Proof of~\Cref{bounded-parameter}]
    Using \Cref{OCA to A2A} for OCAPT $\Au$, there is an A2A $\T$ of polynomial size (w.r.t. $\Au$)
    such that, $L(\T)$ is precisely the subset of $W_X$
    such that all infinite $V_w$-runs of $\Au$ reach $F$.
    We then use that
    there is a non-deterministic B\"uchi
    automaton $\B$ such that $L(\B) = L(\T)$ and $|\B| \in
    2^{\mathcal{O}(|\T|^2)}$~\cite{vardi98,dk08}.

    Suppose $\Au$ is a positive instance of the reachability synthesis
    problem, i.e. $L(\B) \not = \emptyset$. We know that the language of a
    B\"uchi automata is non-empty only if there is a ``lasso'' word which
    witnesses this. For all parameter words $w$ accepted by a lasso there is a
    word $u \in \Sigma^{*}$ s.t. $|u| \leq |\B|$ and $w = u
    {\square}^{\omega} \in L(\B)$. The result follows from our encoding
    of valuations. 
\end{proof}

\begin{proof}[Proof of \Cref{property_of_inf_runs}]
    Let us call an infinite run of $\Au$ a \emph{safe run} if it does not reach $F$. Fix a safe run $\rho$. Let $\pi = (q_0, op_1, q_1) (q_1, op_2,q_2) \dots$ be the path it induces. We denote
    by $\pi[i,j]$ the infix $(q_i, op_{i+1},q_{i+1}) \dots (q_{j-1}, op_{j}, q_{j})$ of $\pi$
    and by $\pi[i,\cdot]$ its infinite suffix $(q_i, op_{i+1},q_{i+1}) \dots$
    Suppose there are $0 \leq m<n \in \N$ such that $\pi[m,n]$ is a cycle
    that starts from a zero test. Note that if a cycle that starts from a zero test can be traversed
    once, it can be traversed infinitely many times. Then, the run lifted from the path $\pi[0,m]
    \cdot {\pi[m,n]}^{\omega}$ is our desired safe run. Now, let us assume that $\pi$ has no cycles which
    start at a zero test. This means every zero test occurs at most once in $\pi$. Since the
    number of zero tests in $\Au$ is finite, we have a finite $k \in \N$ such that there
    are no zero tests at all in $\pi[k,\cdot]$.

    Now, consider $\pi[k, \cdot]$. Suppose it does not
    witness any non-negative effect cycle, i.e., every
    cycle in $\pi[k, \cdot]$ is negative. But, we know
    $\pi$ lifts to a valid infinite run which means the
    counter value cannot go below zero. This contradicts
    our assumption; Hence, there are $k \leq p < q$
    such that $\pi[p,q]$ is a cycle with non-negative
    effect.
    It is easy to see that there must be $r,s$ such that
    $p \leq r < s \leq q$ and $\pi[r,s]$ is a simple
    non-negative effect cycle. Also note that, $r \geq k$
    which means that $\pi[r,s]$ does not have any zero
    tests. Hence, $\pi[r,s]$ is a simple pumpable cycle.
    Note that if a pumpable cycle can be
    traversed once then it can be traversed infinitely
    many times. Using this fact, the run lifted from
    $\pi[0,r] \cdot {\pi[r,s]}^{\omega}$ is our
    desired safe run.
\end{proof}

\begin{figure}[t]
    	\centering
    	\scalebox{0.8}{
\begin {tikzpicture}[-latex ,auto ,node distance =1 cm,initial text={},inner sep=0em, minimum size=1mm,
every state/.style={minimum size=7mm,inner sep=0pt}]

\node[state,initial] (start) [] {$q_0$};
\node[state] (a) [right=1.5cm of start] {$q_1$};
\node[state] (b)[above right= 0.75cm of start]{$q'_1$};
\node[state] (c) [right=1cm of a] {$q_{n-1}$};
\node[state] (d)[above right= 0.75cm of c]{$q'_n$};
\node[state] (e) [right=1.5cm of c] {$q_n$};
\node[state] (f) [right=1cm of e] {$q_t$};
\node[state] (g) [above right=0.75cm of f] {$q_s$};
\node[state, accepting] (h) [right=1cm of f] {$q_f$};

\path (start) edge [] node[above=0.2cm] {$+0$} (a);
\path (start) edge [] node[sloped,above=0.2cm] {$+a_1$} (b);
\path (b) edge [] node[sloped,above=0.2cm] {$+0$} (a);
\path (c) edge [] node[above=0.2cm] {$+0$} (e);
\path (c) edge [] node[sloped,above=0.2cm] {$+a_n$} (d);
\path (d) edge [] node[sloped,above=0.2cm] {$+0$} (e);
\path (a) edge [dotted] node[] {} (c);
\path (e) edge [] node[above=0.2cm] {$-t$} (f);
\path (f) edge [] node[sloped, above=0.2cm] {$=0$} (g);
\path (f) edge [] node[above=0.2cm] {$+1$} (h);
\path (g) edge [loop right] node[right=0.2cm] {$+1$} (g);
\path (h) edge [loop right] node[right=0.2cm] {$+1$} (h);

\end{tikzpicture}
}
  \caption{Reduction from non-\textsc{SubsetSum} to (universal) reachability for SOCA} \label{fig:subsetsum_soca}
    \end{figure}
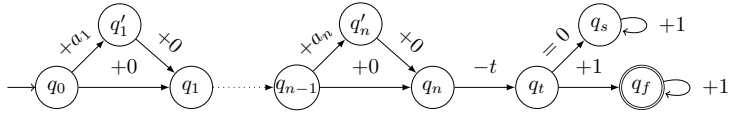

\begin{proof}[Proof of second part of ~\Cref{prop_synthreach_nopara}]
Here we give the full reduction from the complement of the \textsc{SubsetSum} problem to the problem of checking if all infinite runs reach a target state in a non-parametric SOCAP. 

Given a set $S=\{a_1,a_2, \dots a_n\} \subseteq \N$ and a target sum $t \in \N$, the \textsc{Subset Sum} problem asks whether
	there exists $S' \subseteq S$ such that $\sum_{a_i \in S'} a_i =t$. Given an instance
	of the \textsc{Subset Sum} problem with $S$ and $t$, we create a  SOCA $\Au$ with initial configuration $(q_0,0)$ and a single target state $q_f$ as depicted in	\Cref{fig:subsetsum_soca}. Note that, for every $1 \leq i \leq n$ there are two ways of reaching $q_i$ from $q_{i-1}$: directly, with constant update $+0$; or via $q'_i$ with total effect $+a_i$. Hence, for every subset $S' \subseteq S$, there exists a path from $q_0$ to $q_{n}$ with counter value $\sum_{a_i \in S'} a_i$. Clearly, if there exists $S'$ such that $\sum_{a_i \in S'} a_i = t$ \textsc{Subset Sum} then there exists an infinite run leading to $q_s$--not reaching the target state. On the other hand, if there is no such $S'$ then all infinite runs reach $q_f$. Hence, the universal reachability in $\Au$ is positive if and only if the answer to the \textsc{SubsetSum} problem is negative.
\end{proof}

\end{document}